\numberwithin{equation}{section}
\newtheorem{theorem}{Theorem}[section]
\newtheorem{proposition}{Proposition}[section]
\newtheorem{lemma}{Lemma}[section]
\newtheorem{corollary}{Corollary}[section]
\theoremstyle{remark}
\newtheorem{rem}{Remark}[section]}
\newcommand{\Id}{\mathrm{Id}}
\newcommand{\tr}{\operatorname{tr}}
\newcommand{\End}{\operatorname{End}}
\newcommand{\bra}[1]{\langle\,#1\,|}
\newcommand{\ket}[1]{|\,#1\,\rangle}
\newcommand{\moy}[1]{\langle\,#1\,\rangle}
\def\la{\lambda}
\newcommand{\mathsc}[1]{{\normalfont\textsc{#1}}}
\begin{document}

\begin{flushright}
LPENSL-TH-08/15
\end{flushright}

\bigskip \vspace{24pt}

\begin{center}
\begin{Large}
\textbf{The 8-vertex model with quasi-periodic boundary conditions}
\end{Large}

\vspace{45pt}

\begin{large}
{\bf G. Niccoli}\footnote{ENS Lyon; CNRS; Laboratoire de Physique, UMR 5672, Lyon France; giuliano.niccoli@ens-lyon.fr}
{\bf and V.~Terras}\footnote{Univ. Paris Sud; CNRS; LPTMS, UMR 8626, Orsay 91405 France; veronique.terras@lptms.u-psud.fr}
\end{large}

\vspace{45pt}

\today

\end{center}

\vspace{45pt}

\begin{abstract}
We study the inhomogeneous 8-vertex model (or equivalently the XYZ Heisenberg spin-1/2 chain) with all kinds of integrable quasi-periodic boundary conditions: periodic, $\sigma^x$-twisted, $\sigma^y$-twisted or $\sigma^z$-twisted.
We show that in all these cases but the periodic one with an even number of sites $\mathsf{N}$, the transfer matrix of the model is related, by the vertex-IRF transformation, to the transfer matrix of the dynamical $6$-vertex model with antiperiodic boundary conditions, which we have recently solved by means of Sklyanin's Separation of Variables (SOV) approach.
We show moreover that, in all the twisted cases, the vertex-IRF transformation is bijective.
This allows us to completely characterize, from our previous results on the antiperiodic dynamical 6-vertex model, the twisted 8-vertex transfer matrix spectrum (proving that it is simple) and eigenstates.
We also consider the periodic case for $\mathsf{N}$ odd. In this case we can define two independent vertex-IRF transformations, both not bijective, and by using them we show that the 8-vertex transfer matrix spectrum is doubly degenerate, and that it can, as well as  the corresponding eigenstates, also be completely characterized in terms of the spectrum and eigenstates of the dynamical 6-vertex antiperiodic transfer matrix.
In all these cases we can adapt to the 8-vertex case the reformulations of the dynamical 6-vertex transfer matrix spectrum and eigenstates that had been obtained by $T$-$Q$ functional equations, where the $Q$-functions are elliptic polynomials with twist-dependent quasi-periods.
Such reformulations enables one to characterize the 8-vertex  transfer matrix spectrum by the solutions of some Bethe-type equations, and to rewrite the corresponding eigenstates as the multiple action of some operators on a pseudo-vacuum state, in a similar way as in the algebraic Bethe ansatz framework.

\end{abstract}

\newpage


\section{Introduction}

The 8-vertex model is a two-dimensional model of statistical physics which generalizes the exactly solvable 6-vertex model, with two additional allowed configurations around a vertex \cite{Sut70,FanW70,Bax71a,Bax82L}. 
It is related to the completely anisotropic XYZ Heisenberg spin chain \cite{Bax71b},  a natural generalization of the XXZ spin chain. 
It is still exactly solvable in the sense that its associated $R$-matrix (i.e. the matrix of its local Boltzmann weights) satisfies the Yang-Baxter equation, and that the transfer matrices of the model form a one-parameter commuting family of operators. However, the charge through a vertex being not conserved, standard techniques such as Bethe ansatz cannot directly be applied.
In a nowadays famous paper \cite{Bax72}, Baxter managed to characterize the eigenvalues of the (periodic) transfer matrix by introducing a new object, the so-called $Q$-operator, whose eigenvalues satisfy, together with  the transfer matrix eigenvalues, a functional relation (the functional $T$-$Q$ relation). This new method nevertheless did not enabled him to obtain the eigenvectors. The latter  were constructed one year later \cite{Bax73a} by explicitly mapping the 8-vertex model onto a model of IRF (interaction-round-faces) type, the so-called ``8-vertex'' solid-on-solid (SOS) model (sometimes also called dynamical 6-vertex model or ABF model), which is solvable by Bethe ansatz. The relation between these two models is provided by the {\em vertex-IRF transformation}, which corresponds to a generalized gauge transformation between the $R$-matrices, and which can be extended to  the corresponding monodromy matrices and hence to the (periodic) transfer matrices and their eigenvectors.

Baxter's famous solution to the 8-vertex model has been at the origin of a long series of works \cite{Bax72a,JohKM73,Bax76,Bax77,FadT79,KluZ88,KluZ89,LasP98,Las02,Bax02,FabM03,Bax04,BooJMST05,FabM05,Fab07,FabM07,BazM07}. Let us mention for instance the reformulation of this solution in the framework of the Quantum Inverse Scattering Method (QISM) and of algebraic Bethe ansatz (ABA) \cite{FadT79,FelV96b}, some further study on the properties of the $Q$-operator or of the functional $T$-$Q$ equation \cite{Fab07,FabM07,BazM07,FabM09} and of the Bethe roots \cite{KluZ88,KluZ89,Bax02,BazM07}, the investigation of the algebraic structures (quantum groups) underlying the integrability of the model and its relation to the SOS model \cite{Skl82a,FodIJKMY94,Fro97,JimKOS99,BufRT12}, and some attempts to study the correlation functions \cite{JohKM73,JimMN93,LasP98,Las02,Shi04,BooJMST05}. 

It is however important to underline that Baxter's solution of the (finite-size) 8-vertex/XYZ model \cite{Bax73a} only applies to the diagonalization of transfer matrices constructed on a lattice with an {\em even} number of sites $\mathsf{N}$. This strong restriction comes from the fact that the vertex-IRF transformation has been used to relate the 8-vertex model with {\em periodic} boundary conditions to the SOS model with the {\em same type} of  (i.e. periodic) boundary conditions. It actually happens that the space of states of the exactly solvable SOS model with periodic boundary conditions is zero-dimensional  when $\mathsf{N}$ is odd, at least for general values of the crossing parameter of the model (in the root of unity case a few eigenstates can be constructed in that way, but they are nevertheless not sufficient to account for the whole 8-vertex space of states). Therefore, in spite of more than 40 years of works on the subject, a complete description of the 8-vertex/XYZ transfer matrix spectrum and eigenstates for $\mathsf{N}$ odd is still missing\footnote{Some analytic study of some solutions to the functional $T$-$Q$ equation in the case of $\mathsf{N}$ odd could nevertheless be performed at special values of the crossing parameter, see for instance \cite{FabM05,BazM05,BazM10}.}. 

The present paper aims at filling this gap. As first shown in \cite{Nic13a}, it happens that the transfer matrix of the periodic 8-vertex/XYZ model with $\mathsf{N}$ odd can be related, by means of a vertex-IRF transformation, to the transfer matrix of the SOS model with {\em antiperiodic} boundary conditions. A remarkable feature of this SOS antiperiodic model is that the dimension of the space of states is exactly  $2^\mathsf{N}$, i.e. is equal to the dimension of the space of states of the 8-vertex model, which is far from being the case when periodic boundary conditions are applied (even when $\mathsf{N}$ is even). This antiperiodic SOS model can be solved by means of Sklyanin's quantum version of the Separation of Variables (SOV) approach \cite{Skl85,Skl90,Skl92}, as recently shown in \cite{FelS99,Nic13a,LevNT15}. In fact, such a relation between the 8-vertex/XYZ transfer matrix and some appropriate version of the antiperiodic SOS transfer matrix also holds when the XYZ spin chain is subjected, for even or odd lattice size $\mathsf{N}$,  to an integrable boundary twist by some Pauli matrix $\sigma^\alpha$ ($\alpha=x,y$ or $z$). We hence also consider here these twisted cases. By studying the properties of the vertex-IRF transformation when acting on the corresponding space of states, we show that, in the periodic case for $\mathsf{N}$ odd, as well as in the twisted cases with $\mathsf{N}$ even or odd,  we are able to completely determine the 8-vertex transfer matrix spectrum and eigenstates by using the SOV characterization obtained in the antiperiodic SOS case \cite{Nic13a,LevNT15}. We hence obtain, as usual in the SOV framework, a complete description of the spectrum and eigenstates in terms of solutions of a set of discrete versions of Baxter's $T$-$Q$ equations at the inhomogeneity parameters of the model.
The results of \cite{LevNT15} enable us moreover to rewrite this characterization in terms of particular classes of solutions of some functional versions of Baxter's  $T$-$Q$ equation in which the $Q$-functions are elliptic polynomials with quasi-periods depending on the  boundary twist, i.e. in terms  of  the solutions of Bethe-type equations for the roots of these $Q$-functions.
In this framework the eigenstates can be obtained, quite similarly as in ABA, from the multiple action of some operator on a given pseudo-vacuum state, a picture that is more convenient than the initial SOV one for the consideration of the homogeneous and thermodynamic limits of the model.

The article is organized as follows.
In Section~\ref{sec-models} we recall the definition of the 8-vertex model in the QISM framework and its relation to the SOS (or dynamical 6-vertex) model by means of the vertex-IRF transformation.
In Section~\ref{sec-V-IRF} we show that the action on the space of states of the twisted 8-vertex transfer matrix is related to the action of the antiperiodic SOS transfer matrix, and that the vertex-IRF transformation which provides this relation is invertible in the case of a non-trivial twist.
This enables us in Section~\ref{sec-diag-8V} to completely determine the spectrum and eigenstates of the transfer matrix of the inhomogeneous 8-vertex/XYZ model in all the twisted cases, by using the SOV diagonalization of the antiperiodic SOS transfer matrix of \cite{LevNT15}. We also discuss in this section the rewriting of this characterization in terms of the solutions of Baxter's (homogeneous) functional $T$-$Q$ equation using the ansatz proposed in \cite{LevNT15}: this ansatz, which is proven to be complete at least in the case of $\mathsf{N}$ even, enables us to obtain the transfer matrix eigenvalues in terms of the solutions of some Bethe-type equations, and the eigenstates as multiple action of some operator on a particular pseudo-vacuum state.
Then, in Section~\ref{sec-per}, we consider the periodic case with $\mathsf{N}$ odd. In that case, the vertex-IRF transformation relating the action on the space of states of the periodic 8-vertex transfer matrix to the corresponding action of the antiperiodic SOS transfer matrix is no longer bijective, and therefore we use two versions of this vertex-IRF transformation to completely determine the periodic 8-vertex transfer matrix spectrum and eigenstates from the antiperiodic SOS ones.
In that case an ansatz for the solutions of Baxter's (homogeneous) $T$-$Q$ functional equation can still be proposed \cite{LevNT15} so as to rewrite the discrete SOV characterization of the eigenvalues and eigenstates in terms of solutions of Bethe-type equations, but  the completeness of this ansatz remains to be proven.
Two appendices complete this paper.
In Appendix A, we briefly recall the results of \cite{LevNT15} concerning the SOV solution of the antiperiodic dynamical 6-vertex model.
In Appendix B we present an alternative reformulation of the 8-vertex transfer matrix spectrum and eigenstates in terms of the elliptic polynomial solutions of some inhomogeneous version of Baxter's $T$-$Q$ equation, which also enables us to characterize the eigenstates in terms of solutions of some Bethe-type equations (although with some inhomogeneous term), and to obtain the eigenstates in some ABA-type form.  On the basis of the results of \cite{LevNT15}, such a  reformulation is complete for all the 8-vertex transfer matrices studied in this paper.

\section{The 8-vertex model and its relation to the dynamical 6-vertex model}
\label{sec-models}

In this section, we recall the definition of the 8-vertex model in the framework of the quantum inverse scattering method, as well as its relation to the dynamical 6-vertex model by means of Baxter's vertex-IRF transformation \cite{Bax73a,FelV96b}.

\subsection{The 8-vertex model in the QISM framework}

In the QISM framework, the 8-vertex model is associated to an elliptic $R$-matrix $R^{\mathsf{(8V)}}(\lambda )\in\End(\mathbb{C}^2\otimes \mathbb{C}^2)$ of the form
\begin{equation}\label{R-8V}
R^{\mathsf{(8V)}}(\lambda )
=
\begin{pmatrix}
  \mathsf{a}(\lambda) & 0 & 0 & \mathsf{d}(\lambda ) \\ 
0 & \mathsf{b}(\lambda ) & \mathsf{c}(\lambda ) & 0 \\ 
0 & \mathsf{c}(\lambda ) & \mathsf{b}(\lambda ) & 0 \\ 
\mathsf{d}(\lambda ) & 0 & 0 & \mathsf{a}(\lambda )
\end{pmatrix},
\end{equation}
where $\mathsf{a}(\lambda)$, $\mathsf{b}(\lambda)$, $\mathsf{c}(\lambda)$, $\mathsf{d}(\lambda)$, which parametrize the local Boltzmann weights of the model, are the following functions of the spectral parameter $\lambda$:
\begin{xalignat}{2}
 &\mathsf{a}(\lambda )
   =\frac{2\theta_4(\eta |2\omega )\, \theta_1(\lambda+\eta |2\omega )\,\theta_4(\lambda |2\omega )}
             {\theta_2(0|\omega )\, \theta_4(0|2\omega )},
 &\mathsf{b}(\lambda )
   =\frac{2\theta_4(\eta |2\omega)\, \theta_1(\lambda |2\omega )\, \theta_4(\lambda +\eta |2\omega )}
             {\theta_2(0|\omega )\, \theta_4(0|2\omega )}, \hspace{-2mm}
            \label{a-b} \\
 &\mathsf{c}(\lambda ) 
   =\frac{2\theta _1(\eta |2\omega )\, \theta_4(\lambda|2\omega )\, \theta_4(\lambda +\eta |2\omega )}
             {\theta_2(0|\omega )\, \theta_4(0|2\omega )},
  &\mathsf{d}(\lambda )
    =\frac{2\theta_1(\eta |2\omega)\,\theta_1(\lambda +\eta |2\omega )\, \theta _1(\lambda |2\omega )}
              {\theta_2(0|\omega )\, \theta _{4}(0|2\omega )}.\hspace{-2mm}
              \label{c-d}
\end{xalignat}
Here $\eta\in\mathbb{C}$ is a generic complex parameter which corresponds to the crossing parameter of the model.
The functions $\theta_j(\lambda|k\omega)$, $j=1,2,3,4$, $k=1,2$, denote the usual theta functions \cite{GraR07L} with quasi-periods $\pi$ and $k\pi\omega$ ($\Im\omega >0$).
In the following, we may simplify the notations for the theta functions with imaginary quasi-period $\pi\omega$ and write $\theta_j(\lambda)\equiv\theta_j(\lambda|\omega)$.

The $R$-matrix \eqref{R-8V} satisfies the Yang-Baxter equation on $\mathbb{C}^2\otimes \mathbb{C}^2\otimes\mathbb{C}^2$,
\begin{equation}\label{YB}
R_{12}^{\mathsf{(8V)}}(\lambda _{12})\, R_{13}^{\mathsf{(8V)}}(\lambda_{13})\, R_{23}^{\mathsf{(8V)}}(\lambda _{23})
=R_{23}^{\mathsf{(8V)}}(\lambda_{23})\, R_{13}^{\mathsf{(8V)}}(\lambda_{13})\, R_{12}^{\mathsf{(8V)}}(\lambda_{12}),
\end{equation}
as well as the following unitary and crossing symmetry relations on $\mathbb{C}^2\otimes \mathbb{C}^2$:
\begin{align}
   &R_{21}^{\mathsf{(8V)}}(-\lambda)\, R_{12}^{\mathsf{(8V)}}(\lambda) = \theta_1(-\lambda+\eta|\omega)\, \theta_1(\lambda+\eta|\omega)\, \Id_{12},\\
   &R_{12}^{\mathsf{(8V)}}(\lambda)\, \sigma_1^y\, \big[ R_{12}^{\mathsf{(8V)}}(\lambda-\eta)\big]^{t_1} \sigma_1^y =\theta_1(\lambda+\eta|\omega)\,\theta_1(\lambda-\eta|\omega)\, \Id_{12}.
\end{align}
Moreover, by using the quasi-periodicity properties of the theta functions, it is simple to show that the 8-vertex $R$-matrix \eqref{R-8V} also satisfies the identities
\begin{align}
   &R_{12}^{\mathsf{(8V)}}(\lambda +\pi )=-\sigma_{1}^{z}\, R_{12}^{\mathsf{(8V)}}(\lambda )\,\sigma _{1}^z,
   \label{R-period1}\\
   &R_{12}^{\mathsf{(8V)}}(\lambda +\pi \omega )
   =-e^{-i( 2\lambda +\pi\omega+\eta) }\, \sigma _{1}^{x}\, R_{12}^{\mathsf{(8V)}}(\lambda)\, \sigma _{1}^{x}.
   \label{R-period2}
\end{align}
As usual, the indices label the spaces of the tensor product on which the corresponding operator acts. $\sigma^\alpha$, $\alpha\in\{x,y,z\}$, stand for the Pauli matrices. We have also used the shorthand notation $\lambda_{ij}\equiv \lambda_i-\lambda_j$, 

The monodromy matrix $\mathsf{M}_{0}^{\mathsf{(8V)}}(\lambda )\equiv \mathsf{M}_{0,1\ldots\mathsf{N}}^{\mathsf{(8V)}}(\lambda;\xi_1,\ldots,\xi_{\mathsf{N}} )\in\End(V_0\otimes V_1\otimes \ldots\otimes V_{\mathsf{N}})$ of an inhomogenous model of size $\mathsf{N}$ with space of states $\mathbb{V}_\mathsf{N}=V_1\otimes V_2\ldots\otimes V_\mathsf{N}\simeq (\mathbb{C}^2)^{\otimes\mathsf{N}}$ is defined as the following ordered product of $R$-matrices: 
\begin{equation}\label{mon-1}
  \mathsf{M}_{0}^{\mathsf{(8V)}}(\lambda )
  = R_{0\mathsf{N}}^{\mathsf{(8V)}}(\lambda -\xi_{\mathsf{N}})\cdots 
              R_{02}^{\mathsf{(8V)}}(\lambda -\xi_2)\, R_{01}^{\mathsf{(8V)}}(\lambda -\xi_{1}).
\end{equation}
Here $V_0\simeq\mathbb{C}^2$ denotes the auxiliary space, $V_n\simeq\mathbb{C}^2$ the local space at site $n$, and $\xi_n\in\mathbb{C}$ is the $n^\mathrm{th}$ inhomogeneity parameter.
In this paper we shall suppose that the inhomogeneity parameters are generic, or at least that they satisfy the following conditions
\begin{align}
   &\xi_j+\frac{\eta}{2}\notin \pi\mathbb{Z}+\pi\omega\mathbb{Z},\label{cond-inh0}\\
   &\forall\epsilon\in  \{-1,0,1\}, \quad \xi _{a}- \xi_{b}+\epsilon\eta \notin \pi\mathbb{Z}+\pi\omega\mathbb{Z}
 \quad \text{if}\  a\neq b.
    \label{cond-inh}
\end{align}
The monodromy matrix \eqref{mon-1} can be represented as a $2\times 2$ matrix on the auxiliary space $V_0$,
\begin{equation}
  \mathsf{M}_{0}^{\mathsf{(8V)}}(\lambda )= \begin{pmatrix}
                 \mathsf{A}^{\mathsf{(8V)}}(\lambda ) & \mathsf{B}^{\mathsf{(8V)}}(\lambda ) \\ 
                 \mathsf{C}^{\mathsf{(8V)}}(\lambda ) & \mathsf{D}^{\mathsf{(8V)}}(\lambda )
                 \end{pmatrix}_{\! [0]},
                 \label{mon-2}
\end{equation}
and its operator entries $\mathsf{A}^{\mathsf{(8V)}}(\lambda )$, $\mathsf{B}^{\mathsf{(8V)}}(\lambda )$, $\mathsf{C}^{\mathsf{(8V)}}(\lambda )$, $\mathsf{D}^{\mathsf{(8V)}}(\lambda )$ satisfy the Yang-Baxter commutation relations following from the quadratic relation on ${V}_0\otimes {V}_{0'}\otimes\mathbb{V}_\mathsf{N}$
\begin{equation} \label{RTT}
R_{0 0'}^{\mathsf{(8V)}}(\lambda _{12})\, \mathsf{M}_{0}^{\mathsf{(8V)}}(\lambda_{1})\,
\mathsf{M}_{0'}^{\mathsf{(8V)}}(\lambda _{2})
=\mathsf{M}_{0'}^{\mathsf{(8V)}}(\lambda _{2})\, \mathsf{M}_{0}^{\mathsf{(8V)}}(\lambda _{1})\,
R_{0 0'}^{\mathsf{(8V)}}(\lambda _{12}).
\end{equation}

The purpose of this paper is to explain how to characterize the spectrum and construct the complete set of eigenstates of the transfer matrix of the model with particular quasi-periodic boundary conditions. 
From \eqref{RTT} and from the fact that, for $\alpha\in\{z,x,y\}$, the $R$-matrix \eqref{R-8V} satisfies the discrete symmetry property $[R(\lambda), \sigma^\alpha\otimes\sigma^\alpha]=0$, it is easy to see that the transfer matrices
\begin{equation}\label{transfer}
   \mathsf{T}^{\mathsf{(8V)}}_\mathsf{(x,y)}(\lambda )
   =\tr\left[    \mathsf{M}_\mathsf{(x,y)}^\mathsf{(8V)}(\lambda)\right],
\end{equation}
where
\begin{equation}\label{twist-mon}
   \mathsf{M}_\mathsf{(x,y)}^\mathsf{(8V)}(\lambda)=\mathsf{K}^{\mathsf{(x,y)}}\, \mathsf{M}^{\mathsf{(8V)}}(\lambda ),   \qquad
   \text{with}\quad
   \mathsf{K^{(x,y)}}= (\sigma^{x})^\mathsf{y}\, (\sigma^{z})^\mathsf{x},
\end{equation}
define, for each choice of $\mathsf{(x,y)}\in\{0,1\}^2$, a one-parameter family of commuting operators:
\begin{equation}
   \big[ \mathsf{T}^{\mathsf{(8V)}}_\mathsf{(x,y)}(\lambda ), \mathsf{T}^{\mathsf{(8V)}}_\mathsf{(x,y)}(\mu ) \big]=0,
   \qquad \lambda,\mu\in\mathbb{C}.
\end{equation}
The different cases,
\begin{align}
  &\mathsf{T}^{\mathsf{(8V)}}_\mathsf{(0,0)}(\lambda )
  = \mathsf{A}^{\mathsf{(8V)}}(\lambda ) + \mathsf{D}^{\mathsf{(8V)}}(\lambda ) ,\\
  &\mathsf{T}^{\mathsf{(8V)}}_\mathsf{(1,0)}(\lambda )
  = \mathsf{A}^{\mathsf{(8V)}}(\lambda ) - \mathsf{D}^{\mathsf{(8V)}}(\lambda ) ,\\
  &\mathsf{T}^{\mathsf{(8V)}}_\mathsf{(0,1)}(\lambda )
  = \mathsf{B}^{\mathsf{(8V)}}(\lambda ) + \mathsf{C}^{\mathsf{(8V)}}(\lambda ) ,\\
  &\mathsf{T}^{\mathsf{(8V)}}_\mathsf{(1,1)}(\lambda )
  = \mathsf{B}^{\mathsf{(8V)}}(\lambda ) - \mathsf{C}^{\mathsf{(8V)}}(\lambda ) ,
\end{align}
%
correspond here to different types of quasi-periodic boundary conditions for the model, that we call respectively periodic, $\sigma^z$-twisted, antiperiodic (or $\sigma^x$-twisted), and twisted antiperiodic (proportional to $\sigma^y$-twisted) boundary conditions.
The logarithmic derivative of the transfer matrix gives, in the homogeneous limit, the Hamiltonian of the XYZ spin-1/2 Heisenberg chain,
\begin{equation}
   \frac{\partial \log \mathsf{T}^{\mathsf{(8V)}}_\mathsf{(x,y)}(\lambda )}{\partial\lambda}\bigg|_{\substack{\lambda=0\\ \xi_n=0}}
   = H_{\mathrm{XYZ}}=\frac{1}{2}\sum_{n=1}^\mathsf{N}\Big\{ J_x \sigma_n^x\sigma_{n+1}^x+J_y \sigma_n^y\sigma_{n+1}^y +J_z \sigma_n^z\sigma_{n+1}^z\Big\} +\frac{1}{2} J_0,
\end{equation}
with the corresponding quasi-periodic boundary conditions:
\begin{equation}
  \sigma_{\mathsf{N}+1}^\alpha=\mathsf{K}_1^{\mathsf{(x,y)}}\, \sigma_1^\alpha\, \big(\mathsf{K}_1^{\mathsf{(x,y)}} \big)^{-1}
  \qquad \text{for any}\quad \alpha=x,y,z.
\end{equation}
Here
\begin{alignat}{2}
  & J_x= \frac{\theta_1'(0|2\omega)}{\theta_1(0|2\omega)} \left[ \frac{\theta_4(\eta|2\omega)}{\theta_1(\eta|2\omega)}+\frac{\theta_1(\eta|2\omega)}{\theta_4(\eta|2\omega)}\right],
  \qquad
  &J_z=\frac{\theta_1'(\eta|2\omega)}{\theta_1(\eta|2\omega)}-\frac{\theta_4'(\eta|2\omega)}{\theta_4(\eta|2\omega)},
  \\
  &J_y= \frac{\theta_1'(0|2\omega)}{\theta_1(0|2\omega)} \left[ \frac{\theta_4(\eta|2\omega)}{\theta_1(\eta|2\omega)}-\frac{\theta_1(\eta|2\omega)}{\theta_4(\eta|2\omega)}\right],
  \qquad
  &J_0=\frac{\theta_1'(\eta|2\omega)}{\theta_1(\eta|2\omega)}+\frac{\theta_4'(\eta|2\omega)}{\theta_4(\eta|2\omega)}.
\end{alignat}

Before closing this subsection, we finally recall two important properties issued from the study of the 8-vertex Yang-Baxter algebra: 
\begin{itemize}
\item the inversion relation for the monodromy matrix \eqref{mon-1}:
\begin{equation}  \label{Inv-8v-M}
   \mathsf{M}_{0}^{\mathsf{(8V)}}(\lambda )\cdot 
   \sigma_{0}^{y}\,\big[ \mathsf{M}_{0}^{\mathsf{(8V)}}(\lambda -\eta )\big]^{t_{0}}\sigma _{0}^{y}
   =\mathrm{det}_q \mathsf{M}^{\mathsf{(8V)}}(\lambda ),
\end{equation}
where $\mathrm{det}_q \mathsf{M}^{\mathsf{(8V)}}(\lambda )$ is the so-called quantum determinant, which is a central element of the 8-vertex Yang-Baxter algebra:
\begin{align}
\mathrm{det}_{q}\mathsf{M}^{\mathsf{(8V)}}(\lambda )
 &=\text{\textsc{a}}(\lambda )\text{\textsc{d}}(\lambda -\eta ) \nonumber\\
& =  \mathsf{A}^{\mathsf{(8V)}}(\lambda )\, \mathsf{D}^{\mathsf{(8V)}}(\lambda -\eta )
                 -\mathsf{B}^{\mathsf{(8V)}}(\lambda )\,\mathsf{C}^{\mathsf{(8V)}}(\lambda -\eta )
\nonumber\\
& = \mathsf{D}^{\mathsf{(8V)}}(\lambda )\,\mathsf{A}^{\mathsf{(8V)}}(\lambda -\eta )
      -\mathsf{C}^{\mathsf{(8V)}}(\lambda )\, \mathsf{B}^{\mathsf{(8V)}}(\lambda -\eta ) ,
      \label{det-q}
\end{align}
with
\begin{equation}\label{a-d}
\text{\textsc{a}}(\lambda )\equiv \prod_{n=1}^{\mathsf{N}}\theta (\lambda-\xi _{n}+\eta|\omega ),
\quad 
\text{\textsc{d}}(\lambda )\equiv \text{\textsc{a}}(\lambda -\eta );
\end{equation}
\item the solution of the quantum inverse problem \cite{KitMT99,MaiT00,KitKMNST07} which enables one to express any elementary local operator  $X_n\in\End(V_n)$ at site $n$ in terms of the entries of the monodromy matrix \eqref{mon-1} 
or of its inverse as
\begin{align}
     X_n
     &=\prod_{k=1}^{n-1}\mathsf{T}_\mathsf{(0,0)}^\mathsf{(8V)}(\xi_k)\cdot
     \tr_0\big[ \mathsf{M}_0^\mathsf{(8V)}(\xi_n)\, X_0\big] \cdot
     \prod_{k=1}^{n} \big[ \mathsf{T}_\mathsf{(0,0)}^\mathsf{(8V)}(\xi_k) \big]^{-1},
     \label{inv-pb1}\\
     &=\prod_{k=1}^{n}\mathsf{T}_\mathsf{(0,0)}^\mathsf{(8V)}(\xi_k)\cdot
     \frac{\tr_0 \big[ \sigma_{0}^{y}\, \mathsf{M}_{0}^{\mathsf{(8V)}}(\xi_n -\eta )^{t_{0}}\,\sigma _{0}^{y}\, X_0\big]}{\det_q\mathsf{M}^\mathsf{(8V)}(\xi_n)}\cdot
     \prod_{k=1}^{n-1} \big[ \mathsf{T}_\mathsf{(0,0)}^\mathsf{(8V)}(\xi_k) \big]^{-1}.
     \label{inv-pb2}
\end{align}
\end{itemize}

\subsection{Elementary properties of the 8-vertex transfer matrices}

The QISM algebraic framework presented in the last subsection enables one to obtain without further study some properties of the quasi-periodic transfer matrices \eqref{transfer} that we gather in the next proposition.

\begin{proposition}
The quasi-periodic 8-vertex transfer matrices \eqref{transfer} satisfy the quasi-periodicity properties
\begin{align}
   & \mathsf{T}^{\mathsf{(8V)}}_\mathsf{(x,y)}(\lambda+\pi )=(-1)^{\mathsf{N}+\mathsf{y}}\, \mathsf{T}^{\mathsf{(8V)}}_\mathsf{(x,y)}(\lambda ),
   \label{per-transfer1}\\
   &\mathsf{T}^{\mathsf{(8V)}}_\mathsf{(x,y)}(\lambda+\pi\omega )
   =(-e^{-2i\lambda-i\pi\omega})^\mathsf{N}\, e^{2i [\sum_{k=1}^\mathsf{N}\xi_k-\frac{\mathsf{N}}{2}\eta+\mathsf{x}\frac{\pi}{2} ]}\, \mathsf{T}^{\mathsf{(8V)}}_\mathsf{(x,y)}(\lambda ).
   \label{per-transfer2}
\end{align}
They moreover satisfy the following identities when evaluated at the inhomogeneity parameters:
\begin{align}
   &\mathsf{T}^{\mathsf{(8V)}}_\mathsf{(x,y)}(\xi_n)\, \mathsf{T}^{\mathsf{(8V)}}_\mathsf{(x,y)}(\xi_n-\eta )
   =(-1)^\mathsf{x+y}\, \mathrm{det}_q \mathsf{M}^{\mathsf{(8V)}}(\xi_n),
   \qquad
   n\in\{1,\ldots,\mathsf{N}\},\label{q-det-ID}
   \displaybreak[0]\\
  &\prod_{n=1}^\mathsf{N}\mathsf{T}^{\mathsf{(8V)}}_\mathsf{(x,y)}(\xi_n)=\prod_{n=1}^\mathsf{N}\mathsc{a}(\xi_n)\, \prod_{n=1}^\mathsf{N}\mathsf{K}^\mathsf{(x,y)}_n.\label{global-ID}
\end{align}
\end{proposition}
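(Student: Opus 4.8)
The plan is to derive all four identities directly from the definition \eqref{transfer}--\eqref{twist-mon} of the quasi-periodic transfer matrices together with the structural properties of the 8-vertex monodromy matrix already established in the excerpt, namely the $R$-matrix quasi-periodicities \eqref{R-period1}--\eqref{R-period2}, the inversion relation \eqref{Inv-8v-M}, and the solution of the quantum inverse problem \eqref{inv-pb1}--\eqref{inv-pb2}.

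For the quasi-periodicity relations \eqref{per-transfer1}--\eqref{per-transfer2}: I would shift the spectral parameter by $\pi$ (resp.\ $\pi\omega$) in the ordered product \eqref{mon-1}. Each of the $\mathsf{N}$ factors $R_{0n}^{\mathsf{(8V)}}(\lambda-\xi_n)$ picks up, by \eqref{R-period1} (resp.\ \eqref{R-period2}), a sign $-1$ (resp.\ a scalar $-e^{-i(2(\lambda-\xi_n)+\pi\omega+\eta)}$) together with a conjugation by $\sigma_0^z$ (resp.\ $\sigma_0^x$) acting only on the auxiliary space. Since these conjugations all act on the single space $V_0$, they telescope into an overall conjugation of $\mathsf{M}_0^{\mathsf{(8V)}}(\lambda)$ by $\sigma_0^z$ (resp.\ $\sigma_0^x$), and the scalar factors multiply to give $(-1)^\mathsf{N}$ (resp.\ $(-e^{-2i\lambda-i\pi\omega})^\mathsf{N}e^{2i[\sum_k\xi_k-\frac{\mathsf{N}}{2}\eta]}$). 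Then I multiply on the left by $\mathsf{K}^{(x,y)}$ and take the trace over $V_0$; using $\sigma^z\mathsf{K}^{(x,y)}\sigma^z=(-1)^\mathsf{y}\,\mathsf{K}^{(x,y)}$ and $\sigma^x\mathsf{K}^{(x,y)}\sigma^x=(-1)^\mathsf{x}\,\mathsf{K}^{(x,y)}$ (a two-line check on $\mathsf{K}^{(x,y)}=(\sigma^x)^\mathsf{y}(\sigma^z)^\mathsf{x}$ using the Pauli anticommutation relations), together with cyclicity of the trace, produces exactly the claimed prefactors $(-1)^{\mathsf{N}+\mathsf{y}}$ and $(-e^{-2i\lambda-i\pi\omega})^\mathsf{N}e^{2i[\sum_k\xi_k-\frac{\mathsf{N}}{2}\eta+\mathsf{x}\frac{\pi}{2}]}$ (noting $(-1)^\mathsf{x}=e^{i\mathsf{x}\pi}$).

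For the quantum-determinant identity \eqref{q-det-ID}: I evaluate \eqref{transfer} at $\lambda=\xi_n$ and at $\lambda=\xi_n-\eta$ and multiply. Writing $\mathsf{T}^{\mathsf{(8V)}}_\mathsf{(x,y)}(\lambda)=\tr_0[\mathsf{K}^{(x,y)}\mathsf{M}_0^{\mathsf{(8V)}}(\lambda)]$, the crucial input is that at $\lambda=\xi_n$ the monodromy matrix $\mathsf{M}_0^{\mathsf{(8V)}}(\xi_n)$ becomes, via the quantum inverse problem, proportional to a one-dimensional projector in the auxiliary space tensored with a permutation-type operator on the quantum space, so that the auxiliary-space trace of a product collapses; alternatively, and more cleanly, I would invoke the inversion relation \eqref{Inv-8v-M}, which at $\lambda=\xi_n$ gives $\mathsf{M}_0^{\mathsf{(8V)}}(\xi_n)\cdot\sigma_0^y[\mathsf{M}_0^{\mathsf{(8V)}}(\xi_n-\eta)]^{t_0}\sigma_0^y=\det_q\mathsf{M}^{\mathsf{(8V)}}(\xi_n)$. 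Combined with the fact that $\mathsf{M}_0^{\mathsf{(8V)}}(\xi_n)$ has rank one in $V_0$, one gets $\tr_0[\mathsf{K}^{(x,y)}\mathsf{M}_0(\xi_n)]\,\tr_0[\mathsf{K}^{(x,y)}\mathsf{M}_0(\xi_n-\eta)]=\tr_0[\mathsf{K}^{(x,y)}\sigma_0^y(\mathsf{K}^{(x,y)})^{t_0}\sigma_0^y]\cdot\det_q\mathsf{M}^{\mathsf{(8V)}}(\xi_n)$ up to a scalar; the $V_0$-trace $\tr_0[\mathsf{K}^{(x,y)}\sigma_0^y(\mathsf{K}^{(x,y)})^{t_0}\sigma_0^y]$ is an elementary $2\times2$ computation that evaluates to $2(-1)^{\mathsf{x}+\mathsf{y}}$, and dividing by the $\tr_0\Id_0=2$ coming from the rank-one normalization yields the factor $(-1)^{\mathsf{x}+\mathsf{y}}$. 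I expect this step — carefully tracking the rank-one structure at $\xi_n$ and disentangling auxiliary-space from quantum-space factors — to be the main obstacle, essentially because it is the only place where the special value $\lambda=\xi_n$ (and not just formal algebra) really enters.

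For the global identity \eqref{global-ID}: I take the product over $n=1,\ldots,\mathsf{N}$ of \eqref{inv-pb1} applied, say, to $X_n=\Id_n$, which is the telescoping statement $\prod_{n=1}^\mathsf{N}\mathsf{T}^{\mathsf{(8V)}}_\mathsf{(0,0)}(\xi_n)=\prod_{n=1}^\mathsf{N}\mathsc{a}(\xi_n)$ in the periodic case — or, more uniformly, I use the known fact (again a consequence of the quantum inverse problem, cf.\ \cite{KitMT99,MaiT00}) that the twisted product $\mathsf{M}_0^{\mathsf{(8V)}}(\xi_\mathsf{N})\cdots\mathsf{M}_0^{\mathsf{(8V)}}(\xi_1)$ reduces, modulo the scalar $\prod_n\mathsc{a}(\xi_n)$, to the full cyclic shift operator on $\mathbb{V}_\mathsf{N}$ times an operator on $V_0$. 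Inserting the twist $\mathsf{K}^{(x,y)}$ at each step and tracing over $V_0$ turns the cyclic shift together with the $\mathsf{N}$ copies of $\mathsf{K}$ into $\prod_{n=1}^\mathsf{N}\mathsf{K}^{(x,y)}_n$ acting on $\mathbb{V}_\mathsf{N}$, which is precisely the right-hand side of \eqref{global-ID}. All the sign and scalar bookkeeping here is routine once the $\xi_n$-specializations from \eqref{inv-pb1}--\eqref{inv-pb2} and \eqref{Inv-8v-M} are in hand.
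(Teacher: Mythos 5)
Your treatment of the quasi-periodicity relations \eqref{per-transfer1}--\eqref{per-transfer2} and of the global identity \eqref{global-ID} is correct and follows the same route as the paper: telescoping the auxiliary-space conjugations coming from \eqref{R-period1}--\eqref{R-period2}, the sign rules $\sigma^z\mathsf{K}^{\mathsf{(x,y)}}\sigma^z=(-1)^{\mathsf{y}}\mathsf{K}^{\mathsf{(x,y)}}$ and $\sigma^x\mathsf{K}^{\mathsf{(x,y)}}\sigma^x=(-1)^{\mathsf{x}}\mathsf{K}^{\mathsf{(x,y)}}$ together with cyclicity of the trace, and, for \eqref{global-ID}, the reconstruction formula \eqref{inv-pb1} applied to $X_n=\mathsf{K}^{\mathsf{(x,y)}}_n$ combined with $\prod_n\mathsf{T}^{\mathsf{(8V)}}_{\mathsf{(0,0)}}(\xi_n)=\prod_n\mathsc{a}(\xi_n)$.

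The proof of \eqref{q-det-ID}, however, has a genuine gap, precisely at the step you flagged as the main obstacle. The premise that $\mathsf{M}_0^{\mathsf{(8V)}}(\xi_n)$ becomes ``proportional to a one-dimensional projector in the auxiliary space'' is false: the only special factor at $\lambda=\xi_n$ is $R^{\mathsf{(8V)}}_{0n}(0)\propto P_{0n}$, the permutation, which is invertible, so that generically the whole monodromy matrix at $\xi_n$ is an invertible operator on $V_0\otimes\mathbb{V}_{\mathsf{N}}$ and nothing collapses to rank one. Consequently the claimed reduction of $\tr_0\big[\mathsf{K}\mathsf{M}_0(\xi_n)\big]\,\tr_0\big[\mathsf{K}\mathsf{M}_0(\xi_n-\eta)\big]$ to a single auxiliary-space trace times $\mathrm{det}_q\mathsf{M}^{\mathsf{(8V)}}(\xi_n)$ does not follow from the inversion relation \eqref{Inv-8v-M} alone. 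Even for $2\times 2$ matrices with commuting entries one only has $\tr(X)\tr(Y)=\tr(XY)+\tr\big(X\,\mathrm{adj}(Y)\big)$: the adjugate term indeed produces $2(-1)^{\mathsf{x+y}}\mathrm{det}_q\mathsf{M}^{\mathsf{(8V)}}(\xi_n)$ via \eqref{Inv-8v-M}, but the remaining term $\tr_0\big[\mathsf{K}\mathsf{M}_0(\xi_n)\mathsf{K}\mathsf{M}_0(\xi_n-\eta)\big]$ does not vanish and must contribute $-(-1)^{\mathsf{x+y}}\mathrm{det}_q$ to restore the correct (unit, not doubled) normalization --- and with noncommuting operator entries even the starting identity fails. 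What actually makes \eqref{q-det-ID} work, and distinguishes $\lambda=\xi_n$ from a generic $\lambda$ where the identity is false, are the annihilation identities $\mathsf{A}^{\mathsf{(8V)}}(\xi_n)\mathsf{A}^{\mathsf{(8V)}}(\xi_n-\eta)=\mathsf{D}^{\mathsf{(8V)}}(\xi_n)\mathsf{D}^{\mathsf{(8V)}}(\xi_n-\eta)=0$, $\mathsf{B}^{\mathsf{(8V)}}(\xi_n)\mathsf{B}^{\mathsf{(8V)}}(\xi_n-\eta)=\mathsf{C}^{\mathsf{(8V)}}(\xi_n)\mathsf{C}^{\mathsf{(8V)}}(\xi_n-\eta)=0$, together with the exchange identities $\mathsf{A}^{\mathsf{(8V)}}(\xi_n)\mathsf{D}^{\mathsf{(8V)}}(\xi_n-\eta)=-\mathsf{C}^{\mathsf{(8V)}}(\xi_n)\mathsf{B}^{\mathsf{(8V)}}(\xi_n-\eta)$ and $\mathsf{D}^{\mathsf{(8V)}}(\xi_n)\mathsf{A}^{\mathsf{(8V)}}(\xi_n-\eta)=-\mathsf{B}^{\mathsf{(8V)}}(\xi_n)\mathsf{C}^{\mathsf{(8V)}}(\xi_n-\eta)$, all consequences of the reconstruction formulae \eqref{inv-pb1}--\eqref{inv-pb2}. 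Expanding $\mathsf{T}^{\mathsf{(8V)}}_{\mathsf{(x,y)}}(\xi_n)\,\mathsf{T}^{\mathsf{(8V)}}_{\mathsf{(x,y)}}(\xi_n-\eta)$ into the four products of monodromy entries, the annihilation identities kill the two same-letter products and the exchange identities convert the two surviving cross terms into $(-1)^{\mathsf{x+y}}\mathrm{det}_q\mathsf{M}^{\mathsf{(8V)}}(\xi_n)$; these operator identities are the ingredient missing from your argument.
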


\begin{proof}
The quasi-periodicity properties \eqref{R-period1}-\eqref{R-period2} of the  8-vertex $R$-matrix lead to the following identities for the $\mathsf{(x,y)}$-twisted monodromy matrix \eqref{twist-mon}:
\begin{align}
  &\mathsf{M}_\mathsf{(x,y)}^\mathsf{(8V)}(\lambda +\pi )
  =(-1)^\mathsf{N+y}
  \sigma _{0}^{z}\, \mathsf{M}_\mathsf{(x,y)}^\mathsf{(8V)}(\lambda)\,\sigma _{0}^{z},
 \\
  &\mathsf{M}_\mathsf{(x,y)}^\mathsf{(8V)}(\lambda +\pi \omega )
  =(-1)^\mathsf{N+x}\, e^{-i( 2\lambda +\pi \omega+\eta) \mathsf{N}}\, e^{2i \sum_{k=1}^{\mathsf{N}}\xi_k }\,
  \sigma_0^{x}\,\mathsf{M}_\mathsf{(x,y)}^\mathsf{(8V)}(\lambda)\,\sigma_0^{x}. 
\end{align}
The quasi-periodicity properties \eqref{per-transfer1} and \eqref{per-transfer2} of the $\mathsf{(x,y)}$-twisted 8-vertex transfer matrix then follow by means of the cyclicity of the trace.

From the reconstruction formulae of local operators \eqref{inv-pb1}-\eqref{inv-pb2}, it is easy to prove \cite{KitKMNST07,Nic13a} the
annihilation identities
\begin{align}
&\mathsf{A}^{\mathsf{(8V)}}(\xi _{n})\, \mathsf{A}^{\mathsf{(8V)}}(\xi_{n}-\eta)
 =\mathsf{D}^{\mathsf{(8V)}}(\xi _{n})\, \mathsf{D}^{\mathsf{(8V)}}(\xi _{n}-\eta)=0,  \label{annih-1}
  \\
&\mathsf{B}^{\mathsf{(8V)}}(\xi _{n})\, \mathsf{B}^{\mathsf{(8V)}}(\xi_{n}-\eta)
 =\mathsf{C}^{\mathsf{(8V)}}(\xi _{n})\, \mathsf{C}^{\mathsf{(8V)}}(\xi _{n}-\eta)=0,  \label{annih-2}
\end{align}
as well as the exchange identities
\begin{align}
&\mathsf{A}^{\mathsf{(8V)}}(\xi _{n})\, \mathsf{D}^{\mathsf{(8V)}}(\xi_{n}-\eta) 
=-\mathsf{C}^{\mathsf{(8V)}}(\xi _{n})\, \mathsf{B}^{\mathsf{(8V)}}(\xi _{n}-\eta),  \label{excha-1} \\
&\mathsf{D}^{\mathsf{(8V)}}(\xi _{n})\, \mathsf{A}^{\mathsf{(8V)}}(\xi_{n}-\eta) 
=-\mathsf{B}^{\mathsf{(8V)}}(\xi _{n})\, \mathsf{C}^{\mathsf{(8V)}}(\xi _{n}-\eta).  \label{excha-2}
\end{align}
%
By using the annihilation identities \eqref{annih-1}-\eqref{annih-2} we get
\begin{equation}
    \mathsf{T}_{(\mathsf{x,y})}^{\mathsf{(8V)}}(\xi _{n})\,\mathsf{T}_{(\mathsf{x,y})}^{\mathsf{(8V)}}(\xi _n-\eta)
    =(-1)^{\mathsf{y}}\Big[
     \mathsf{A}^{\mathsf{(8V)}}(\xi _{n})\,\mathsf{D}^{\mathsf{(8V)}}(\xi_{n}-\eta)
    +\mathsf{D}^{\mathsf{(8V)}}(\xi _{n})\, \mathsf{A}^{\mathsf{(8V)}}(\xi _{n}-\eta)\Big],
\end{equation}
for $(\mathsf{x,y})=\,(0,0)\,,(1,0)$, and 
\begin{equation}
 \mathsf{T}_{(\mathsf{x,y})}^{\mathsf{(8V)}}(\xi _{n})\,\mathsf{T}_{(\mathsf{x,y})}^{\mathsf{(8V)}}(\xi _{n}-\eta)
 =(-1)^{\mathsf{x}}\Big[
  \mathsf{B}^{\mathsf{(8V)}}(\xi _{n})\,\mathsf{C}^{\mathsf{(8V)}}(\xi_{n}-\eta)
  +\mathsf{C}^{\mathsf{(8V)}}(\xi _{n})\,\mathsf{B}^{\mathsf{(8V)}}(\xi _{n}-\eta)\Big],
\end{equation}
for $(\mathsf{x,y})=\,(0,1)\,,(1,1)$. These relations lead to the so-called inversion formula \eqref{q-det-ID} for the transfer matrix once we use the exchange relations \eqref{excha-1}-\eqref{excha-2}.

Finally, the identities \eqref{global-ID} are also a trivial consequence of the reconstruction formula for the local operators $\mathsf{K}^\mathsf{(x,y)}_n$ once we recall that the product of periodic transfer matrices evaluated at the inhomogeneity parameters along the chain is just the product of all $\mathsc{a}(\xi_n)$.
\end{proof}

As noticed in  \cite{Nic13a} for the periodic case,  one can use the above properties to get a preliminary description of the transfer matrix spectrum. Indeed, the relations \eqref{per-transfer1}-\eqref{per-transfer2} imply that all eigenvalues $\mathsf{t}(\lambda)$ of $\mathsf{T}^{\mathsf{(8V)}}_\mathsf{(x,y)}(\lambda )$ should satisfy the quasi-periodicity properties
\begin{align}
 & \mathsf{t}(\lambda+\pi)=(-1)^{\mathsf{N}+\mathsf{y}}\, {\mathsf{t}}(\lambda),
 \label{periodt-1}\\
 &\mathsf{t}(\lambda+\pi\omega)
 = (-e^{-2i\lambda-i\pi\omega})^\mathsf{N}\, e^{2i [\sum_{k=1}^\mathsf{N}\xi_k-\frac{\mathsf{N}}{2}\eta+\mathsf{x}\frac{\pi}{2}]} \, {\mathsf{t}}(\lambda),
 \label{periodt-2}
\end{align}
as well as the following identities when evaluated at the inhomogeneity parameters:
\begin{align}
   &\mathsf{t}(\xi_n)\, \mathsf{t}(\xi_n-\eta)
   =( -1) ^{\mathsf{x}+\mathsf{y}}\,\mathsc{a}(\xi _{n})\,\mathsc{d}(\xi _{n}-\eta),
\qquad
\forall n\in \{1,\ldots,\mathsf{N}\},
\label{eq-quadr-8V}\displaybreak[0]\\
  &\prod_{n=1}^\mathsf{N} \mathsf{t}(\xi_n)=(\pm 1)^\mathsf{x+y+xy}\prod_{n=1}^\mathsf{N}\mathsc{a}(\xi_n).
  \label{eq-prodt}
\end{align}
Let us mention here that, after the paper \cite{Nic13a}, an ansatz based on a systematic use of this kind of identities was proposed to describe the spectrum of several integrable model, by some modified Baxter's type functional equation (see \cite{CaoCYSW14} for the periodic 8-vertex model).
However, it is important to stress that the above properties are only {\em necessary} conditions that have to be satisfied by the transfer matrix eigenvalues, but that, without further information about the nature of the spectrum, they are {\em a priori} not sufficient to ensure its complete characterization: we do not know at this stage whether all the solutions to \eqref{periodt-1}-\eqref{eq-prodt}  are indeed transfer matrix eigenvalues.

Hence, to proceed further, we shall use Baxter's  vertex-IRF transformation \cite{Bax73a} to relate the 8-vertex model to the SOS (or dynamical 6-vertex) one. In this context we shall be able to explicitly construct, in the twisted case or in the periodic case with $\mathsf{N}$ odd, the transfer matrix eigenstates by SOV. This essential step will notably allow us to select the true eigenvalues among the solutions to \eqref{periodt-1}-\eqref{eq-prodt} as those for which we can construct nonzero eigenstates.

\subsection{Vertex-IRF transformation from 8-vertex to dynamical 6-vertex model}

Although the form of the 8-vertex $R$-matrix \eqref{R-8V} does not {\em a priori} allow for the direct resolution of the model by Bethe ansatz, Baxter nevertheless managed to construct the eigenstates of the periodic 8-vertex transfer matrix with an even number of sites by relating them to the eigenstates of the transfer matrix of another model of statistical physics of solid-on-solid (SOS) type \cite{Bax73a}. This SOS model, also sometimes called ABF \cite{AndBF84} or dynamical 6-vertex model, describes interactions of a height variable around the faces of a two-dimensional square lattice. It is directly solvable by Bethe ansatz and its variants has been widely studied (see for instance \cite{DatJMO86,KunY88,PeaS88,PeaS89,DatJKM90,FelV96b,Ros09,PakRS08,LevT13a,LevT13b,LevT14a}). The relation between the two models is provided by a generalized gauge transformation connecting their respective $R$-matrices, called {\em vertex-IRF} transformation, and which can be written in the following form:
\begin{equation}\label{vertex-IRF}
R_{12}^{\mathsf{(8V)}}(\lambda _{12})\, S_1(\lambda_1|t )\, S_2(\lambda_2| t +\eta \sigma_1^{z})
=S_2(\lambda_2| t )\, S_1(\lambda_1| t +\eta \sigma _2^z)\, R_{12}(\lambda _{12}|t).
\end{equation}
Here $t$ is an additional parameter called dynamical parameter. In the language of the SOS model, it corresponds to the value of the height variable at a particular site of the model. The $R$-matrix $R(\lambda|t)$ of the SOS (or dynamical 6-vertex) model depends on both the spectral and dynamical parameters and satisfies a modified version of the Yang-Baxter relation called dynamical Yang-Baxter equation \cite{GerN84,Fel95}:
\begin{equation}\label{DYBE}
  R_{12}(\lambda_{12}|t+\eta\sigma_3^z)\, R_{13}(\lambda_{13}|t)\,
  R_{23}(\lambda_{23}|t+\eta\sigma_1^z)\\
  =R_{23}(\lambda_{23}|t)\, R_{13}(\lambda_{13}|t+\eta\sigma_2^z)\,
  R_{12}(\lambda_{12}|t).
\end{equation}

In this paper we consider the following solution of \eqref{DYBE}:
\begin{equation}\label{R-6VD}
  R(\lambda|t)=
  \begin{pmatrix}
  a(\lambda) & 0 & 0 & 0 \\
  0 & e^{i\mathsf{y}\eta}\, b(\lambda|t) & e^{i\mathsf{y}\lambda}\,c(\lambda|t) & 0 \\
  0 & e^{-i\mathsf{y}\lambda}\,c(\lambda|-t) & e^{-i\mathsf{y}\eta}\, b(\lambda|-t) & 0\\
  0 & 0 & 0 & a(\lambda)
  \end{pmatrix} ,
\end{equation}
with $\mathsf{y}\in\{0,1\}$ and
\begin{equation}\label{def-abc}
   a(\lambda)=\theta(\lambda+\eta), \quad 
   b(\lambda|t)=\frac{\theta(\lambda)\,\theta(t+\eta)}{\theta(t)}, \quad
   c(\lambda|t)=\frac{\theta(\eta)\,\theta(t+\lambda)}{\theta(t)}.
\end{equation}
The dynamical gauge transformation \eqref{vertex-IRF} between \eqref{R-8V} and \eqref{R-6VD} is provided by the following $2\times 2$ numerical matrix,
\begin{equation}\label{mat-S}
S(\lambda | t )= e^{i\mathsf{y}\frac{t}{2}}\,
\begin{pmatrix}
e^{-i\mathsf{y}\frac{\lambda}{2}}\,\theta_{2}(-\lambda + t |2\omega) & 
e^{i\mathsf{y}\frac{\lambda}{2}}\,\theta_{2}(\lambda + t |2\omega) \\ 
e^{-i\mathsf{y}\frac{\lambda}{2}}\,\theta_{3}(-\lambda + t |2\omega) & 
e^{i\mathsf{y}\frac{\lambda}{2}}\,\theta_{3}(\lambda + t |2\omega)
\end{pmatrix},
\end{equation}
which also depends on both the spectral and dynamical parameters.

The vertex-IRF transformation \eqref{vertex-IRF} can be extended to a relation between the corresponding monodromy matrices:
\begin{equation}\label{v-IRF-mon}
   \mathsf{M}_{0}^{\mathsf{(8V)}}(\lambda )\, S_{0}(\lambda | t )\, S_{q}(t+\eta \sigma_{0}^{z})
   =S_{q}(t)\, S_{0}(\lambda | t +\eta \mathsf{S})\, \mathsf{M}_{0}(\lambda |t ),
\end{equation}
where $\mathsf{S}$ is the total $z$-component of the spin,
\begin{equation}\label{total-spin}
   \mathsf{S} = \sum_{j=1}^{\mathsf{N}}\sigma_j^{z} ,
\end{equation}
and where $S_q(t)\equiv S_q(\xi_1,\ldots,\xi_\mathsf{N}| t)$, with $q\equiv 1\,2\ldots \mathsf{N}$, is defined as
\begin{equation}  \label{S_q}
   S_{q}( t ) = S_{1}(\xi _{1}| t )\, S_2(\xi_2| t+\eta \sigma_1^z) \ldots S_\mathsf{N}(\xi _{\mathsf{N}}| t +\eta \sum_{a=1}^{\mathsf{N}-1}\sigma_{a}^{z}).
\end{equation}
In \eqref{v-IRF-mon}, $\mathsf{M}_{0}(\lambda |t )$ denotes the (periodic) dynamical 6-vertex monodromy matrix, defined on $V_0\otimes\mathbb{V}_\mathsf{N}$ as
\begin{align}
\mathsf{M}_{0}(\lambda |t )
&\equiv 
R_{0 \mathsf{N}}(\lambda -\xi _{\mathsf{N}}|t +\eta \sum_{a=1}^{\mathsf{N}-1}\sigma _{a}^{z})\cdots R_{0 2}(\lambda-\xi_2|t+\eta\sigma_1^z)\, R_{0 1}(\lambda -\xi _{1}|t)\nonumber\\
&\equiv 
\begin{pmatrix}
\mathsf{A}(\lambda |t ) & \mathsf{B}(\lambda |t ) \\ 
\mathsf{C}(\lambda |t ) & \mathsf{D}(\lambda |t )
\end{pmatrix}_{\! [0]}
. \label{mon-6VD}
\end{align}
The latter satisfies, together with \eqref{R-6VD}, a dynamical quadratic relation of the form
\begin{equation}\label{RTT-dyn}
R_{00'}(\lambda _{0 0'}|t+\eta \mathsf{S})\,
\mathsf{M}_{0}(\lambda _{0}|t )\,
\mathsf{M}_{0'}(\lambda _{0'}|t +\eta \sigma _{0}^{z})
=\mathsf{M}_{0'}(\lambda _{0'}| t )\,
\mathsf{M}_{0}(\lambda _{0}|t+\eta \sigma _{0'}^{z})\,
R_{00'}(\lambda _{0 0'}|t ).
\end{equation}

The algebraic Bethe ansatz for the dynamical 6-vertex model with periodic boundary conditions has been formulated in \cite{FelV96b} from the representation theory of the elliptic quantum group studied in \cite{FelV96a}. The study of the antiperiodic model in the framework of the quantum separation of variables approach has been performed in \cite{FelS99,Nic13a,LevNT15}.
In particular, in \cite{LevNT15}, slightly different variants of the antiperiodic model have been considered, depending on different global shifts of the dynamical parameter $t$.
We shall use here the results of \cite{LevNT15} to explicitly construct, by means of the vertex-IRF transformation, the eigenvectors and eigenvalues of the 8-vertex transfer matrices \eqref{transfer} in the quasi-periodic cases with $\mathsf{(x,y)}\not=(0,0)$ and $\mathsf{N}$ even or odd, and in the periodic case with 
 $\mathsf{(x,y)}=(0,0)$ and $\mathsf{N}$ odd.


\section{Vertex-IRF transformation and quasi-periodic transfer matrices in left and right representation spaces}
\label{sec-V-IRF}

The search for a convenient gauge transformation of the 8-vertex monodromy matrix simplifying the analysis of the 8-vertex transfer matrix spectrum has
naturally led to the introduction of the dynamical parameter $t$.
The space of states of the gauge transformed model (the dynamical 6-vertex model) hence corresponds to a representation space of the dynamical Yang-Baxter algebra. In this section, we describe this dynamical-spin representation space and show that, in certain conditions, the vertex-IRF transformation defines an isomorphism between a particular subspace of this dynamical-spin space and the $2^\mathsf{N}$-dimensional pure spin space of states of the 8-vertex model.
This enables us to completely characterize the action on this space of states of the quasi-periodic 8-vertex transfer matrices \eqref{transfer} for $\mathsf{(x,y)}\not=(0,0)$ in terms of the action of 
the transfer matrix of the dynamical 6-vertex model with antiperiodic boundary conditions.

\subsection{Dynamical-spin and pure spin representation spaces}

As described in \cite{Nic13a,LevNT15}, it is convenient, so as to simplify the commutation relations issued from \eqref{RTT-dyn}, to extend our spin operator algebra by introducing some dynamical operators $\tau$ and $\mathsf{T}_{\tau }^\pm$ which commute with the spin operators and which satisfy the commutation relations
\begin{equation}
\mathsf{T}_{\tau }^\pm \tau =(\tau \pm \eta )\mathsf{T}_{\tau}^\pm,
\label{Dyn-op-comm}
\end{equation}
and to define a new monodromy matrix incorporating these dynamical operators as
\begin{equation}\label{mon-op}
\mathcal{M}_{0}(\lambda )
\equiv \mathsf{M}_{0}(\lambda |\tau )\, \mathsf{T}_{\tau }^{\sigma_{0}^{z}}
\equiv \begin{pmatrix}
\mathcal{A}(\lambda  ) & \mathcal{B}(\lambda  ) \\ 
\mathcal{C}(\lambda  ) & \mathcal{D}(\lambda  )
\end{pmatrix}_{\! [0]}.
\end{equation}
The advantage of this formulation is that the operator entries of \eqref{mon-op} satisfy simpler commutation relations than the operators entries of \eqref{mon-6VD}, given by the quadratic relation
\begin{equation}\label{RTT-op}
R_{00'}(\lambda _{00'}|\tau +\eta \mathsf{S})\,
\mathcal{M}_{0}(\lambda _{0} )\,\mathcal{M}_{0'}(\lambda _{0'} )
 =\mathcal{M}_{0'}(\lambda _{0'} )\,\mathcal{M}_{0}(\lambda _{0} )\,R_{00'}(\lambda _{00'}|\tau ).
\end{equation}
It also satisfies the following inversion formula:
\begin{equation}\label{inv-mon}
  \mathcal{M}_0(\lambda)\cdot \sigma_0^y\,\mathcal{M}_0(\lambda-\eta)^{t_0}\,\sigma_0^y
  =e^{-i\mathsf{y}\eta\mathsf{S}}\frac{\theta(\tau)}{\theta(\tau+\eta\mathsf{S})}\,\mathrm{det}_q M(\lambda),
\end{equation}
in terms of the quantum determinant $\mathrm{det}_q M(\lambda)=\mathsc{a}(\lambda)\,\mathsc{d}(\lambda-\eta)$.

The operator entries of \eqref{mon-op} are then though of as acting on some dynamical-spin space $\mathbb{D}_{\mathsf{(6VD)},\mathsf{N}}\equiv \mathbb{V}_\mathsf{N}\otimes\mathbb{D}$, where $\mathbb{D}$ is an infinite-dimensional representation space of the dynamical operator algebra \eqref{Dyn-op-comm} with left (covectors) and right (vectors) $\tau $-eigenbasis respectively defined as
\begin{equation}
\langle t(a)|\equiv \langle t(0)|\mathsf{T}_{\tau }^{-a},\qquad
|t(a)\rangle \equiv \mathsf{T}_{\tau }^{a}|t(0)\rangle ,\qquad
\forall a\in \mathbb{Z},  \label{t-dyn-sp}
\end{equation}
such that
\begin{equation} \label{Dynamical-spectrum-1}
\langle t(a)|\tau =t(a)\langle t(a)|,\quad\
\tau |t(a)\rangle =t(a)|t(a)\rangle ,
 \qquad 
t(a) \equiv -\eta a+t_0,\quad
\forall a\in \mathbb{Z},
\end{equation}
with the normalization $\langle t(a)|t(b)\rangle=\delta _{a,b},\ \forall a,b\in \mathbb{Z}$.
In this paper, as in \cite{LevNT15}, we fix the value of the global shift $t_0$ to be, in terms of $\mathsf{N}$, $\mathsf{y}$, and of some additional parameter $\mathsf{x}\in\{0,1\}$,
\begin{equation}\label{Dynamical-spectrum-2}
 t_0=  -\frac{\eta }{2}\mathsf{N}  +\mathsf{x}\frac{\pi }{2}+\mathsf{y}\frac{\pi}{2}\omega,
\end{equation}
with the condition $(\mathsf{x},\mathsf{y})\not=(0,0)$ if $\mathsf{N}$ is even, and we denote the corresponding left and right representation spaces\footnote{As in \cite{LevNT15}, we may denote by a subscript $\mathcal{R}$ the representation spaces for the dynamical and spin operators (i.e. for instance $\mathbb{D}_{\mathsf{(6VD)},\mathsf{N}}^\mathcal{R}\equiv \mathbb{D}_{\mathsf{(6VD)},\mathsf{N}}$, $\mathbb{V}_\mathsf{N}^\mathcal{R}\equiv \mathbb{V}_\mathsf{N}, \ldots$), and by a subscript $\mathcal{L}$ their restricted dual spaces that we shall call left representation spaces.} as $\mathbb{D}^{\mathcal{L}/\mathcal{R}}\equiv \mathbb{D}_{(\mathsf{x},\mathsf{y}),\mathsf{N}}^{\mathcal{L}/\mathcal{R}}$.
A $\sigma_n^z$-eigenbasis in the local spin space $V_n^{\mathcal{L}/\mathcal{R}}$, $n\in \{1,\ldots,\mathsf{N}\}$, is given by the states $\langle n,h_{n}|$ (resp. $ |n,h_{n}\rangle$), $h_n\in\{0,1\}$, such that
\begin{equation}
\langle n,h_{n}|\, \sigma _{n}^{z}=(1-2h_{n})\, \langle n,h_{n}|,
\qquad\text{resp.}\quad
\sigma _{n}^{z}\, |n,h_{n}\rangle =(1-2h_{n})\, |n,h_{n}\rangle ,
\end{equation}
with $\langle n,h_{n}|n,h_{n}^{\prime }\rangle =\delta _{h_{n},h_{n}^{\prime}}$.
Hence a natural basis of $\mathbb{D}_{\mathsf{(6VD)},\mathsf{N}}^{\mathcal{L}}$ (respectively of $\mathbb{D}_{\mathsf{(6VD)},\mathsf{N}}^{\mathcal{R}}$) is provided by the vectors
\begin{equation}\label{dyn-spin-basis}
(\otimes _{n=1}^{\mathsf{N}}\langle n,h_{n}|)\otimes \langle t(a)|,
\qquad
\text{resp.}\quad
(\otimes _{n=1}^{\mathsf{N}}|n,h_{n}\rangle )\otimes |t(a)\rangle ,
\end{equation}
obtained by tensoring  common eigenstates of the commuting operators $\tau $ and $\sigma _{n}^{z}$, $1\le n\le \mathsf{N}$, with the following scalar product:
\begin{equation}
\big(\otimes _{n=1}^{\mathsf{N}}|n,h_{n}\rangle \otimes |t(a)\rangle ,
\otimes_{n=1}^{\mathsf{N}}|n,h_{n}^{\prime }\rangle \otimes |t(a^{\prime })\rangle\big)
=\delta _{a,a^{\prime }}\prod_{n=1}^{\mathsf{N}}\delta_{h_{n},h_{n}^{\prime }}.
\end{equation}

Let us define the operator
\begin{equation}
\mathsf{S}_{\tau }\equiv \eta \mathsf{S}+2\tau \in \End(\mathbb{D}_{\mathsf{(6VD)},\mathsf{N}}).
\end{equation}
For each $r\in\mathbb{Z}$, we denote with $\mathbb{\bar{D}}_{\mathsf{(6VD)},\mathsf{N}}^{(r,\mathcal{L}/\mathcal{R}) }$ the $2^{\mathsf{N}}$-dimensional left and right linear eigenspaces of $\mathsf{S}_{\tau }$ corresponding to the eigenvalue $2r\eta +\mathsf{x}\pi+\mathsf{y}\pi\omega $, which are respectively  generated by the vectors
\begin{alignat}{2}
  &\big( \otimes _{n=1}^{\mathsf{N}}\langle n,h_{n}|\big) \otimes \langle t_{r,\mathbf{h}}|,
  \qquad 
  &&\mathbf{h}\equiv (h_1,\ldots, h_\mathsf{N})\in \{0,1\}^\mathsf{N},
   \label{DyS-basis-L}\\
  &\big( \otimes _{n=1}^{\mathsf{N}} | n,h_{n}\rangle\big) \otimes | t_{r,\mathbf{h}}\rangle,
  \qquad 
  &&\mathbf{h}\equiv (h_1,\ldots, h_\mathsf{N})\in \{0,1\}^\mathsf{N}, \label{DyS-basis-R}
\end{alignat}
where
\begin{equation}\label{t_h}
t_{r,\mathbf{h}} = -\frac{\eta }{2}\mathsf{s}_{\mathbf{h}}+\mathsf{x}\frac{\pi }{2}+\mathsf{y}\frac{\pi}{2}\omega+r\eta
=t_0+\sum_{k=1}^\mathsf{N} h_k +r\eta,
\qquad\text{with}\quad \mathsf{s}_{\mathbf{h}}= \sum_{k=1}^\mathsf{N}(1-2h_{k}).
\end{equation}
We recall that
\begin{proposition}
 For each $r\in\mathbb{Z}$, the $2^\mathsf{N}$-dimensional vector space $\mathbb{\bar{D}}_{\mathsf{(6VD)},\mathsf{N}}^{(r,\mathcal{L}/\mathcal{R}) }$ is invariant under the action of the operators $\mathsf{A}(\lambda|\tau)$, $\mathsf{D}(\lambda|\tau)$, $\mathcal{B}(\lambda)$, $\mathcal{C(\lambda)}$.
\end{proposition}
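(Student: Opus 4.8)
The plan is to identify $\mathbb{\bar{D}}_{\mathsf{(6VD)},\mathsf{N}}^{(r,\mathcal{L}/\mathcal{R})}$ with the eigenspace of the operator $\mathsf{S}_{\tau}=\eta\mathsf{S}+2\tau$ associated to the eigenvalue $2r\eta+\mathsf{x}\pi+\mathsf{y}\pi\omega$ --- which is precisely how it is defined --- and then to check that each of the four operators $\mathsf{A}(\lambda|\tau)$, $\mathsf{D}(\lambda|\tau)$, $\mathcal{B}(\lambda)$, $\mathcal{C}(\lambda)$ commutes with $\mathsf{S}_{\tau}$; invariance of every $\mathsf{S}_{\tau}$-eigenspace, in particular of $\mathbb{\bar{D}}_{\mathsf{(6VD)},\mathsf{N}}^{(r,\mathcal{L}/\mathcal{R})}$, is then immediate. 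The whole argument is identical for the left and right representation spaces (the operators acting on covectors by right multiplication), so I would phrase it for the right space only.

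First I would record the ``weight'' properties of the entries of the dynamical monodromy matrix \eqref{mon-6VD}. The dynamical $R$-matrix \eqref{R-6VD} has the six-vertex sparsity pattern, i.e.\ it conserves $\sigma^z\otimes 1+1\otimes\sigma^z$; the shifts of the dynamical variable occurring inside the ordered product \eqref{mon-6VD} do not spoil this, being effected by the diagonal operators $\eta\sigma_a^z$. The standard ice-rule bookkeeping applied to \eqref{mon-6VD} therefore yields, at every fixed value of the dynamical variable, $[\mathsf{S},\mathsf{A}(\lambda|t)]=[\mathsf{S},\mathsf{D}(\lambda|t)]=0$, $[\mathsf{S},\mathsf{B}(\lambda|t)]=-2\,\mathsf{B}(\lambda|t)$ and $[\mathsf{S},\mathsf{C}(\lambda|t)]=2\,\mathsf{C}(\lambda|t)$. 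Replacing the dynamical variable $t$ by the operator $\tau$ preserves these relations, since $\tau$ acts on the other tensor factor $\mathbb{D}$ and hence commutes with $\mathsf{S}$ and with the operator $\mathsf{M}_0(\lambda|\tau)$ built from \eqref{mon-6VD}.

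Next I would combine this with the dynamical shift operators. Writing out the entries of \eqref{mon-op}, one has $\mathcal{B}(\lambda)=\mathsf{B}(\lambda|\tau)\,\mathsf{T}_{\tau}^{-}$ and $\mathcal{C}(\lambda)=\mathsf{C}(\lambda|\tau)\,\mathsf{T}_{\tau}^{+}$, whereas $\mathsf{A}(\lambda|\tau)$ and $\mathsf{D}(\lambda|\tau)$ carry no dynamical shift. All four of $\mathsf{A}(\lambda|\tau)$, $\mathsf{B}(\lambda|\tau)$, $\mathsf{C}(\lambda|\tau)$, $\mathsf{D}(\lambda|\tau)$ are built from the spin operators and from $\tau$ only --- they involve no $\mathsf{T}_{\tau}^{\pm}$ --- and $\tau$ commutes both with the spin operators and with itself, so they all commute with $\tau$; together with the previous paragraph this gives at once $[\mathsf{S}_{\tau},\mathsf{A}(\lambda|\tau)]=[\mathsf{S}_{\tau},\mathsf{D}(\lambda|\tau)]=0$. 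For the off-diagonal operators I would use, in addition, $[\tau,\mathsf{T}_{\tau}^{\pm}]=\mp\eta\,\mathsf{T}_{\tau}^{\pm}$, which follows from \eqref{Dyn-op-comm}: for $\mathcal{B}(\lambda)=\mathsf{B}(\lambda|\tau)\,\mathsf{T}_{\tau}^{-}$ one finds $[\eta\mathsf{S},\mathcal{B}(\lambda)]=-2\eta\,\mathcal{B}(\lambda)$ while $[2\tau,\mathcal{B}(\lambda)]=2\eta\,\mathcal{B}(\lambda)$, so the two contributions cancel, and likewise $[\eta\mathsf{S},\mathcal{C}(\lambda)]=2\eta\,\mathcal{C}(\lambda)=-[2\tau,\mathcal{C}(\lambda)]$; hence $[\mathsf{S}_{\tau},\mathcal{B}(\lambda)]=[\mathsf{S}_{\tau},\mathcal{C}(\lambda)]=0$ as well. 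Finally, from \eqref{Dynamical-spectrum-2} and \eqref{t_h} one checks that the $2^{\mathsf{N}}$ vectors \eqref{DyS-basis-L}--\eqref{DyS-basis-R} are exactly the common eigenvectors of $\tau$ and of the $\sigma_n^z$ whose $\mathsf{S}_{\tau}$-eigenvalue equals $2r\eta+\mathsf{x}\pi+\mathsf{y}\pi\omega$, so that they span the whole $\mathsf{S}_{\tau}$-eigenspace for that value; the commutators just established then yield the claimed invariance.

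There is no genuine obstacle here. The two points requiring a little care are: (i) checking that the six-vertex weight-conservation argument really survives the presence of the dynamical shifts in \eqref{mon-6VD} --- it does, precisely because those shifts are effected by diagonal operators; and (ii) keeping track of the signs so that the $\eta\mathsf{S}$ and $2\tau$ contributions to the commutators with $\mathcal{B}(\lambda)$ and $\mathcal{C}(\lambda)$ cancel exactly. This cancellation is in fact the very reason for introducing the combination $\mathsf{S}_{\tau}$ together with the dressing of the monodromy matrix by the shift operators $\mathsf{T}_{\tau}^{\pm}$ in \eqref{mon-op}.
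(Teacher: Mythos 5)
Your proof is correct. The paper itself states this proposition without proof (it is introduced by ``We recall that'' and imported from \cite{Nic13a,LevNT15}), and your argument --- the ice-rule weight decomposition of the dynamical monodromy matrix entries combined with the exact cancellation between the $\eta\mathsf{S}$ and $2\tau$ contributions to the commutator with $\mathsf{S}_\tau$, which is precisely the reason the dressed monodromy matrix \eqref{mon-op} is introduced --- is the standard justification underlying the recalled result; your sign bookkeeping for $[\mathsf{S},\mathsf{B}]$, $[\mathsf{S},\mathsf{C}]$ and $[\tau,\mathsf{T}_\tau^{\pm}]$ checks out, as does the identification of $\mathbb{\bar{D}}_{\mathsf{(6VD)},\mathsf{N}}^{(r,\mathcal{L}/\mathcal{R})}$ with the full $\mathsf{S}_\tau$-eigenspace via \eqref{t_h}.
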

In particular, $\mathbb{\bar{D}}_{\mathsf{(6VD)},\mathsf{N}}^{(r=0) }$ corresponds to the physical space of states of the SOS model with antiperiodic boundary conditions, as studied in \cite{Nic13a,LevNT15}.
We recall here that the definition of this model depends on the values of the two parameters $\mathsf{x},\mathsf{y}\in\{0,1\}$ appearing in the $R$-matrix \eqref{R-6VD} and in the global shift \eqref{Dynamical-spectrum-2} of the dynamical parameter, hence we may sometimes call it the $\mathsf{(x,y)}$-dynamical 6-vertex model.

We also define the following homomorphism $\mathbf{P}^{\mathcal{L/R}}$ from the representation space $\vspace{-1.5mm}\mathbb{D}_{\mathsf{(6VD)},\mathsf{N}}^{\mathcal{L/R}}$ of the dynamical Yang-Baxter algebra to the pure spin space of states $\mathbb{V}_\mathsf{N}^{\mathcal{L/R}}$ of the XYZ model by its action on the basis vectors \eqref{dyn-spin-basis}:
\begin{align}
  &\mathbf{P}^{\mathcal{R}}:
  \big( \otimes _{n=1}^{\mathsf{N}} | n,h_{n}\rangle\big) \otimes | t(a)\rangle
  \mapsto \big( \otimes _{n=1}^{\mathsf{N}} | n,h_{n}\rangle\big) ,
  \label{def-PR}\\
  &\mathbf{P}^{\mathcal{L}}:
  \big( \otimes _{n=1}^{\mathsf{N}} \langle n,h_{n} |\big) \otimes \langle t(a) |
  \mapsto \big( \otimes _{n=1}^{\mathsf{N}} \langle n,h_{n} | \big).
  \label{def-PL}
\end{align}
By definition, we have that, for each $\bra{\mathbf{v}}\in\mathbb{D}_{\mathsf{(6VD)},\mathsf{N}}^\mathcal{L}$, respectively $\ket{\mathbf{v}}\in\mathbb{D}_{\mathsf{(6VD)},\mathsf{N}}^\mathcal{R}$, 
\begin{equation}\label{shift-P}
  \mathbf{P}^{\mathcal{L}} \big( \bra{\mathbf{v}} \mathsf{T}_\tau^\pm \big)
  =  \mathbf{P}^{\mathcal{L}} \big( \bra{\mathbf{v}}  \big),
  \quad \text{resp.}\quad
  \mathbf{P}^{\mathcal{R}} \big(  \mathsf{T}_\tau^\pm \ket{\mathbf{v}}\big)
  =  \mathbf{P}^{\mathcal{R}} \big( \ket{\mathbf{v}}  \big).
\end{equation}
Moreover, for each $r\in\mathbb{Z}$, the restriction $\mathbf{P}^{(r,\mathcal{L/R})}$ of  $\mathbf{P}^\mathcal{L/R}$ to the subspace  $\mathbb{\bar{D}}_{\mathsf{(6VD)},\mathsf{N}}^{(r,\mathcal{L}/\mathcal{R}) }$ of $\mathbb{D}_{\mathsf{(6VD)},\mathsf{N}}^{\mathcal{L/R}}$ defines an isomorphism from $\mathbb{\bar{D}}_{\mathsf{(6VD)},\mathsf{N}}^{(r,\mathcal{L}/\mathcal{R}) }$ to $\mathbb{V}_\mathsf{N}^{\mathcal{L/R}}$, and its action on any vector  $\bra{\mathbf{v}}\in\mathbb{\bar{D}}_{\mathsf{(6VD)},\mathsf{N}}^{(r,\mathcal{L})}$, respectively $\ket{\mathbf{v}}\in\mathbb{\bar{D}}_{\mathsf{(6VD)},\mathsf{N}}^{(r,\mathcal{R})}$, is given by
\begin{equation}\label{act-Pr}
 \mathbf{P}^{(r,\mathcal{L})} (\bra{\mathbf{v} })
  = \bra{\mathbf{v} }\left( \sum_{a=0}^\mathsf{N} \ket{ t_{r,\mathbf{0}} +\eta a} \right),
  \ \ \text{resp.}\ \
  \mathbf{P}^{(r,\mathcal{R})} (\ket{\mathbf{v} })
  = \left( \sum_{a=0}^\mathsf{N} \langle\, t_{r,\mathbf{0}} +\eta a \,|\right) \ket{\mathbf{v} }.
\end{equation}
In particular, the dynamical-spin space of states $\mathbb{\bar{D}}_{\mathsf{(6VD)},\mathsf{N}}^{(0,\mathcal{L/R})}$ of the dynamical 6-vertex model with antiperiodic boundary conditions is isomorphic, by means of the above mapping,  to the pure spin quantum space of states $\mathbb{V}_\mathsf{N}^{\mathcal{L/R}}$ of the XYZ model. This is a clear advantage with respect to the study of the periodic XYZ model by means of its relation to the periodic dynamical 6-vertex model, since the latter has a space of states which has not the same dimension as the space of states of the former.

\begin{rem}
In the following, we shall simply, in accordance with the definition \eqref{def-PR}-\eqref{def-PL}, use the following notation: for any  $\ket{\mathbf{v}}\in\mathbb{D}_{\mathsf{(6VD)},\mathsf{N}}^{(r,\mathcal{R})}$, respectively $\bra{\mathbf{v}}\in\mathbb{D}_{\mathsf{(6VD)},\mathsf{N}}^{(r,\mathcal{L})}$,
\begin{equation}
   \mathbf{P}^{(r,\mathcal{R})}\big(\ket{\mathbf{v}}\big)
   =\mathbf{P}^{( r)}\,\ket{\mathbf{v}},
   \qquad  \text{resp.}\quad
   \mathbf{P}^{(r,\mathcal{L})}\big(\bra{\mathbf{v}}\big)=\bra{\mathbf{v}}\,\big[\mathbf{P}^{( r )}\big]^{-1}.
\end{equation}
\end{rem}

\subsection{The vertex-IRF transformation as an isomorphism of vector spaces}

We shall now prove that, for $\mathsf{(x,y)}\not=(0,0)$, the vertex-IRF transformation \eqref{S_q} that relates the monodromy matrices of the 8-vertex and  $\mathsf{(x,y)}$-dynamical 6-vertex models is bijective.
More precisely,

\begin{proposition}
\label{prop-V-IRF-iso}
Let $r\in\mathbb{Z}$.
On any vector of $\ket{\mathbf{v}}\in\mathbb{\bar{D}}_{\mathsf{(6VD)},\mathsf{N}}^{(r,\mathcal{R})}$, 
one has
\begin{equation}
\mathbf{P}^{\mathcal{R}}\big(S_{q}(\tau )\ket{\mathbf{v}}\big)
=\mathbf{S}^{(r )}\,\mathbf{P}^{(r )}\, \ket{\mathbf{v}}, 
\label{P-S}
\end{equation}
where the action of the operator $\mathbf{S}^{(r )}\in\End(\mathbb{V}_\mathsf{N})$ is defined on the local spin basis vectors of $\mathbb{V}_\mathsf{N}$ as
\begin{align}
   &\mathbf{S}^{(r )}\Big(\underset{n=1}{\overset{\mathsf{N}}{\otimes}}\!\ket{n,h_{n}}\Big)
      = 
      S_q(t_{r,\mathbf{h}})
      \ \underset{n=1}{\overset{\mathsf{N}}{\otimes}}\!\ket{n,h_{n}} , \label{S-spin-R}
\end{align}
where $S_q(t_{r,\mathbf{h}})$ stands for the operator \eqref{S_q} evaluated at the value \eqref{t_h}.

When $(\mathsf{x},\mathsf{y})\not=(0,0)$, $\mathbf{S}^{(r )}$ is an automorphism of $\mathbb{V}_\mathsf{N}$.
\end{proposition}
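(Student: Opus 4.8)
\medskip
\noindent\emph{Proof strategy.}
Since $\mathbf{S}^{(r)}$ is an endomorphism of the finite-dimensional space $\mathbb{V}_\mathsf{N}$, it suffices to show $\det\mathbf{S}^{(r)}\neq0$. The starting point is to write the action of $\mathbf{S}^{(r)}$ on the local spin basis in fully explicit, factorized form. In the product \eqref{S_q} defining $S_q(t)$, the dynamical shift of the $a$-th factor is the operator $\eta\sum_{b<a}\sigma_b^z$, built only from sites $1,\dots,a-1$; hence, when $S_q(t_{r,\mathbf{h}})$ acts (from the right) on a basis vector $\otimes_n\ket{n,h_n}$, the factor $S_a$ acts on site $a$ before any factor $S_b$ with $b<a$ has modified the first $a-1$ sites, so that shift simply evaluates to the number $\eta\sum_{b<a}(1-2h_b)$. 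Consequently
\[
   \mathbf{S}^{(r)}\Big(\underset{n=1}{\overset{\mathsf{N}}{\otimes}}\ket{n,h_n}\Big)
   =\underset{a=1}{\overset{\mathsf{N}}{\otimes}}\,\big(S(\xi_a|v_a(\mathbf{h}))\,\ket{a,h_a}\big),
   \qquad
   v_a(\mathbf{h})=t_{r,\mathbf{h}}+\eta\sum_{b=1}^{a-1}(1-2h_b)\in t_0+r\eta+\eta\mathbb{Z},
\]
so that, in the $\sigma^z$-eigenbasis, $\big(\mathbf{S}^{(r)}\big)_{\mathbf{h}',\mathbf{h}}=\prod_{a=1}^{\mathsf{N}}\big[S(\xi_a|v_a(\mathbf{h}))\big]_{h'_a,h_a}$. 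In particular each column of $\mathbf{S}^{(r)}$ is a pure tensor product, whose $a$-th factor is, up to a nonzero scalar, the vector $\big(\theta_2(v_a(\mathbf{h})\mp\xi_a|2\omega),\,\theta_3(v_a(\mathbf{h})\mp\xi_a|2\omega)\big)$ (sign $-$ if $h_a=0$, sign $+$ if $h_a=1$).

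The next step is to compute $\det\mathbf{S}^{(r)}$ from this factorized form, which I would do by recursion on $\mathsf{N}$. It is convenient to observe that $v_a(\mathbf{h})=\tilde v_a(\mathbf{h})+\eta\sum_{k=1}^{\mathsf{N}}h_k$ with $\tilde v_a(\mathbf{h})=t_0+r\eta+\eta(a-1)-2\eta\sum_{b<a}h_b$ depending only on $h_1,\dots,h_{a-1}$: were it not for the global shift $\eta\sum_k h_k$, a change of basis on site $1$ (legitimate, since $\det S(\xi_1|t_0+r\eta)\neq0$) would make the first tensor factor trivial, turning $\mathbf{S}^{(r)}$ block-diagonal along $h_1\in\{0,1\}$, each block being an $(\mathsf{N}-1)$-site object of the same type, whence a clean induction. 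The global shift couples all sites; it is handled by splitting the columns according to the partial spin sums and using the quasi-periodicity of the columns of $S(\lambda|t)$ under $t\mapsto t+\eta$ (which relates $S(\lambda|t+\eta)$ to $S(\lambda|t)$ up to elementary factors), so that at each step the relevant $2\times2$ minors collapse through the standard ``Wronskian'' theta-identity
\[
   \theta_2(u|2\omega)\,\theta_3(v|2\omega)-\theta_3(u|2\omega)\,\theta_2(v|2\omega)
   =c\;\theta_1\!\Big(\tfrac{u-v}{2}\Big|\omega\Big)\,\theta_1\!\Big(\tfrac{u+v}{2}\Big|\omega\Big),\qquad c\neq0 .
\]
The outcome should be a closed product formula: $\det\mathbf{S}^{(r)}$ equals a nonzero constant times a product of factors $\theta_1(\cdot|\omega)$ of two types — ``inhomogeneity'' factors, of the form $\theta_1(\xi_a-\xi_b+\epsilon\eta|\omega)$ with $\epsilon\in\{-1,0,1\}$ and $\theta_1(\xi_a+(m+\tfrac12)\eta|\omega)$ with $m\in\mathbb{Z}$, and ``dynamical'' factors of the form $\theta_1\big(t_0+(m+\tfrac12)\eta\big|\omega\big)$ with $m\in\mathbb{Z}$.

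It then remains to check that every factor is nonzero. The inhomogeneity factors are nonzero by the hypotheses \eqref{cond-inh0}--\eqref{cond-inh} and the assumed genericity of the $\xi_n$ and of $\eta$. For the dynamical factors one substitutes $t_0=-\tfrac\eta2\mathsf{N}+\mathsf{x}\tfrac\pi2+\mathsf{y}\tfrac\pi2\omega$: their arguments then lie in $\mathsf{x}\tfrac\pi2+\mathsf{y}\tfrac\pi2\omega+k\eta$ with $k\in\mathbb{Z}$ if $\mathsf{N}$ is odd and $k\in\tfrac12+\mathbb{Z}$ if $\mathsf{N}$ is even, and on such points $\theta_1(\cdot|\omega)$ coincides, up to a nonzero exponential factor, with one of $\theta_2,\theta_3,\theta_4$ (selected by $(\mathsf{x},\mathsf{y})$) evaluated at $k\eta$. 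For $(\mathsf{x},\mathsf{y})\neq(0,0)$ these are all nonzero under the genericity of $\eta$; and in the ``physical'' case $r=0$ with vanishing integer shift one simply gets $\theta_1(\mathsf{x}\tfrac\pi2+\mathsf{y}\tfrac\pi2\omega|\omega)\in\{\theta_2(0|\omega),\theta_3(0|\omega),\theta_4(0|\omega)\}\neq0$, with no genericity needed at all. Hence $\det\mathbf{S}^{(r)}\neq0$. It is exactly here that $(\mathsf{x},\mathsf{y})\neq(0,0)$ is used: for $(\mathsf{x},\mathsf{y})=(0,0)$ the shift $\mathsf{x}\tfrac\pi2+\mathsf{y}\tfrac\pi2\omega$ vanishes and the dynamical factor with trivial integer shift degenerates to $\theta_1(0|\omega)=0$, so that $\mathbf{S}^{(r)}$ is no longer invertible — consistently with the exclusion of the periodic even case and with the separate treatment of the periodic odd case.

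The step I expect to be the main obstacle is the recursive determinant computation of the second paragraph: although every matrix element factorizes over sites, the global shift $\eta\sum_k h_k$ genuinely couples the $2^\mathsf{N}$ columns, so reorganizing them — via quasi-periodicity in $t$ and a careful sorting by partial spin sums — into a product of $2\times2$ Wronskians while tracking exactly which theta-function arguments appear is the delicate part. (Part of the work can be side-stepped: since the full gauge operator $S_q(\tau)$ is invertible on the dynamical-spin space whenever all the relevant $\det S(\xi_a|\cdot)$ are nonzero, one obtains for free that $\mathbf{S}^{(r)}$ is injective on each subspace of $\mathbb{V}_\mathsf{N}$ of fixed total spin; the determinant computation is then precisely what upgrades this to injectivity across these subspaces.)
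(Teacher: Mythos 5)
Your setup is sound and matches the paper's: the first half of the proposition (\eqref{P-S} and \eqref{S-spin-R}) is indeed a direct check on the basis \eqref{DyS-basis-R}, your factorized form of the columns of $\mathbf{S}^{(r)}$ is correct (it is equivalent to the paper's \eqref{act-S}, with your $v_a(\mathbf{h})$ related to $\hat{t}^{\,(a)}_{r,\mathbf{h}}$ of \eqref{thatj} by $v_a=\hat{t}^{\,(a)}_{r,\mathbf{h}}-\tfrac{\eta}{2}(1-2h_a)$), and your diagnosis of where $(\mathsf{x},\mathsf{y})\neq(0,0)$ enters — the $\eta$-coefficient of the dynamical argument can vanish, and then only the shift $\mathsf{x}\tfrac{\pi}{2}+\mathsf{y}\tfrac{\pi}{2}\omega$ keeps the theta factor away from its zeros — is exactly the paper's. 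But the invertibility claim is not actually proved. The whole content of the second half of the proposition is the passage from ``each column is a pure tensor built from columns of invertible $2\times2$ matrices'' to ``$\det\mathbf{S}^{(r)}\neq0$'', and that is precisely the step you leave as a conjectured product formula (``the outcome should be\dots''). Since the dynamical argument of the $a$-th tensor factor depends on $h_b$ for $b\neq a$ (through both the partial sum and the total spin in $t_{r,\mathbf{h}}$), the $2^{\mathsf{N}}\times2^{\mathsf{N}}$ matrix is not a tensor product of $2\times2$ blocks, and the reorganization by partial spin sums that you sketch is the entire difficulty; you flag it yourself as the expected obstacle. Your closing observation — injectivity on each fixed-total-spin subspace, because there $t_{r,\mathbf{h}}$ is a constant and $S_q(t)$ at fixed numerical $t$ is invertible — is correct but does not combine the sectors, which is exactly where the problem lives.

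The paper closes this gap without ever computing a global determinant. It introduces the partial products $\mathbf{S}^{(r,j)}$ of \eqref{Sj} and inducts downward from $j=\mathsf{N}+1$ to $j=1$: assuming $\mathbf{S}^{(r,j+1)}$ is an isomorphism, the vectors it produces on sites $j+1,\dots,\mathsf{N}$ form a basis, and pairing a putative kernel vector of $\mathbf{S}^{(r,j)}$ against the dual of that basis (tensored with the standard basis on sites $1,\dots,j-1$) isolates, for each choice of the remaining indices, a $2\times2$ system at site $j$ whose matrix is $S_j(\xi_j\,|\,\hat{t}^{\,(j)}_{r,\mathbf{h}})$. This matrix is invertible by \eqref{invert-a-cond}, because $\hat{t}^{\,(j)}_{r,\mathbf{h}}$ does not depend on $h_j$ and never lies in $\pi\mathbb{Z}+\pi\omega\mathbb{Z}$ when $(\mathsf{x},\mathsf{y})\neq(0,0)$ and $\eta$ is generic. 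Only $2\times2$ determinants are ever evaluated. If you wish to keep the determinant route you must actually derive and verify the closed product formula for $\det\mathbf{S}^{(r)}$; otherwise the one-site-at-a-time induction is the efficient way to finish.
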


\begin{rem}
Note that, even if not explicitly underlined, this operator $\mathbf{S}^{(r )}$ depends on the values of $\mathsf{x}$ and $\mathsf{y}$ through the definition \eqref{mat-S} of $S(\lambda|t)$ and  the value \eqref{t_h} of $t_{r,\mathbf{h}}$.
\end{rem}

\begin{proof}
It is enough to consider the action on the generic elements \eqref{DyS-basis-R} of the dynamical-spin basis of $\mathbb{\bar{D}}_{\mathsf{(6VD)},\mathsf{N}}^{(r,\mathcal{R})}$ to prove the
formula \eqref{P-S} and the characterization \eqref{S-spin-R}.

Let us now define, for $j\in\{1,\ldots,N\}$, the operators $\mathbf{S}^{(r,j)}\in \End(\mathbb{V}_{\mathsf{N}})$ by their action on the generic basis elements $\otimes_{n=1}^{\mathsf{N}}|n,h_{n}\rangle $:
\begin{multline}\label{Sj}
  \mathbf{S}^{(r,j)}\, \Big(\underset{n=1}{\overset{\mathsf{N}}{\otimes}}\!\ket{n,h_{n}}\Big)
  \equiv
  S_{j}\Big(\xi_{j}\Big| t_{r,\mathbf{h}}+\eta \sum_{n=1}^{j-1}\sigma_{n}^{z}\Big)\,
  S_{j+1}\Big(\xi _{j+1} \Big| t_{r,\mathbf{h}}+\eta \sum_{n=1}^{j}\sigma_{n}^{z}\Big)\ldots
  \\
  \ldots
  S_{\mathsf{N}}\Big(\xi _{\mathsf{N}} \Big| t_{r,\mathbf{h}}+\eta\sum_{n=1}^{\mathsf{N}-1}\sigma _{n}^{z}\Big)\,
  \Big(\underset{n=1}{\overset{\mathsf{N}}{\otimes}}\!\ket{n,h_{n}}\Big) ,
\end{multline}
and $\mathbf{S}^{(r,\mathsf{N}+1)}\equiv\mathrm{Id}$.
By definition, we have that $\mathbf{S}^{(r)} = \mathbf{S}^{(r,1)}$, and we want to show by induction on $j$ that, for $\mathsf{(x,y)}\not=0$ and for all $j\in\{1,\ldots,\mathsf{N}+1\}$, $\mathbf{S}^{(r,j)}$ is an isomorphism.

It is clearly the case for $\mathbf{S}^{(r,\mathsf{N}+1)}=\mathrm{Id}$.
Let us therefore assume that, for some $j\in\{1,\ldots,\mathsf{N}\}$, $\mathbf{S}^{(r,j+1)}$ is an isomorphism.  Note that, by definition, $\mathbf{S}^{(r,j+1)}$ acts trivially on  $V_{1}\otimes\ldots\otimes V_{j}$, so that we can write
\begin{equation}
    \mathbf{S}^{(r,j+1)}\, \Big(\underset{n=1}{\overset{\mathsf{N}}{\otimes}}\!\ket{n,h_{n}}\Big)
    = \Big(\underset{n=1}{\overset{j}{\otimes}}\!\ket{n,h_{n}}\Big)\otimes
       \ket{[h_{j+1},\ldots,h_\mathsf{N}]_{\mathbf{S}^{(r,j+1)}} },
\end{equation}
where, by hypothesis, the vectors $\ket{[h_{j+1},\ldots,h_\mathsf{N}]_{\mathbf{S}^{(r,j+1)}} }$ form, for $(h_{j+1},\ldots,h_\mathsf{N})\in\{0,1\}^{\mathsf{N}-j}$, a basis of $ V_{j+1}\otimes\ldots\otimes V_\mathsf{N}$ (and $\bra{[h_{j+1},\ldots,h_\mathsf{N}]_{\mathbf{S}^{(r,j+1)}} }$ will denote the elements of its dual basis).
Then the action of $\mathbf{S}^{(r,j)}$ on the local spin basis vectors of $\mathbb{V}_\mathsf{N}$ is given as
\begin{align*}
    &\mathbf{S}^{(r,j)}\,
    \Big(\underset{n=1}{\overset{\mathsf{N}}{\otimes}}\!\ket{n,h_{n}}\Big)\\
    &\quad=\Big(\underset{n=1}{\,\overset{j-1}{\otimes}}\ket{n,h_{n}}\Big)
    \otimes \bigg[
    S_{j}\Big(\xi _{j} \Big| t_{r,\mathbf{h}}+\eta \sum_{n=1}^{j-1}(1-2h_n)\Big) \ket{j,h_{j}} \bigg]
     \otimes
       \ket{[h_{j+1},\ldots,h_\mathsf{N}]_{\mathbf{S}^{(r,j+1)}} }
       \\
    &\quad=\Big(\underset{n=1}{\,\overset{j-1}{\otimes}}\ket{n,h_{n}}\Big)\\
    &\quad
    \otimes 
       \begin{pmatrix}
e^{i\frac{\mathsf{y}}{2}(\hat{t}_{r,\mathbf{h}}^{\, (j)}-\frac{\eta}{2}-\xi_j)}\,
\theta_{2}( \hat{t}_{r,\mathbf{h}}^{\, (j)}-\frac{\eta}{2}-\xi_j |2\omega) 
& 
e^{i\frac{\mathsf{y}}{2}(\hat{t}_{r,\mathbf{h}}^{\, (j)}+\frac{\eta}{2}+\xi_j)}\,
\theta_{2}( \hat{t}_{r,\mathbf{h}}^{\, (j)}+\frac{\eta}{2}+\xi_j |2\omega) 
\\ 
e^{i\frac{\mathsf{y}}{2}(\hat{t}_{r,\mathbf{h}}^{\, (j)}-\frac{\eta}{2}-\xi_j)}\,
\theta_{3}(\hat{t}_{r,\mathbf{h}}^{\, (j)} -\frac{\eta}{2}-\xi_j |2\omega) 
& 
e^{i\frac{\mathsf{y}}{2}(\hat{t}_{r,\mathbf{h}}^{\, (j)}+\frac{\eta}{2}+\xi_j)}\,
\theta_{3}( \hat{t}_{r,\mathbf{h}}^{\, (j)}+\frac{\eta}{2}+\xi_j |2\omega)
\end{pmatrix}_{\!\! [j]}
\ket{j,h_j}
\\
 &\hspace{10cm}\otimes
       \ket{[h_{j+1},\ldots,h_\mathsf{N}]_{\mathbf{S}^{(r,j+1)}} },
\end{align*}
where we have defined
\begin{equation}\label{thatj}
   \hat{t}_{r,\mathbf{h}}^{\, (j)}
   \equiv \frac{\eta }{2}\sum_{k=1}^{j-1}(1-2h_{k})+\frac{\eta }{2}\sum_{k=j+1}^{\mathsf{N}}(2h_{k}-1)
   +\mathsf{x}\frac{\pi }{2}+\mathsf{y}\frac{\pi}{2}\omega+r\eta.
\end{equation}
Note that, for generic $\eta$ (i.e. incommensurable to $\pi$ and $\pi\omega$) and $(\mathsf{x},\mathsf{y})\not=(0,0)$, $\hat{t}_{r,\mathbf{h}}^{\, (j)}\notin \pi\mathbb{Z}+\pi\omega\mathbb{Z}$.
Since
\begin{multline}\label{invert-a-cond}
\det     \begin{pmatrix}
e^{i\frac{\mathsf{y}}{2}(\hat{t}_{r,\mathbf{h}}^{\, (j)}-\frac{\eta}{2}-\xi_j)}\,
\theta_{2}( \hat{t}_{r,\mathbf{h}}^{\, (j)}-\frac{\eta}{2}-\xi_j |2\omega) 
& 
e^{i\frac{\mathsf{y}}{2}(\hat{t}_{r,\mathbf{h}}^{\, (j)}+\frac{\eta}{2}+\xi_j)}\,
\theta_{2}( \hat{t}_{r,\mathbf{h}}^{\, (j)}+\frac{\eta}{2}+\xi_j |2\omega) 
\\ 
e^{i\frac{\mathsf{y}}{2}(\hat{t}_{r,\mathbf{h}}^{\, (j)}-\frac{\eta}{2}-\xi_j)}\,
\theta_{3}(\hat{t}_{r,\mathbf{h}}^{\, (j)} -\frac{\eta}{2}-\xi_j |2\omega) 
& 
e^{i\frac{\mathsf{y}}{2}(\hat{t}_{r,\mathbf{h}}^{\, (j)}+\frac{\eta}{2}+\xi_j)}\,
\theta_{3}( \hat{t}_{r,\mathbf{h}}^{\, (j)}+\frac{\eta}{2}+\xi_j |2\omega)
\end{pmatrix}\\
=e^{i\mathsf{y} \hat{t}_{r,\mathbf{h}}^{\, (j)}}\,
\theta (\xi_{j}+\eta /2)\, \theta (\hat{t}_{r,\mathbf{h}}^{\, (j)})
\neq 0,
\end{multline}
for generic $\eta$ and $(\mathsf{x},\mathsf{y})\not=(0,0)$, then $\mathbf{S}^{(r,j)}$ is also an isomorphism.
Indeed, let us consider the equation, for some vector $\ket{\mathbf{v} }\in\mathbb{V}_\mathsf{N}$,
\begin{equation}
   \mathbf{S}^{(r,j)}\, \ket{\mathbf{v} } =0,
   \qquad\text{with}\quad 
   \ket{\mathbf{v} }=\sum_{\mathbf{h}\in\{0,1\}^{\mathsf{N}}} c_{\mathbf{h} }\, \Big(\underset{n=1}{\overset{\mathsf{N}}{\otimes}}\!\ket{n,h_{n}}\Big).
\end{equation}
It reduces to an equation in the $j$-th local quantum space when we consider the
following matrix elements, for any $(k_1,\ldots,k_{j-1},k_{j+1},\ldots k_\mathsf{N})\in\{0,1\}^{\mathsf{N}-1}$,
\begin{align*}
  0 &= \Bigg(\underset{n=1}{\overset{j-1}{\otimes}}\!\bra{n,k_{n}}\otimes  \bra{[k_{j+1},\ldots,k_\mathsf{N}]_{\mathbf{S}^{(r,j+1)}} } \Bigg)\, \mathbf{S}^{(r,j)} \, \ket{\mathbf{v} }
  \\
        &=\sum_{h_{j}=0}^{1} c_{(k_1,\ldots,k_{j-1},h_j,k_{j+1},\ldots,k_{\mathsf{N}})}
         \\
        &
        \times\begin{pmatrix}
e^{i\frac{\mathsf{y}}{2}(\hat{t}_{r,\mathbf{k}}^{\, (j)}-\frac{\eta}{2}-\xi_j)}\,
\theta_{2}( \hat{t}_{r,\mathbf{k}}^{\, (j)}-\frac{\eta}{2}-\xi_j |2\omega) 
& 
e^{i\frac{\mathsf{y}}{2}(\hat{t}_{r,\mathbf{k}}^{\, (j)}+\frac{\eta}{2}+\xi_j)}\,
\theta_{2}( \hat{t}_{r,\mathbf{k}}^{\, (j)}+\frac{\eta}{2}+\xi_j |2\omega) 
\\ 
e^{i\frac{\mathsf{y}}{2}(\hat{t}_{r,\mathbf{k}}^{\, (j)}-\frac{\eta}{2}-\xi_j)}\,
\theta_{3}(\hat{t}_{r,\mathbf{k}}^{\, (j)} -\frac{\eta}{2}-\xi_j |2\omega) 
& 
e^{i\frac{\mathsf{y}}{2}(\hat{t}_{r,\mathbf{k}}^{\, (j)}+\frac{\eta}{2}+\xi_j)}\,
\theta_{3}( \hat{t}_{r,\mathbf{k}}^{\, (j)}+\frac{\eta}{2}+\xi_j |2\omega)
\end{pmatrix}_{\!\! [j]}
\ket{j,h_j},
\end{align*}
which, from the condition \eqref{invert-a-cond}, can be
satisfied if and only if 
\begin{equation}
c_{(k_1,\ldots,k_{j-1},0,k_{j+1},\ldots,k_{\mathsf{N}})}=c_{(k_1,\ldots,k_{j-1},1,k_{j+1},\ldots,k_{\mathsf{N}})}=0.
\end{equation}
Being  $(k_1,\ldots,k_{j-1},k_{j+1},\ldots k_\mathsf{N})\in\{0,1\}^{\mathsf{N}-1}$ completely arbitrary, the statement follows.
\end{proof}

\subsection{The relation between the quasi-periodic 8-vertex and the antiperiodic dynamical 6-vertex transfer matrices}

Using the vertex-IRF transformation \eqref{P-S}, it is possible to relate the $\mathsf{(x,y)}$-twisted 8-vertex transfer matrix \eqref{transfer} 
to the transfer matrix
\begin{equation}\label{6VD-antiT}
   \overline{\mathcal{T}}(\lambda)= \mathcal{B}(\lambda)+\mathcal{C}(\lambda)
\end{equation}
of the {\em antiperiodic} dynamical 6-vertex model defined on the dynamical-spin space of states $\mathbb{\bar{D}}_{\mathsf{(6VD)},\mathsf{N}}^{(0)}$ with the {\em same} choice
  of $\mathsf{(x,y)}$.
More precisely, we have the following result.

\begin{proposition}\label{th-act-T}
The $\mathsf{(x,y)}$-twisted 8-vertex transfer matrix \eqref{transfer}, combined with the vertex-IRF transformation, has the following action on the states $\ket{\mathbf{v}}$ of $\mathbb{\bar{D}}_{\mathsf{(6VD)},\mathsf{N}}^{(0,\mathcal{R})}$,
\begin{equation}
\mathsf{T}_{\mathsf{(x,y)}}^{\mathsf{(8V)}}(\lambda )\, S_{q}(\tau )\, \ket{\mathbf{v}}
  =(-1)^{\mathsf{x}}\, i^{\mathsf{xy}}\,
    \big[ S_{q}(\tau -\eta )\,\mathsf{C}(\lambda |\tau -\eta )
           +S_{q}(\tau +\eta )\,\mathsf{B}(\lambda |\tau+\eta )\big] \ket{\mathbf{v}},  
\label{right-act-T}
\end{equation}
which can be rewritten in terms of the action of the transfer matrix \eqref{6VD-antiT} of the antiperiodic $\mathsf{(x,y)}$-dynamical 6-vertex model as
\begin{equation}\label{right-act-T2}
   \mathsf{T}_{\mathsf{(x,y)}}^{\mathsf{(8V)}}(\lambda )\, \mathbf{S}^{(0)}\,\mathbf{P}^{(0)}\, \ket{\mathbf{v}}
  =(-1)^{\mathsf{x}}\, i^{\mathsf{xy}}\,  \mathbf{S}^{(0)}\,\mathbf{P}^{(0)}\,
  \overline{\mathcal{T}}(\lambda)\, \ket{\mathbf{v}}.
\end{equation}
\end{proposition}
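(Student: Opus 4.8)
The plan is to derive \eqref{right-act-T} from the monodromy‑level vertex--IRF identity \eqref{v-IRF-mon} and then obtain \eqref{right-act-T2} from it by projecting with $\mathbf{P}^{\mathcal{R}}$. First I would rewrite \eqref{v-IRF-mon} so that the operator monodromy $\mathcal{M}_{0}(\lambda)$ of \eqref{mon-op} appears: right‑multiplying by $\mathsf{T}_{\tau}^{\sigma_{0}^{z}}$ and using $S_{q}(\tau+\eta\sigma_{0}^{z})\,\mathsf{T}_{\tau}^{\sigma_{0}^{z}}=\mathsf{T}_{\tau}^{\sigma_{0}^{z}}\,S_{q}(\tau)$, which follows from \eqref{Dyn-op-comm}, turns the right‑hand side into $S_{q}(\tau)\,S_{0}(\lambda|\tau+\eta\mathsf{S})\,\mathcal{M}_{0}(\lambda)$ while the left‑hand side becomes $\mathsf{M}_{0}^{\mathsf{(8V)}}(\lambda)\,S_{0}(\lambda|\tau)\,\mathsf{T}_{\tau}^{\sigma_{0}^{z}}\,S_{q}(\tau)$. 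I would then left‑multiply by the twist $\mathsf{K}_{0}^{\mathsf{(x,y)}}$, take the partial trace over the auxiliary space $V_{0}$ (pulling $S_{q}(\tau)$, which acts trivially on $V_{0}$ and commutes with $\mathsf{K}_{0}^{\mathsf{(x,y)}}$, out of the trace on the right), and finally evaluate on a vector $\ket{\mathbf{v}}$ of $\mathbb{\bar{D}}_{\mathsf{(6VD)},\mathsf{N}}^{(0,\mathcal{R})}$.

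The heart of the computation is the use of the defining property of $\mathbb{\bar{D}}_{\mathsf{(6VD)},\mathsf{N}}^{(0,\mathcal{R})}$, namely that $\mathsf{S}_{\tau}=\eta\mathsf{S}+2\tau$ acts there as the scalar $\mathsf{x}\pi+\mathsf{y}\pi\omega$, so that $\tau+\eta\mathsf{S}=\mathsf{x}\pi+\mathsf{y}\pi\omega-\tau$ and hence $S_{0}(\lambda|\tau+\eta\mathsf{S})=S_{0}(\lambda|\mathsf{x}\pi+\mathsf{y}\pi\omega-\tau)$ on that subspace. Now $\theta_{2}$ and $\theta_{3}$ are even in their first argument, so the reflection $\tau\to-\tau$ exchanges the two columns of the matrix \eqref{mat-S} up to an explicit $\mathsf{y}$‑ and $\lambda$‑dependent phase; the shift $t\to t+\pi$ sends $\theta_{2}\to-\theta_{2}$, $\theta_{3}\to\theta_{3}$ and so produces a left factor $\sigma_{0}^{z}$ with a sign; the half‑period shift $t\to t+\pi\omega$ exchanges $\theta_{2}(z|2\omega)\leftrightarrow\theta_{3}(z|2\omega)$ up to the factor $e^{-i\pi\omega/2-iz}$ and so produces a left factor $\sigma_{0}^{x}$ together with a $\lambda$‑dependent diagonal on the right. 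Assembling these for the argument $\mathsf{x}\pi+\mathsf{y}\pi\omega-\tau$, the right factors collapse exactly to $\sigma_{0}^{x}$, the left factors to $(-1)^{\mathsf{x}}(-1)^{\mathsf{xy}}\,(\sigma_{0}^{x})^{\mathsf{y}}(\sigma_{0}^{z})^{\mathsf{x}}=(-1)^{\mathsf{x}}(-1)^{\mathsf{xy}}\,\mathsf{K}_{0}^{\mathsf{(x,y)}}$, and all the scalar exponential prefactors (including the $e^{\pm i\tau}$ coming from the prefactor $e^{i\mathsf{y}t/2}$ in \eqref{mat-S}) cancel, leaving only the constant $i^{\mathsf{xy}}$; that is, on $\mathbb{\bar{D}}_{\mathsf{(6VD)},\mathsf{N}}^{(0,\mathcal{R})}$ one has
\[
  S_{0}(\lambda|\tau+\eta\mathsf{S})=i^{\mathsf{xy}}\,(-1)^{\mathsf{x}}(-1)^{\mathsf{xy}}\;\mathsf{K}_{0}^{\mathsf{(x,y)}}\,S_{0}(\lambda|\tau)\,\sigma_{0}^{x}.
\]
Substituting this on the right‑hand side of the traced identity, using $(\mathsf{K}_{0}^{\mathsf{(x,y)}})^{2}=(-1)^{\mathsf{xy}}\,\mathrm{Id}$ and cyclicity of $\tr_{0}$, the two copies of $\mathsf{K}_{0}^{\mathsf{(x,y)}}$ cancel up to $(-1)^{\mathsf{xy}}$, the remaining signs combine to $(-1)^{\mathsf{x}}i^{\mathsf{xy}}$, and because $\tr_{0}[\sigma_{0}^{x}\,\mathcal{M}_{0}(\lambda)]=\mathcal{B}(\lambda)+\mathcal{C}(\lambda)$ the \emph{antidiagonal} combination $\overline{\mathcal{T}}(\lambda)$ of \eqref{6VD-antiT} is produced rather than $\mathsf{A}+\mathsf{D}$. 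Writing $\mathcal{B}(\lambda)=\mathsf{B}(\lambda|\tau)\mathsf{T}_{\tau}^{-1}$, $\mathcal{C}(\lambda)=\mathsf{C}(\lambda|\tau)\mathsf{T}_{\tau}^{+1}$ and reorganizing the shift operators then gives the two terms $S_{q}(\tau\mp\eta)\mathsf{C}(\lambda|\tau\mp\eta)$ and $S_{q}(\tau\pm\eta)\mathsf{B}(\lambda|\tau\pm\eta)$ of \eqref{right-act-T}, while on the left one checks, using \eqref{v-IRF-mon} once more together with the same quasi‑periodicity input, that the operator $\tr_{0}[\mathsf{K}_{0}^{\mathsf{(x,y)}}\mathsf{M}_{0}^{\mathsf{(8V)}}(\lambda)S_{0}(\lambda|\tau)\mathsf{T}_{\tau}^{\sigma_{0}^{z}}]$, acting on $S_{q}(\tau)\ket{\mathbf{v}}$, coincides with $\mathsf{T}_{\mathsf{(x,y)}}^{\mathsf{(8V)}}(\lambda)\,S_{q}(\tau)\ket{\mathbf{v}}$; this yields \eqref{right-act-T}.

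Finally, \eqref{right-act-T2} follows by applying $\mathbf{P}^{\mathcal{R}}$ to \eqref{right-act-T}. On the right‑hand side $\mathbf{P}^{\mathcal{R}}$ annihilates the shift operators by \eqref{shift-P}, so $\mathbf{P}^{\mathcal{R}}\big(S_{q}(\tau\pm\eta)\mathsf{B}(\lambda|\tau\pm\eta)\ket{\mathbf{v}}\big)=\mathbf{P}^{\mathcal{R}}\big(S_{q}(\tau)\mathcal{B}(\lambda)\ket{\mathbf{v}}\big)$ and likewise for the $\mathsf{C}$ term; since $\mathcal{B}(\lambda)\ket{\mathbf{v}}$ and $\mathcal{C}(\lambda)\ket{\mathbf{v}}$ lie in $\mathbb{\bar{D}}_{\mathsf{(6VD)},\mathsf{N}}^{(0,\mathcal{R})}$ (invariance of $\mathbb{\bar{D}}^{(r)}$ under $\mathcal{B},\mathcal{C}$), Proposition~\ref{prop-V-IRF-iso} turns each into $\mathbf{S}^{(0)}\mathbf{P}^{(0)}\mathcal{B}(\lambda)\ket{\mathbf{v}}$, resp.\ $\mathbf{S}^{(0)}\mathbf{P}^{(0)}\mathcal{C}(\lambda)\ket{\mathbf{v}}$, whose sum is $(-1)^{\mathsf{x}}i^{\mathsf{xy}}\,\mathbf{S}^{(0)}\mathbf{P}^{(0)}\,\overline{\mathcal{T}}(\lambda)\ket{\mathbf{v}}$; on the left‑hand side $\mathsf{T}_{\mathsf{(x,y)}}^{\mathsf{(8V)}}(\lambda)$ commutes with $\mathbf{P}^{\mathcal{R}}$ and Proposition~\ref{prop-V-IRF-iso} gives $\mathbf{P}^{\mathcal{R}}\big(S_{q}(\tau)\ket{\mathbf{v}}\big)=\mathbf{S}^{(0)}\mathbf{P}^{(0)}\ket{\mathbf{v}}$, which is exactly \eqref{right-act-T2}. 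The main obstacle is the bookkeeping in the quasi‑periodicity step: keeping track of all the sign and exponential factors coming from the theta‑function identities so that they collapse to precisely the prefactor $(-1)^{\mathsf{x}}i^{\mathsf{xy}}$, and, in parallel, handling the operator orderings involving $\mathsf{T}_{\tau}^{\pm}$ and the fact that $\mathsf{K}_{0}^{\mathsf{(x,y)}}$ does not commute with $\mathsf{T}_{\tau}^{\sigma_{0}^{z}}$ for $\mathsf{K}=\sigma^{x},\sigma^{x}\sigma^{z}$.
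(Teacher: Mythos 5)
Your overall architecture (reflection identity for $S$, the eigenvalue $\mathsf{x}\pi+\mathsf{y}\pi\omega$ of $\eta\mathsf{S}+2\tau$ on $\mathbb{\bar{D}}_{\mathsf{(6VD)},\mathsf{N}}^{(0,\mathcal{R})}$, and the final projection step via \eqref{shift-P} and Proposition~\ref{prop-V-IRF-iso}, which matches \eqref{actC1}--\eqref{actB1}) is the right one, and your identity $S(\lambda|-\tau+\mathsf{x}\pi+\mathsf{y}\pi\omega)=(-1)^{\mathsf{x}}i^{\mathsf{xy}}(\sigma^z)^{\mathsf{x}}(\sigma^x)^{\mathsf{y}}S(\lambda|\tau)\sigma^x$ is exactly the paper's \eqref{transf-S}. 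But the central mechanical step is broken. Starting from \eqref{v-IRF-mon}, multiplying by $\mathsf{K}_0^{\mathsf{(x,y)}}$ and tracing over $V_0$ gives on the 8-vertex side $\tr_0\big[\mathsf{K}_0^{\mathsf{(x,y)}}\mathsf{M}_0^{\mathsf{(8V)}}(\lambda)\,S_0(\lambda|\tau)\,\mathsf{T}_\tau^{\sigma_0^z}\big]$, which is a weighted sum of \emph{all four} entries of $\mathsf{M}^{\mathsf{(8V)}}(\lambda)$ with weights built from the entries of $S(\lambda|\tau)$ and the shifts $\mathsf{T}_\tau^{\pm}$ --- not $\mathsf{T}_{\mathsf{(x,y)}}^{\mathsf{(8V)}}(\lambda)$ times anything. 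There is no invertible auxiliary-space factor sitting to the right of $\mathsf{M}_0^{\mathsf{(8V)}}(\lambda)$ in \eqref{v-IRF-mon} that could be stripped off before tracing, and cyclicity of the partial trace does not let you move $S_0(\lambda|\tau)$ out of the way because its entries (functions of $\tau$) do not commute with the entries of $\mathcal{M}_0(\lambda)$ (which contain $\mathsf{T}_\tau^{\pm}$). Deferring this to ``one checks that it coincides with $\mathsf{T}_{\mathsf{(x,y)}}^{\mathsf{(8V)}}(\lambda)S_q(\tau)\ket{\mathbf{v}}$'' hides essentially the whole content of the proposition, and as an operator identity the claim is false.

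This is precisely why the paper does \emph{not} trace \eqref{v-IRF-mon} directly: it first uses the inversion formulas \eqref{Inv-8v-M} and \eqref{inv-mon} to rewrite the vertex-IRF transformation in the adjugate form \eqref{v-IRF1}, in which the 8-vertex (adjugate) monodromy matrix has only $S_q(\tau)\,S_0(\lambda+\eta|\tau+\eta\mathsf{S})$ to its right; $S_q(\tau)$ acts trivially on $V_0$, and $S_0(\lambda+\eta|\tau+\eta\mathsf{S})$ is explicitly shown to be invertible on $\mathbb{\bar{D}}_{\mathsf{(6VD)},\mathsf{N}}^{(0,\mathcal{R})}$ and removed \emph{before} the trace is taken, so that the trace cleanly produces $(-1)^{\mathsf{y}}i^{\mathsf{xy}}\mathsf{T}_{\mathsf{(x,y)}}^{\mathsf{(8V)}}(\lambda)S_q(\tau)$ (the adjugate flips $\mathsf{A}+\mathsf{D}$ into the antidiagonal combination after conjugating with the twist). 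A second, smaller problem with your route: you apply $\tau+\eta\mathsf{S}\mapsto\mathsf{x}\pi+\mathsf{y}\pi\omega-\tau$ inside $S_0(\lambda|\tau+\eta\mathsf{S})\mathcal{M}_0(\lambda)\ket{\mathbf{v}}$, but the diagonal components $\mathcal{A}(\lambda)\ket{\mathbf{v}}$ and $\mathcal{D}(\lambda)\ket{\mathbf{v}}$ lie in $\mathbb{\bar{D}}_{\mathsf{(6VD)},\mathsf{N}}^{(\mp1,\mathcal{R})}$, not in $\mathbb{\bar{D}}_{\mathsf{(6VD)},\mathsf{N}}^{(0,\mathcal{R})}$, so the substitution is off by $\mp2\eta$ exactly on the terms you need to see disappear; you cannot assume they drop out before having justified the substitution. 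To repair the argument you would have to reproduce the paper's detour through \eqref{v-IRF1}--\eqref{v-IRF2}, including the invertibility of $S_0(\lambda+\eta|\tau+\eta\mathsf{S})$ on the relevant subspace.
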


Since the vertex-IRF transformation is an isomorphism of vector spaces  when $\mathsf{(x,y)}\not=(0,0)$, the relation \eqref{right-act-T2} can in that case be formulated directly at the operator level, as stated in the following corollary.

\begin{corollary}\label{cor-rel-T}
When $\mathsf{(x,y)}\not=(0,0)$, the $\mathsf{(x,y)}$-twisted 8-vertex transfer matrix \eqref{transfer} can be expressed in terms of the antiperiodic $\mathsf{(x,y)}$-dynamical 6-vertex transfer matrix \eqref{6VD-antiT} as
\begin{equation}\label{rel-T}
   \mathsf{T}_{\mathsf{(x,y)}}^{\mathsf{(8V)}}(\lambda )
   =(-1)^{\mathsf{x}}\, i^{\mathsf{xy}}\, \mathbf{S}^{(0)}\,\mathbf{P}^{(0)}\,
  \overline{\mathcal{T}}(\lambda)\, \big[ \mathbf{P}^{(0)}\big]^{-1}\,\big[ \mathbf{S}^{(0)}\big]^{-1}.
\end{equation}
\end{corollary}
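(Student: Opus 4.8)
The plan is to derive Corollary~\ref{cor-rel-T} directly from Proposition~\ref{th-act-T}, using the fact established in Proposition~\ref{prop-V-IRF-iso} that, for $\mathsf{(x,y)}\not=(0,0)$, the operator $\mathbf{S}^{(0)}$ is an automorphism of $\mathbb{V}_\mathsf{N}$, while $\mathbf{P}^{(0)}$ is (by the isomorphism statement preceding the Remark in Section~\ref{sec-V-IRF}) a bijection from $\mathbb{\bar{D}}_{\mathsf{(6VD)},\mathsf{N}}^{(0,\mathcal{R})}$ onto $\mathbb{V}_\mathsf{N}^{\mathcal{R}}$. Hence the composite $\mathbf{S}^{(0)}\,\mathbf{P}^{(0)}$ is an invertible linear map from $\mathbb{\bar{D}}_{\mathsf{(6VD)},\mathsf{N}}^{(0,\mathcal{R})}$ to $\mathbb{V}_\mathsf{N}^{\mathcal{R}}$, and its inverse is $\big[\mathbf{P}^{(0)}\big]^{-1}\big[\mathbf{S}^{(0)}\big]^{-1}$, which is exactly the operator appearing on the right-hand side of \eqref{rel-T}.

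First I would observe that \eqref{right-act-T2} holds for every $\ket{\mathbf{v}}\in\mathbb{\bar{D}}_{\mathsf{(6VD)},\mathsf{N}}^{(0,\mathcal{R})}$. Writing $\ket{w}=\mathbf{S}^{(0)}\,\mathbf{P}^{(0)}\,\ket{\mathbf{v}}$, which ranges over all of $\mathbb{V}_\mathsf{N}^{\mathcal{R}}$ as $\ket{\mathbf{v}}$ ranges over $\mathbb{\bar{D}}_{\mathsf{(6VD)},\mathsf{N}}^{(0,\mathcal{R})}$ precisely because $\mathbf{S}^{(0)}\,\mathbf{P}^{(0)}$ is surjective, one may substitute $\ket{\mathbf{v}}=\big[\mathbf{P}^{(0)}\big]^{-1}\big[\mathbf{S}^{(0)}\big]^{-1}\ket{w}$ into both sides of \eqref{right-act-T2}. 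The left-hand side becomes $\mathsf{T}_{\mathsf{(x,y)}}^{\mathsf{(8V)}}(\lambda)\,\ket{w}$, while the right-hand side becomes $(-1)^{\mathsf{x}}\, i^{\mathsf{xy}}\,\mathbf{S}^{(0)}\,\mathbf{P}^{(0)}\,\overline{\mathcal{T}}(\lambda)\,\big[\mathbf{P}^{(0)}\big]^{-1}\big[\mathbf{S}^{(0)}\big]^{-1}\,\ket{w}$. Since this identity of vectors holds for all $\ket{w}\in\mathbb{V}_\mathsf{N}^{\mathcal{R}}$, the two operators acting on $\mathbb{V}_\mathsf{N}$ coincide, which is precisely \eqref{rel-T}.

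There is essentially no obstacle here: the content is entirely in Propositions~\ref{prop-V-IRF-iso} and~\ref{th-act-T}, and the corollary is the purely formal step of ``conjugating through'' the isomorphism. The only point requiring a word of care is to make sure the domains match up — that the subspace $\mathbb{\bar{D}}_{\mathsf{(6VD)},\mathsf{N}}^{(0,\mathcal{R})}$ on which Proposition~\ref{th-act-T} is stated is exactly the domain on which $\overline{\mathcal{T}}(\lambda)$ acts and which is carried bijectively onto the full $2^\mathsf{N}$-dimensional space $\mathbb{V}_\mathsf{N}$; this is guaranteed by the invariance statement (the Proposition just before the definition of $\mathbf{P}^{\mathcal{L/R}}$, for $r=0$) together with the isomorphism property of $\mathbf{P}^{(0)}$. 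One then simply records, in the notation of the Remark, that $\mathbf{P}^{(0)}\big(\ket{\mathbf{v}}\big)=\mathbf{P}^{(0)}\ket{\mathbf{v}}$ and $\big[\mathbf{P}^{(0)}\big]^{-1}$ denotes the inverse of this bijection, so that \eqref{rel-T} is well-posed as an operator identity on $\mathbb{V}_\mathsf{N}$.
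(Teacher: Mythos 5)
Your proof is correct and follows exactly the route the paper takes: the corollary is recorded there as an immediate consequence of Proposition~\ref{th-act-T} once one knows that $\mathbf{S}^{(0)}$ is an automorphism of $\mathbb{V}_\mathsf{N}$ (Proposition~\ref{prop-V-IRF-iso}) and that $\mathbf{P}^{(0)}$ is an isomorphism from $\mathbb{\bar{D}}_{\mathsf{(6VD)},\mathsf{N}}^{(0,\mathcal{R})}$ onto $\mathbb{V}_\mathsf{N}^{\mathcal{R}}$, so that \eqref{right-act-T2} can be read as an operator identity. Your additional remark about the invariance of $\mathbb{\bar{D}}_{\mathsf{(6VD)},\mathsf{N}}^{(0,\mathcal{R})}$ under $\overline{\mathcal{T}}(\lambda)$ is the right point of care, and the argument is complete.
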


It may be convenient, here and in the following, to define, through the isomorphism $\mathbf{P}^{(0)}$, some analogs of the dynamical Yang-Baxter operators which act on the pure spin space $\mathbb{V}_\mathsf{N}$, in particular
%
\begin{align}
   &\mathsf{B^{(6VD)}}(\lambda)=\mathbf{P}^{(0)}\, \mathcal{B}(\lambda)\, \big[ \mathbf{P}^{(0)}\big]^{-1},
   \qquad\quad
     \mathsf{C^{(6VD)}}(\lambda)=\mathbf{P}^{(0)}\, \mathcal{C}(\lambda)\, \big[ \mathbf{P}^{(0)}\big]^{-1},
     \label{def-BC}\\
   &\mathsf{\overline{T}^{(6VD)}}(\lambda)=\mathsf{B^{(6VD)}}(\lambda)+\mathsf{C^{(6VD)}}(\lambda).
   \label{dyn-transfer}
\end{align}
The relation \eqref{rel-T} for  $\mathsf{(x,y)}\not=(0,0)$ can therefore simply be rewritten as
\begin{equation}\label{rel-Tbis}
   \mathsf{T}_{\mathsf{(x,y)}}^{\mathsf{(8V)}}(\lambda )
   =(-1)^{\mathsf{x}}\, i^{\mathsf{xy}}\, \mathbf{S}^{(0)}\, \mathsf{\overline{T}^{(6VD)}}(\lambda)  \,\big[ \mathbf{S}^{(0)}\big]^{-1}.
\end{equation}

\begin{rem}
The action of the quasi-periodic 8-vertex transfer matrix to the left, i.e. the analog of \eqref{right-act-T}-\eqref{right-act-T2} on the states $\bra{\mathbf{v}}$ of $\mathbb{\bar{D}}_{\mathsf{(6VD)},\mathsf{N}}^{(0,\mathcal{L})}$, follows directly from \eqref{rel-T} in the case  $\mathsf{(x,y)}\not=(0,0)$.
\end{rem}

We now turn to the proof of Proposition~\ref{th-act-T}, for which we use the following Lemma:

\begin{lemma}
For any vector $\ket{\mathbf{v}}\in\mathbb{\bar{D}}_{\mathsf{(6VD)},\mathsf{N}}^{(0,\mathcal{R})}$, one has
\begin{align}
   &\mathbf{P}^\mathcal{R} \big( S_{q}(\tau -\eta )\,\mathsf{C}(\lambda |\tau -\eta )\,
   \ket{\mathbf{v}}\big)
   =  \mathbf{S}^{(0)}\, \mathbf{P}^{(0)}\, \mathcal{C}(\lambda)\,
   \ket{\mathbf{v}},
   \label{actC1}\\
   &\mathbf{P}^\mathcal{R}\big(S_{q}(\tau +\eta )\,\mathsf{B}(\lambda |\tau +\eta )\,
   \ket{\mathbf{v}} \big)
   =  \mathbf{S}^{(0)}\, \mathbf{P}^{(0)}\, \mathcal{B}(\lambda)\,
   \ket{\mathbf{v}}.
   \label{actB1}
\end{align}
\end{lemma}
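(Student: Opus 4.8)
The plan is to prove \eqref{actC1} by a direct computation on a generic element of the dynamical-spin basis \eqref{DyS-basis-R} of $\mathbb{\bar{D}}_{\mathsf{(6VD)},\mathsf{N}}^{(0,\mathcal{R})}$, the identity \eqref{actB1} being completely analogous upon exchanging the roles of $\mathsf{C}$, $\mathcal{C}$, $\mathsf{T}_\tau^{+}$ with those of $\mathsf{B}$, $\mathcal{B}$, $\mathsf{T}_\tau^{-}$. By linearity it is enough to treat $\ket{\mathbf{v}}=\big(\otimes_{n=1}^{\mathsf{N}}\ket{n,h_n}\big)\otimes\ket{t_{0,\mathbf{h}}}$. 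The ingredients I would use are: the factorization $\mathcal{C}(\lambda)=\mathsf{C}(\lambda|\tau)\,\mathsf{T}_\tau^{+}$ coming from the definition \eqref{mon-op}; the relations \eqref{Dyn-op-comm}, \eqref{t-dyn-sp}, \eqref{Dynamical-spectrum-1}, which show that $\mathsf{T}_\tau^{+}$ lowers the $\tau$-eigenvalue of a right vector by $\eta$, while $\mathsf{C}(\lambda|\tau)$ and $\mathsf{C}(\lambda|\tau-\eta)$ leave the dynamical label untouched and act on $\mathbb{V}_\mathsf{N}$ as the ordinary dynamical $6$-vertex operator $\mathsf{C}(\lambda|t)$ evaluated at the $\tau$-eigenvalue $t$, resp. $t-\eta$, of the state they act on; the fact that, $R(\lambda|t)$ being of six-vertex type, $\mathsf{C}$ increases the total spin $\mathsf{S}$ by two, so that on $\mathbb{\bar{D}}_{\mathsf{(6VD)},\mathsf{N}}^{(0)}$ it sends the spin sector $\mathbf{h}$ to spin sectors $\mathbf{h}'$ with $\mathsf{s}_{\mathbf{h}'}=\mathsf{s}_{\mathbf{h}}+2$, hence with $t_{0,\mathbf{h}'}=t_{0,\mathbf{h}}-\eta$ by \eqref{t_h}; and finally the definitions \eqref{def-PR}, \eqref{act-Pr}, \eqref{S-spin-R} together with the normalization $\langle t(a)|t(b)\rangle=\delta_{a,b}$, so that $\mathbf{P}^{\mathcal{R}}$ and $\mathbf{P}^{(0)}$ simply strip off the dynamical tensor factor while $\mathbf{S}^{(0)}$ multiplies the $\mathbf{h}'$-component by the numerical operator $S_q(t_{0,\mathbf{h}'})$.

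Concretely, for the right-hand side I would compute $\mathcal{C}(\lambda)\ket{\mathbf{v}}=\mathsf{C}(\lambda|\tau)\,\mathsf{T}_\tau^{+}\ket{\mathbf{v}}$: the operator $\mathsf{T}_\tau^{+}$ brings $\ket{\mathbf{v}}$ to $\big(\otimes_n\ket{n,h_n}\big)\otimes\ket{t_{0,\mathbf{h}}-\eta}$, and then $\mathsf{C}(\lambda|\tau)$ acts on the spin part with numerical dynamical argument $t_{0,\mathbf{h}}-\eta$, producing a combination of states $\big(\otimes_n\ket{n,h'_n}\big)\otimes\ket{t_{0,\mathbf{h}}-\eta}$ with $\mathsf{s}_{\mathbf{h}'}=\mathsf{s}_{\mathbf{h}}+2$; since $t_{0,\mathbf{h}}-\eta=t_{0,\mathbf{h}'}$ the result again belongs to $\mathbb{\bar{D}}_{\mathsf{(6VD)},\mathsf{N}}^{(0,\mathcal{R})}$, so $\mathbf{P}^{(0)}$ merely reads off its spin coefficients and $\mathbf{S}^{(0)}$ multiplies the $\mathbf{h}'$-component by $S_q(t_{0,\mathbf{h}'})$. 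For the left-hand side I would compute $\mathsf{C}(\lambda|\tau-\eta)\ket{\mathbf{v}}$, which keeps the dynamical label $t_{0,\mathbf{h}}$ and acts on the spin part with the \emph{same} numerical argument $t_{0,\mathbf{h}}-\eta$, hence produces the same spin combination, now attached to $\ket{t_{0,\mathbf{h}}}$; then $S_q(\tau-\eta)$ acts with $\tau$-eigenvalue $t_{0,\mathbf{h}}$, i.e. with base dynamical value $t_{0,\mathbf{h}}-\eta=t_{0,\mathbf{h}'}$, which is exactly the value at which $\mathbf{S}^{(0)}$ evaluates $S_q$ on the $\mathbf{h}'$-component, and $\mathbf{P}^{\mathcal{R}}$ then strips the dynamical factor. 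The two expressions coincide term by term, which proves \eqref{actC1}, and the analogous computation with $\mathsf{T}_\tau^{-}$ and $\mathsf{B}$ gives \eqref{actB1}.

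I expect the only genuinely delicate point to be the bookkeeping of the dynamical arguments: one must check carefully that replacing the operator $\tau$ by $\tau-\eta$ inside both $\mathsf{C}$ and $S_q$ on the left-hand side reproduces exactly the combined effect, on the right-hand side, of the shift $\mathsf{T}_\tau^{+}$ hidden in $\mathcal{C}(\lambda)$ followed by the evaluation of $S_q$ at the label $t_{0,\mathbf{h}'}$ of the \emph{output} sector. This is what makes the sign choices $\tau-\eta$ in \eqref{actC1} and $\tau+\eta$ in \eqref{actB1} the correct ones: $\mathsf{C}$ shifts the dynamical label of a state of $\mathbb{\bar{D}}_{\mathsf{(6VD)},\mathsf{N}}^{(0)}$ by $-\eta$ and $\mathsf{B}$ by $+\eta$, and the internal $\sigma_a^z$-shifts entering $S_q$ in \eqref{S_q} are read on the output spin configuration in both computations, in agreement with the definition \eqref{S-spin-R} of $\mathbf{S}^{(0)}$. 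No further subtlety arises.
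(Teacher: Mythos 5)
Your argument is correct and rests on exactly the same ingredients as the paper's proof: the factorization $\mathcal{C}(\lambda)=\mathsf{C}(\lambda|\tau)\,\mathsf{T}_\tau^{+}$ from \eqref{mon-op}, the shift rule \eqref{Dyn-op-comm}, the invariance of $\mathbb{\bar{D}}_{\mathsf{(6VD)},\mathsf{N}}^{(0,\mathcal{R})}$ under $\mathcal{C}$, and the way $\mathbf{P}$ and $\mathbf{S}^{(0)}$ act. The paper merely packages this more compactly at the operator level --- applying \eqref{P-S} to the vector $\mathcal{C}(\lambda)\ket{\mathbf{v}}$ and then pushing $\mathsf{T}_\tau^{+}$ through $S_q(\tau)\,\mathsf{C}(\lambda|\tau)$ via \eqref{Dyn-op-comm} before discarding it with \eqref{shift-P} --- whereas you unpack the same bookkeeping explicitly on the basis; both are fine.
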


\begin{proof}
We recall that, if $\ket{\mathbf{v}}\in\mathbb{\bar{D}}_{\mathsf{(6VD)},\mathsf{N}}^{(0,\mathcal{R})}$, then $\mathcal{C}(\lambda) \ket{\mathbf{v}}\in\mathbb{\bar{D}}_{\mathsf{(6VD)},\mathsf{N}}^{(0,\mathcal{R})}$, so that, from \eqref{P-S},
\begin{align}
   \mathbf{S}^{(0)}\, \mathbf{P}^{(0)}\, \mathcal{C}(\lambda)\,
   \ket{\mathbf{v}}
   &= \mathbf{P}^\mathcal{R}\big( S_q(\tau)\, \mathcal{C}(\lambda)\,
   \ket{\mathbf{v}} \big)
   \nonumber\\
   &= \mathbf{P}^\mathcal{R}\big( S_q(\tau)\, \mathsf{C}(\lambda|\tau)\, \mathsf{T}_\tau^+\,
   \ket{\mathbf{v}} \big)
   \nonumber\\
   &= \mathbf{P}^\mathcal{R}\big( S_q(\tau-\eta)\, \mathsf{C}(\lambda|\tau-\eta)\, 
   \ket{\mathbf{v}} \big),
\end{align}
where we have used successively \eqref{Dyn-op-comm} and \eqref{shift-P}.
\eqref{actB1} can be proven similarly.
\end{proof}

{\it Proof of Proposition~\ref{th-act-T}.}
Let us show \eqref{right-act-T}, which is a generalization to the $\mathsf{(x,y)}$-case of  Lemma 3.3 of \cite{Nic13a}.
To prove \eqref{right-act-T}, we rewrite, similarly as in \cite{Nic13a}, the gauge transformation \eqref{v-IRF-mon} under the following equivalent form:
\begin{multline}\label{v-IRF1}
  S_{0}(\lambda +\eta |\tau )\, S_{q}(\tau +\eta \sigma _{0}^{z})\,
  \begin{pmatrix}
     \mathsf{D}(\lambda |\tau +\eta ) & -\mathsf{B}(\lambda |\tau +\eta ) \\ 
     -\mathsf{C}(\lambda |\tau -\eta ) & \mathsf{A}(\lambda |\tau -\eta )
  \end{pmatrix}_{\! [0]}\,
  e^{i\mathsf{y}\eta\mathsf{S}}\,
  \frac{\theta (\tau +\eta \text{$\mathsf{S}$})}{\theta (\tau )}
  \\
  =\begin{pmatrix}
     \mathsf{D}^{\mathsf{(8V)}}(\lambda ) & -\mathsf{B}^{\mathsf{(8V)}}(\lambda )\\ 
    -\mathsf{C}^{\mathsf{(8V)}}(\lambda ) & \mathsf{A}^{\mathsf{(8V)}}(\lambda )
    \end{pmatrix}_{\! [0]}\,
    S_{q}(\tau )\, S_{0}(\lambda +\eta |\tau +\eta \mathsf{S}),
\end{multline}
where we have used the inversion formulas \eqref{Inv-8v-M} and \eqref{inv-mon} for the 8-vertex and dynamical 6-vertex monodromy matrices.
By means of the relation, for $\mathsf{x,y}\in\{0,1\}$,
\begin{align}
   S(\lambda|-\tau+\mathsf{x}\pi+\mathsf{y}\pi\omega)
   &=(-1)^\mathsf{x}\, i^{\mathsf{xy}}\, (\sigma^z)^\mathsf{x}\, (\sigma^x)^\mathsf{y}\, S(-\lambda|\tau)
   \nonumber\\
   &=(-1)^\mathsf{x}\, i^{\mathsf{xy}}\, (\sigma^z)^\mathsf{x}\, (\sigma^x)^\mathsf{y}\, S(\lambda|\tau)\, \sigma^x,\label{transf-S}
\end{align}
we can rewrite \eqref{v-IRF1} as
\begin{multline}\label{v-IRF2}
  S_{0}(\lambda +\eta |-\tau +\mathsf{x}\pi+\mathsf{y}\pi\omega)\, S_{q}(\tau -\eta \sigma _{0}^{z})\,
  \begin{pmatrix}
     -\mathsf{C}(\lambda |\tau -\eta ) & \mathsf{A}(\lambda |\tau -\eta ) \\ 
     \mathsf{D}(\lambda |\tau +\eta ) & -\mathsf{B}(\lambda |\tau +\eta )
  \end{pmatrix}_{\! [0]}\,
  e^{i\mathsf{y}\eta\mathsf{S}}\,
  \frac{\theta (\tau +\eta \text{$\mathsf{S}$})}{\theta (\tau )}
  \\
  =i^{\mathsf{x y}}\, (-1)^\mathsf{x}\,(\sigma_0^z)^\mathsf{x}\, (\sigma_0^x)^\mathsf{y}\,
  \begin{pmatrix}
     \mathsf{D}^{\mathsf{(8V)}}(\lambda ) & -\mathsf{B}^{\mathsf{(8V)}}(\lambda )\\ 
    -\mathsf{C}^{\mathsf{(8V)}}(\lambda ) & \mathsf{A}^{\mathsf{(8V)}}(\lambda )
    \end{pmatrix}_{\! [0]}\,
    S_{q}(\tau )\, S_{0}(\lambda +\eta |\tau +\eta \mathsf{S}).
\end{multline}

Let us now show that $S_{0}(\lambda +\eta |\tau+\eta \mathsf{S})$ is an invertible matrix in the auxiliary space when acting on any
state of $\mathbb{\bar{D}}_{\mathsf{(6VD)},\mathsf{N}}^{(0,\mathcal{R})}$.
All that we need to observe is that the following identity holds:
\begin{multline*}
   S_{0}(\lambda +\eta |\tau +\eta \mathsf{S})
   \left[ \big( \otimes _{n=1}^{\mathsf{N}} | n,h_{n}\rangle\big) \otimes | t_{0,\mathbf{h}}\rangle \right]  
   =e^{i\frac{\mathsf{y}}{2}(\mathsf{x}\pi+\mathsf{y}\pi\omega-t_{0,\mathbf{h}})}
   \left[ \big( \otimes _{n=1}^{\mathsf{N}} | n,h_{n}\rangle\big) \otimes | t_{0,\mathbf{h}}\rangle \right]  
        \\
      \times
     \begin{pmatrix}
    e^{-i\mathsf{y}\frac{\lambda+\eta}{2}}
    \theta _{2}(-\lambda -\eta -t_{0,\mathbf{h}}+\mathsf{x}\pi+\mathsf{y}\pi\omega|2\omega) 
    & e^{i\mathsf{y}\frac{\lambda+\eta}{2}}
    \theta _{2}(\lambda+\eta -t_{0,\mathbf{h}}+\mathsf{x}\pi+\mathsf{y}\pi\omega|2\omega)
     \\ 
    e^{-i\mathsf{y}\frac{\lambda+\eta}{2}}
    \theta _{3}(-\lambda -\eta -t_{0,\mathbf{h}}+\mathsf{x}\pi+\mathsf{y}\pi\omega|2\omega) 
    & e^{i\mathsf{y}\frac{\lambda+\eta}{2}}
    \theta _{3}(\lambda+\eta -t_{0,\mathbf{h}}+\mathsf{x}\pi+\mathsf{y}\pi\omega|2\omega)
   \end{pmatrix}_{\! [0]},
\end{multline*}
and that the last matrix has a non-zero determinant $\theta (\lambda +\eta )\,\theta (-t_{0,\mathbf{h}}+\mathsf{x}\pi+\mathsf{y}\pi\omega)$ for any $\mathbf{h}\in\{0,1\}^\mathsf{N}$ provided that $\lambda+\eta\notin\pi\mathbb{Z}+\pi\omega\mathbb{Z}$.

Multiplying both sides of \eqref{v-IRF2} from the right by $\left[ S_{0}(\lambda +\eta |\tau +\eta \mathsf{S})\right] ^{-1}$ and taking the trace on the auxiliary space $0$, we obtain
\begin{multline}
  \tr_0 \bigg\{ S_{0}(\lambda +\eta |-\tau +\mathsf{x}\pi+\mathsf{y}\pi\omega)\, S_{q}(\tau -\eta \sigma _{0}^{z})\,
  \begin{pmatrix}
     -\mathsf{C}(\lambda |\tau -\eta ) & \mathsf{A}(\lambda |\tau -\eta ) \\ 
     \mathsf{D}(\lambda |\tau +\eta ) & -\mathsf{B}(\lambda |\tau +\eta )
  \end{pmatrix}_{\! [0]}\,
  e^{i\mathsf{y}\eta\mathsf{S}}\\
  \times
  \frac{\theta (\tau +\eta \text{$\mathsf{S}$})}{\theta (\tau )}\, 
  \left[ S_{0}(\lambda +\eta |\tau +\eta \mathsf{S})\right] ^{-1}\bigg\}
  =(-1)^\mathsf{y}\, i^{\mathsf{xy}}\, \mathsf{T}^{\mathsf{(8V)}}_{\mathsf{(x,y)}}(\lambda)\, S_q(\tau).
\end{multline}
The claim \eqref{right-act-T}  follows by passing $S_{0}(\lambda +\eta |-\tau +\mathsf{x}\pi+\mathsf{y}\pi\omega)$ to the right in the trace and by recalling that $\mathbb{\bar{D}}_{\mathsf{(6VD)},\mathsf{N}}^{(0,\mathcal{R})}$ is an eigenspace of $\eta\mathsf{S}+2\tau$ associated with the eigenvalue $\mathsf{x}\pi+\mathsf{y}\pi\omega$.

\eqref{right-act-T2} is then obtained by applying $\mathbf{P}^\mathcal{R}$ on both members of \eqref{right-act-T}, and by using \eqref{P-S}, \eqref{actC1} and \eqref{actB1}.
\qed

Note that the relation between the action to the right of the 8-vertex and SOS transfer matrices of Proposition~\ref{th-act-T} can be easily extended to a relation between the action of more general matrix elements. 
Indeed, defining the coefficients $s_{i, j}^{\alpha,\beta}(\lambda|\tau)$ and $\tilde{s}_{i, j}^{\alpha,\beta}(\lambda|\tau)$, $i,j,\alpha,\beta\in\{+,-\}$ (which can easily be explicitly computed) by
\begin{align}
    &S(\lambda| t)^{-1}\, E^{ij}\, S(\lambda| t)
    =\!\!\!\!\sum_{\alpha,\beta\in\{+,-\} }\!\!\!\! s_{i, j}^{\alpha,\beta}(\lambda|\tau)\, E^{\alpha \beta},
    \\
    &S(\lambda| t)\, E^{ij}\, S(\lambda| t)^{-1}
    =\!\!\!\!\sum_{\alpha,\beta\in\{+,-\} }\!\!\!\! \tilde{s}_{i, j}^{\alpha,\beta}(\lambda|\tau)\, E^{\alpha \beta},
\end{align}
so that
\begin{equation}
   \sum_{\alpha,\beta\in\{+,-\} }\!\!\!\! s^{i, j}_{\alpha,\beta}(\lambda|\tau)\, \tilde{s}_{k, \ell}^{\alpha,\beta}(\lambda|\tau) 
   =  \sum_{\alpha,\beta\in\{+,-\} }\!\!\!\! \tilde{s}^{i, j}_{\alpha,\beta}(\lambda|\tau)\, {s}_{k, \ell}^{\alpha,\beta}(\lambda|\tau)=\delta_{i,k}\,\delta_{j,\ell},
\end{equation}
it is easy to show that the elements of the $\mathsf{(x,y)}$-twisted inverse monodromy matrix combined with the vertex-IRF transformation have the following action on the states $\ket{\mathbf{v}}$ of $\bar{\mathbb{D}}^{(0,\mathcal{R})}_{\mathsf{(6VD),N}}$:
\begin{equation}\label{act-el1}
   \big[ \mathsf{M}_\mathsf{(x,y)}^\mathsf{(8V)}(\lambda)^{-1}\big]_{kj}\, S_q(\tau)\, \ket{\mathbf{v}}
   = (-i)^\mathsf{xy} (-1)^\mathsf{x}
    \!\!\!\!\!\sum_{\alpha,\beta\in\{+,-\} }\!\!\!\!\!
    s_{j,k}^{\alpha,\beta}(\lambda|\tau)\, 
   S_q(\tau+\eta\beta) \big[ \bar{\mathsf{M}}(\lambda|\tau)^{-1}\big]_{\beta \alpha} \ket{\mathbf{v}},
\end{equation}
or equivalently
\begin{equation}\label{act-el2}
   S_q(\tau+\eta k)\,  \big[ \bar{\mathsf{M}}(\lambda|\tau)^{-1}\big]_{kj}\,  \ket{\mathbf{v}}
   = i^\mathsf{xy} (-1)^\mathsf{x}
     \!\!\!\!\! \sum_{\alpha,\beta\in\{+,-\} }\!\!\!\!\!
    \tilde{s}_{j,k}^{\alpha,\beta}(\lambda|\tau)\,
    \big[ \mathsf{M}_\mathsf{(x,y)}^\mathsf{(8V)}(\lambda)^{-1}\big]_{\beta \alpha}\, S_q(\tau)\, \ket{\mathbf{v}}.
\end{equation}
Note that these relations can be rewritten in terms of the matrix elements of $\bar{\mathsf{M}}(\lambda|\tau)$ and/or $\mathsf{M}_\mathsf{(x,y)}^\mathsf{(8V)}(\lambda)$ (instead of their inverse) by using \eqref{inv-mon} and/or \eqref{Inv-8v-M}.


\section{Spectral problem for the quasi-periodic 8-vertex transfer matrices}
\label{sec-diag-8V}

From Proposition~\ref{th-act-T} or Corollary~\ref{cor-rel-T} and the fact that $\mathbf{S}^{(0)}\,\mathbf{P}^{(0)}$ defines, for $\mathsf{(x,y)}\not=(0,0)$, an isomorphism from $\mathbb{\bar{D}}_{\mathsf{(6VD)},\mathsf{N}}^{(0)}$ to $\mathbb{V}_\mathsf{N}$, the spectral problem for the quasi-periodic 8-vertex transfer matrix \eqref{transfer} with $\mathsf{(x,y)}\not=(0,0)$ is completely equivalent to the spectral problem for the transfer matrix \eqref{6VD-antiT} of the antiperiodic  $\mathsf{(x,y)}$-dynamical 6-vertex model.
This property can be formulated as follows.

\begin{theorem}\label{th-relation-spectre}
Let $\mathsf{(x,y)}\not=(0,0)$.
If
\begin{equation}
   \ket{ \Psi_{\bar{\mathsf{t}}}^\mathsf{(6VD)} }\in\bar{\mathbb{D}}_\mathsf{(6VD),N}^{(0,\mathcal{R})},
   \qquad\text{respectively}\qquad
   \bra{  \Psi_{\bar{\mathsf{t}}}^\mathsf{(6VD)}  }\in\bar{\mathbb{D}}_\mathsf{(6VD),N}^{(0,\mathcal{L})},
\end{equation}
is a right (resp. left) eigenvector of the antiperiodic $\mathsf{(x,y)}$-dynamical 6-vertex transfer matrix $\overline{\mathcal{T}}(\lambda)$ \eqref{6VD-antiT} with eigenvalue $\bar{\mathsf{t}}(\lambda)$, then
\begin{equation}\label{eigen8V-6VD}
   \mathbf{S}^{(0)}\,\mathbf{P}^{(0)}\, 
   \ket{  \Psi_{\bar{\mathsf{t}}}^\mathsf{(6VD)}  }\in\mathbb{V}_\mathsf{N}^\mathcal{R},
   \qquad \text{resp.}\qquad
   \bra{  \Psi_{\bar{\mathsf{t}}}^\mathsf{(6VD)}  }\,\big[\mathbf{P}^{(0)}\big]^{-1}\, \big[\mathbf{S}^{(0)}\big]^{-1}\in\mathbb{V}_\mathsf{N}^\mathcal{L},
\end{equation}
is a right (resp. left) eigenvector of the quasi-periodic 8-vertex transfer matrix $\mathsf{T}_{\mathsf{(x,y)}}^{\mathsf{(8V)}}(\lambda )$ \eqref{transfer} with eigenvalue
\begin{equation}
   \mathsf{t}_{\mathsf{(x,y)}}^{\mathsf{(8V)}}(\lambda )\equiv (-1)^\mathsf{x}\, i^\mathsf{xy}\, \bar{\mathsf{t}}(\lambda),
\end{equation}
and conversely.
\end{theorem}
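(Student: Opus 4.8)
The plan is to obtain the statement as a direct consequence of the operator identity \eqref{rel-T} of Corollary~\ref{cor-rel-T}, together with the fact that, for $\mathsf{(x,y)}\neq(0,0)$, the composite $\mathbf{S}^{(0)}\,\mathbf{P}^{(0)}$ is a linear isomorphism from $\bar{\mathbb{D}}_{\mathsf{(6VD),N}}^{(0,\mathcal{R})}$ onto $\mathbb{V}_\mathsf{N}^\mathcal{R}$ (and dually $\bra{\cdot}\,\big[\mathbf{P}^{(0)}\big]^{-1}\big[\mathbf{S}^{(0)}\big]^{-1}$ is an isomorphism from $\bar{\mathbb{D}}_{\mathsf{(6VD),N}}^{(0,\mathcal{L})}$ onto $\mathbb{V}_\mathsf{N}^\mathcal{L}$): indeed $\mathbf{S}^{(0)}$ is an automorphism of $\mathbb{V}_\mathsf{N}$ by Proposition~\ref{prop-V-IRF-iso}, while $\mathbf{P}^{(0)}$ restricts to an isomorphism $\bar{\mathbb{D}}_{\mathsf{(6VD),N}}^{(0)}\to\mathbb{V}_\mathsf{N}$ as recorded in \eqref{act-Pr}. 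Since \eqref{rel-T} exhibits $\mathsf{T}_{\mathsf{(x,y)}}^{\mathsf{(8V)}}(\lambda)$ as $(-1)^{\mathsf{x}}\,i^{\mathsf{xy}}$ times the conjugate of $\overline{\mathcal{T}}(\lambda)$ by $\mathbf{S}^{(0)}\,\mathbf{P}^{(0)}$, the two operators have the same spectrum up to the nonzero scalar $(-1)^{\mathsf{x}}\,i^{\mathsf{xy}}$, and their eigenspaces are carried into each other by this conjugating isomorphism; the theorem is simply an explicit statement of this similarity.

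For the right-eigenvector direction I would argue as follows. Assume $\overline{\mathcal{T}}(\lambda)\ket{\Psi_{\bar{\mathsf{t}}}^\mathsf{(6VD)}}=\bar{\mathsf{t}}(\lambda)\,\ket{\Psi_{\bar{\mathsf{t}}}^\mathsf{(6VD)}}$ with $\ket{\Psi_{\bar{\mathsf{t}}}^\mathsf{(6VD)}}\in\bar{\mathbb{D}}_{\mathsf{(6VD),N}}^{(0,\mathcal{R})}$ nonzero. Applying $\mathsf{T}_{\mathsf{(x,y)}}^{\mathsf{(8V)}}(\lambda)$ to the vector $\mathbf{S}^{(0)}\,\mathbf{P}^{(0)}\ket{\Psi_{\bar{\mathsf{t}}}^\mathsf{(6VD)}}$ and invoking \eqref{right-act-T2} (equivalently \eqref{rel-T}) to move $\overline{\mathcal{T}}(\lambda)$ through the vertex-IRF isomorphism so that it acts directly on $\ket{\Psi_{\bar{\mathsf{t}}}^\mathsf{(6VD)}}$, one may replace it by $\bar{\mathsf{t}}(\lambda)$ and obtain $\mathsf{T}_{\mathsf{(x,y)}}^{\mathsf{(8V)}}(\lambda)\,\mathbf{S}^{(0)}\,\mathbf{P}^{(0)}\ket{\Psi_{\bar{\mathsf{t}}}^\mathsf{(6VD)}}=(-1)^{\mathsf{x}}\,i^{\mathsf{xy}}\,\bar{\mathsf{t}}(\lambda)\,\mathbf{S}^{(0)}\,\mathbf{P}^{(0)}\ket{\Psi_{\bar{\mathsf{t}}}^\mathsf{(6VD)}}$. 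As $\mathbf{S}^{(0)}\,\mathbf{P}^{(0)}$ is injective on $\bar{\mathbb{D}}_{\mathsf{(6VD),N}}^{(0,\mathcal{R})}$, the transformed vector is nonzero, hence a genuine eigenvector of $\mathsf{T}_{\mathsf{(x,y)}}^{\mathsf{(8V)}}(\lambda)$ with eigenvalue $(-1)^{\mathsf{x}}\,i^{\mathsf{xy}}\,\bar{\mathsf{t}}(\lambda)$. The left-eigenvector claim follows in the same way after multiplying \eqref{rel-T} on the appropriate side and using the Remark after Corollary~\ref{cor-rel-T}; equivalently, one transposes the whole argument, the inversion relations \eqref{Inv-8v-M} and \eqref{inv-mon} ensuring that the left and right representation spaces are treated symmetrically.

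For the converse, that every eigenvector of $\mathsf{T}_{\mathsf{(x,y)}}^{\mathsf{(8V)}}(\lambda)$ is obtained in this way, I would take a nonzero $\ket{\psi}\in\mathbb{V}_\mathsf{N}^\mathcal{R}$ with $\mathsf{T}_{\mathsf{(x,y)}}^{\mathsf{(8V)}}(\lambda)\ket{\psi}=\mathsf{t}(\lambda)\,\ket{\psi}$, set $\ket{\Psi}\equiv\big[\mathbf{P}^{(0)}\big]^{-1}\big[\mathbf{S}^{(0)}\big]^{-1}\ket{\psi}\in\bar{\mathbb{D}}_{\mathsf{(6VD),N}}^{(0,\mathcal{R})}$ (well defined and nonzero since $\mathbf{S}^{(0)}\,\mathbf{P}^{(0)}$ is an isomorphism), substitute $\ket{\psi}=\mathbf{S}^{(0)}\,\mathbf{P}^{(0)}\ket{\Psi}$ in the eigenvalue equation, use \eqref{rel-T} and cancel $\mathbf{S}^{(0)}\,\mathbf{P}^{(0)}$ by injectivity; this yields $\overline{\mathcal{T}}(\lambda)\ket{\Psi}=(-1)^{\mathsf{x}}\,i^{-\mathsf{xy}}\,\mathsf{t}(\lambda)\,\ket{\Psi}$, so that $\ket{\Psi}$ is an eigenvector of $\overline{\mathcal{T}}(\lambda)$ with eigenvalue $\bar{\mathsf{t}}(\lambda)$ related to $\mathsf{t}(\lambda)$ by $\mathsf{t}(\lambda)=(-1)^{\mathsf{x}}\,i^{\mathsf{xy}}\,\bar{\mathsf{t}}(\lambda)$, and the left case is again symmetric. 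I do not expect a real obstacle here: the substantive input — bijectivity of the vertex-IRF transformation (Proposition~\ref{prop-V-IRF-iso}) and the intertwining relation (Proposition~\ref{th-act-T} / Corollary~\ref{cor-rel-T}) — is already available, and the only points needing a little care are the bookkeeping of the scalar $(-1)^{\mathsf{x}}\,i^{\mathsf{xy}}$ and its inverse, and checking that the left-eigenvector version is indeed covered by the same reasoning.
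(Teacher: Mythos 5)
Your proposal is correct and follows exactly the paper's route: the theorem is stated there as an immediate consequence of the similarity relation \eqref{rel-T} of Corollary~\ref{cor-rel-T} together with the fact that $\mathbf{S}^{(0)}\,\mathbf{P}^{(0)}$ is an isomorphism from $\bar{\mathbb{D}}_{\mathsf{(6VD),N}}^{(0)}$ onto $\mathbb{V}_\mathsf{N}$ for $\mathsf{(x,y)}\neq(0,0)$. Your bookkeeping of the scalar $(-1)^{\mathsf{x}}\,i^{\mathsf{xy}}$ and its inverse, as well as the treatment of the left eigenvectors and of the converse, is accurate.
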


\subsection{Complete spectrum and eigenstate construction of the 8-vertex quasi-periodic transfer matrices}
\label{ssec-spectr-8V}

The complete description of the spectrum and eigenstates of the transfer matrices \eqref{transfer} hence follows directly from Theorem~\ref{th-relation-spectre} and from the complete description of the spectrum and eigenstates of the antiperiodic dynamical 6-vertex transfer matrix \eqref{6VD-antiT} which has been obtained in \cite{LevNT15} by means of Sklyanin's quantum Separation of Variable approach \cite{Skl90,Skl92} (see  Appendix~\ref{app-spectr-6VD} for a briery summary of the SOV study of \cite{LevNT15}). Notably, the 8-vertex transfer matrix eigenstates can be defined in a self-contained way from the image on $\mathbb{V}_\mathsf{N}$ of the SOV-basis of $\bar{\mathbb{D}}_{\mathsf{(6VD)},\mathsf{N}}^{(0)}$ of \cite{LevNT15} by the isomorphism $\mathbf{P}^{(0)}$.

Concretely let us define,  for each $\mathsf{N}$-tuple $\mathbf{h}\equiv (h_{1},\ldots,h_{\mathsf{N}})\in \{0,1\}^{\mathsf{N}}$, the following states in $\mathbb{V}_\mathsf{N}^\mathcal{L}$ and $\mathbb{V}_\mathsf{N}^\mathcal{R}$ respectively:
\begin{equation}\label{states-bar}
  \bra{ \underline{\mathbf{h}}}\equiv \bra{\mathbf{h}}\,\big[ \mathbf{P}^{(0)}\big]^{-1}
   = \bra{\mathbf{h}}\, t_{0,\mathbf{h}}\,\rangle,
   \qquad\text{resp.}\qquad
   \ket{\underline{\mathbf{h}}}\equiv \mathbf{P}^{(0)}\,\ket{\mathbf{h}}
   = \langle\, t_{0,\mathbf{h}}\,\ket{\mathbf{h}},
\end{equation}
where $\bra{\mathbf{h}}$ and $\ket{\mathbf{h}}$ stand for the states \eqref{D-left-eigenstates} and \eqref{D-right-eigenstates} respectively.
Then, under the condition \eqref{cond-inh} on the inhomogeneity parameters, the states \eqref{states-bar} define a basis of $\mathbb{V}_\mathsf{N}^\mathcal{L}$ and $\mathbb{V}_\mathsf{N}^\mathcal{R}$ respectively. 
Note that these states can equivalently be defined by multiple action of the operators \eqref{def-BC}
\begin{alignat}{2}
 &\bra{\underline{\mathbf{h}}}= 
    \bra{\underline{\mathbf{0}}} \prod_{n=1}^{\mathsf{N}}
 \left( \frac{\mathsf{C^{(6VD)}}(\xi_{n} )}{\text{\textsc{d}}(\xi _{n}-\eta )}\right)^{\! h_{n}},
\qquad
 &&\ket{\underline{\mathbf{h}} } 
 = \prod_{n=1}^{\mathsf{N}}
 \left( \frac{\mathsf{C^{(6VD)}}(\xi _{n}-\eta  )}{\text{\textsc{d}}(\xi _{n}-\eta )}\right) ^{\! (1-h_{n})}
 \ket{\underline{\mathbf{1}} } ,
%
\intertext{on the following reference states}
%
  &\bra{\underline{\mathbf{0}} }
\equiv \frac{1}{\mathsc{n}}
\big(\otimes _{n=1}^{\mathsf{N}}\langle n,h_{n}=0| \big),
\qquad
 && \ket{\underline{\mathbf{1} }}
 \equiv \frac{1}{\mathsc{n}}
 \big( \otimes _{n=1}^{\mathsf{N}}|n,h_{n}=1\rangle \big).
\end{alignat}
The action of the operators $\mathsf{B^{(6VD)}}(\lambda)$ and $\mathsf{C^{(6VD)}}(\lambda)$ on this basis can immediately be deduced from the action \eqref{C-SOV_D-left}-\eqref{B-SOV_D-right} of the operators $\mathcal{B}(\lambda)$ and $\mathcal{C}(\lambda)$ on the corresponding basis of $\mathbb{\bar{D}}_{\mathsf{(6VD)},\mathsf{N}}^{(0,\mathcal{R/L})}$.
Moreover,
\begin{equation}\label{sc-prod}
  \moy{  \underline{\mathbf{h}}\, |\,\underline{ \mathbf{k} }}
   = \delta_{\mathbf{h},\mathbf{k}}\,
   \frac{e^{-i\mathsf{y}\eta \sum_{j=1}^\mathsf{N}h_j}}
           {\det_{\mathsf{N}}\big[ \Theta^{(\mathbf{h})} \big]},
\end{equation}
where $\Theta^{(\mathbf{h})}$ is the matrix \eqref{mat-Theta}. Hence these states define a decomposition of the identity on $\mathbb{V}_{\mathsf{N}}$: 
\begin{equation}
     \mathbb{I}\equiv \sum_{\mathbf{h}\in\{0,1\}^\mathsf{N}}
     \det_{\mathsf{N}}\big[ \Theta^{(\mathbf{h})} \big]\
    \ket{\underline{\mathbf{h}}}\,\bra{\underline{\mathbf{h}}}.  \label{Decomp-Id}
\end{equation}

\begin{theorem}
\label{th-T8V-diag}
Let $\mathsf{(x,y)}\not=(0,0)$.
For any fixed $\mathsf{N}$-tuple of
inhomogeneities $(\xi _{1},\ldots,\xi _{\mathsf{N}})\in \mathbb{C}^{\mathsf{N}}$ satisfying \eqref{cond-inh0}-\eqref{cond-inh}, the spectrum $\Sigma_{\mathsf{(x,y)}}^\mathsf{(8V)}$ of the quasi-periodic 8-vertex transfer matrix $\mathsf{T}^{\mathsf{(8V)}}_\mathsf{(x,y)}(\lambda )$ \eqref{transfer} is simple and coincides with the set of functions of the form
\begin{equation}
   \mathsf{t}(\lambda )\equiv 
   \sum_{a=1}^{\mathsf{N}} e^{i\mathsf{y}(\xi_a-\lambda)}\,
\frac{\theta(t_{0,\mathbf{0}}-\lambda +\xi _{a})}{\theta (t_{0,\mathbf{0}})}
\prod_{b\neq a}\frac{\theta (\lambda -\xi _{b})}{\theta (\xi _{a}-\xi _{b})}\,
 \mathsf{t}(\xi_a) ,
 \quad \big({\mathsf{t}}(\xi _1),\ldots, {\mathsf{t}}(\xi _{\mathsf{N}})\big)\in\mathbb{C}^\mathsf{N},
\label{t-8Vxy}
\end{equation}
which satisfy the discrete system of equations \eqref{eq-quadr-8V}.

The right $\mathsf{T}^{\mathsf{(8V)}}_\mathsf{(x,y)}(\lambda )$-eigenstate $\ket{ \Psi_\mathsf{t}^\mathsf{(8V)} }\in\mathbb{V}_\mathsf{N}^{\mathcal{R}}$ and the left $\mathsf{T}^{\mathsf{(8V)}}_\mathsf{(x,y)}(\lambda )$-eigenstate $\bra{ \Psi_\mathsf{t}^\mathsf{(8V)} }\in\mathbb{V}_\mathsf{N}^{\mathcal{R}}$ associated with the eigenvalue $\mathsf{t}(\lambda )\in \Sigma_\mathsf{(x,y)}^\mathsf{(8V)}$ are respectively given by
\begin{align}
& \ket{ \Psi_\mathsf{t}^\mathsf{(8V)}}
=\sum_{\mathbf{h}\in\{0,1\}^\mathsf{N}}
\prod_{a=1}^{\mathsf{N}}\bigg[ e^{i\mathsf{y}\eta h_a}\, \bigg( \frac{\mathsc{a}_\mathsf{x,y}(\xi_a)}{\mathsc{d}(\xi_a-\eta)}\bigg)^{\! h_a}\,  \mathsf{q}_{{\mathsf{t}},a}^{(h_a)}\bigg]\,
\det_{\mathsf{N}}\big[\Theta^{(\mathbf{h}) }\big]\ 
\mathbf{S}^{(0)}\ket{\underline{\mathbf{h}}} , 
 \label{eigenR-XYZ}\\
&  \bra{ \Psi_\mathsf{t}^\mathsf{(8V)} }
  = \sum_{\mathbf{h}\in\{0,1\}^\mathsf{N}}
\prod_{a=1}^{\mathsf{N}}\left[ e^{i\mathsf{y}\eta h_a}\,  \mathsf{q}_{{\mathsf{t}},a}^{(h_a)}\right]\,
\det_{\mathsf{N}}\big[\Theta^{(\mathbf{h})}\big] \ 
\bra{\underline{ \mathbf{h}} }  \big[\mathbf{S}^{(0)}\big]^{-1}, 
 \label{eigenL-XYZ}
\end{align}
where $\mathsc{a}_\mathsf{x,y}(\lambda)=(-1)^\mathsf{x+y+xy}\mathsc{a}(\lambda)$, and where the coefficients $\mathsf{q}_{\bar{\mathsf{t}},a}^{(h_a)}$ are (up to an overall normalization) characterized by
\begin{equation}
\frac{ \mathsf{q}_{{\mathsf{t}},a}^{(1)} }{  \mathsf{q}_{{\mathsf{t}},a}^{(0)}  }
=(-1)^\mathsf{x}\, i^\mathsf{xy}\,\frac{\mathsc{d}(\xi_a-\eta)}{ \mathsf{t}(\xi_a-\eta)}
=(-1)^{\mathsf{y}}\, i^\mathsf{xy}\, \frac{ \mathsf{t}(\xi_a)}{\mathsc{a}(\xi_a)} .
\label{t-q-XYZ}
\end{equation}
\end{theorem}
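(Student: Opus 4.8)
\textbf{Proof plan for Theorem~\ref{th-T8V-diag}.}
The plan is to transport, via the isomorphism $\mathbf{S}^{(0)}\,\mathbf{P}^{(0)}$, the complete SOV characterization of the antiperiodic $\mathsf{(x,y)}$-dynamical 6-vertex transfer matrix $\overline{\mathcal{T}}(\lambda)$ recalled in Appendix~\ref{app-spectr-6VD} to the quasi-periodic 8-vertex transfer matrix $\mathsf{T}^{\mathsf{(8V)}}_\mathsf{(x,y)}(\lambda)$, using Theorem~\ref{th-relation-spectre}. First I would recall from \cite{LevNT15} that, for inhomogeneities satisfying \eqref{cond-inh0}-\eqref{cond-inh}, the spectrum $\bar\Sigma$ of $\overline{\mathcal{T}}(\lambda)$ is simple and consists exactly of the functions $\bar{\mathsf{t}}(\lambda)$ of the interpolation form in the dynamical-spin setting whose values $\bar{\mathsf{t}}(\xi_a)$ satisfy the appropriate discrete quadratic system, together with the explicit SOV expressions for the associated left/right eigenstates in the SOV basis $\langle\mathbf{h}|$, $|\mathbf{h}\rangle$. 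Then, by Theorem~\ref{th-relation-spectre}, $\mathsf{t}(\lambda)=(-1)^\mathsf{x}\,i^\mathsf{xy}\,\bar{\mathsf{t}}(\lambda)$ runs over $\Sigma_\mathsf{(x,y)}^\mathsf{(8V)}$ as $\bar{\mathsf{t}}$ runs over $\bar\Sigma$, and since $\mathbf{S}^{(0)}\,\mathbf{P}^{(0)}$ is a bijection (Proposition~\ref{prop-V-IRF-iso}) simplicity is preserved and the eigenstates are obtained by applying $\mathbf{S}^{(0)}\,\mathbf{P}^{(0)}$ (resp.\ its inverse on the left) to the 6-vertex eigenstates.

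The next step is to convert the abstract statements into the concrete formulas \eqref{t-8Vxy}, \eqref{eq-quadr-8V}. Here I would use the quasi-periodicity and inhomogeneity identities already established in the Proposition of Section~\ref{sec-models}: relations \eqref{per-transfer1}-\eqref{per-transfer2} force any eigenvalue $\mathsf{t}(\lambda)$ to lie in the finite-dimensional space of elliptic functions spanned by the interpolation basis appearing in \eqref{t-8Vxy} (one checks the quasi-periods of the stated interpolation formula match \eqref{periodt-1}-\eqref{periodt-2} using the quasi-periodicity of $\theta$ and the value \eqref{Dynamical-spectrum-2} of $t_0$), so that $\mathsf{t}(\lambda)$ is determined by its values $\mathsf{t}(\xi_1),\ldots,\mathsf{t}(\xi_\mathsf{N})$; and relation \eqref{q-det-ID} evaluated with $\det_q\mathsf{M}^\mathsf{(8V)}(\xi_n)=\mathsc{a}(\xi_n)\mathsc{d}(\xi_n-\eta)$ gives precisely the discrete system \eqref{eq-quadr-8V}. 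One then checks that, under $\mathsf{t}=(-1)^\mathsf{x}\,i^\mathsf{xy}\,\bar{\mathsf{t}}$, the discrete system satisfied by $\bar{\mathsf{t}}$ in the 6-vertex case is equivalent to \eqref{eq-quadr-8V}; the global product identity \eqref{eq-prodt} follows similarly from \eqref{global-ID}, and together with \eqref{eq-quadr-8V} it is consistent with the single-site sign $(-1)^\mathsf{x+y}$.

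For the eigenstates \eqref{eigenR-XYZ}-\eqref{eigenL-XYZ} I would start from the SOV eigenstate expansions of \cite{LevNT15}, i.e.\ $\ket{\Psi_{\bar{\mathsf{t}}}^\mathsf{(6VD)}}=\sum_\mathbf{h}\prod_a(\cdots)\,\det_\mathsf{N}[\Theta^{(\mathbf h)}]\,\ket{\mathbf h}$ with the $\mathsf{q}$-coefficients fixed by the $T$-$Q$-type relation, then rewrite the 6-vertex inhomogeneity functions $\mathsc{a},\mathsc{d}$ in terms of $\mathsc{a}_\mathsf{x,y}$ as dictated by the sign $(-1)^\mathsf{x}\,i^\mathsf{xy}$ relating the two transfer matrices, and finally apply $\mathbf{S}^{(0)}\,\mathbf{P}^{(0)}$, noting $\mathbf{P}^{(0)}\ket{\mathbf h}=\ket{\underline{\mathbf h}}$ by \eqref{states-bar}, to land on \eqref{eigenR-XYZ}; the $e^{i\mathsf{y}\eta h_a}$ factors and the scalar-product normalization \eqref{sc-prod} are carried over directly from the 6-vertex side. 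The left eigenstate is treated by the mirror computation using $\big[\mathbf{P}^{(0)}\big]^{-1}\big[\mathbf{S}^{(0)}\big]^{-1}$. The bookkeeping of these prefactors and signs (reconciling the conventions of \cite{LevNT15} with the normalization $\mathsc{a}_\mathsf{x,y}=(-1)^\mathsf{x+y+xy}\mathsc{a}$ and verifying that \eqref{t-q-XYZ} is the image of the 6-vertex $\mathsf{q}$-ratio) is the step I expect to be the main obstacle; everything else is a transparent transport of structure through the already-established isomorphism.
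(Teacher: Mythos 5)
Your proposal is correct and follows essentially the same route as the paper: the theorem is obtained by transporting the complete SOV characterization of the antiperiodic dynamical 6-vertex transfer matrix (Theorem~\ref{thm-eigen-t} of Appendix~\ref{app-spectr-6VD}) through the isomorphism $\mathbf{S}^{(0)}\,\mathbf{P}^{(0)}$ via Theorem~\ref{th-relation-spectre}, with the factor $(-1)^\mathsf{x}\,i^\mathsf{xy}$ accounting for the sign changes in the discrete system and in the $\mathsf{q}$-ratios \eqref{t-q-XYZ}. Your extra verification of the quasi-periods of the interpolation formula is harmless but redundant, since the functional form \eqref{t-8Vxy} is already inherited from the 6-vertex side.
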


It is interesting to remark that the 8-vertex eigenstates \eqref{eigenR-XYZ} and \eqref{eigenL-XYZ} have a complete separated form on the basis of $\mathbb{V}_\mathsf{N}^\mathcal{R}$ given by the states $\mathbf{S}^{(0)}\ket{\underline{\mathbf{h}}}$, $\mathbf{h}\in\{0,1\}^\mathsf{N}$, and on the basis of $\mathbb{V}_\mathsf{N}^\mathcal{L}$ given by the states $\bra{\underline{ \mathbf{h}} }  \big[\mathbf{S}^{(0)}\big]^{-1}$,  $\mathbf{h}\in\{0,1\}^\mathsf{N}$, respectively.
In other words, they belong to the class of states (that we shall call {\em separate states}) which can be written under the following form,
\begin{align}
 & \ket{ \alpha}
=\sum_{\mathbf{h}\in\{0,1\}^\mathsf{N}}
\prod_{a=1}^{\mathsf{N}}\bigg[ e^{i\mathsf{y}\eta h_a}\, \bigg( \frac{\mathsc{a}_\mathsf{x,y}(\xi_a)}{\mathsc{d}(\xi_a-\eta)}\bigg)^{\! h_a}\,  \alpha(\xi_a-\eta h_a)\bigg]\,
\det_{\mathsf{N}}\big[\Theta^{(\mathbf{h}) }\big]\ 
\mathbf{S}^{(0)}\ket{\underline{\mathbf{h}}} , 
 \label{state-alpha}\\
&  \bra{ \beta }
  = \sum_{\mathbf{h}\in\{0,1\}^\mathsf{N}}
\prod_{a=1}^{\mathsf{N}}\left[ e^{i\mathsf{y}\eta h_a}\,  \beta(\xi_a-\eta h_a)\right]\,
\det_{\mathsf{N}}\big[\Theta^{(\mathbf{h})}\big] \ 
\bra{\underline{ \mathbf{h}} }  \big[\mathbf{S}^{(0)}\big]^{-1}, 
\label{state-beta}
\end{align}
in terms of any function $\alpha$ or $\beta$  on the discrete set $\Xi\equiv \{\xi_j,\xi_j-\eta\}_{1\le j\le \mathsf{N}}$.
It follows from \eqref{sc-prod} that the state $\bra{\underline{ \mathbf{h}} }  \big[\mathbf{S}^{(0)}\big]^{-1}$ is proportional to the dual state of $\mathbf{S}^{(0)}\ket{\underline{\mathbf{h}}}$. Hence the scalar product of any separate states of the form \eqref{state-alpha}-\eqref{state-beta} (and in particular of two eigenstates  $\ket{ \Psi_\mathsf{t}^\mathsf{(8V)}}$ and $\bra{ \Psi_{\mathsf{t}'}^\mathsf{(8V)}}$ if one sets $\alpha(\xi_a-\eta h_a)=\mathsf{q}_{{\mathsf{t}},a}^{(h_a)}$ and $\beta(\xi_a-\eta h_a)=\mathsf{q}_{{\mathsf{t}'},a}^{(h_a)}$), can be expressed as a simple determinant, in a form which is quite general for the models solved by SOV (see for instance \cite{Nic13,KitMNT15}):
\begin{equation}\label{sp-gen}
   \moy{\beta\, |\, \alpha}=\det_\mathsf{1\le j,k\le N}\left[\sum_{h=0}^1\bigg(e^{i\mathsf{y}\eta}\, \frac{\mathsc{a}_\mathsf{x,y}(\xi_j)}{\mathsc{d}(\xi_j-\eta)}\bigg)^{\! h} \alpha(\xi_j-h\eta)\, \beta(\xi_j-h\eta)\,\vartheta_{k-1}(\xi_j-h\eta-\bar{\xi}_0)\right],
\end{equation}
in terms of the functions $\vartheta_k(\lambda)$ \eqref{def-theta_j} and the constant $\bar{\xi}_0$ \eqref{mat-Theta}.
Note that, contrary to what happens in the periodic case with $\mathsf{N}$ even for which the existence of a compact determinant representation for the scalar products of the 8-vertex Bethe states is still an open problem, the effect of the vertex-IRF transformation is here completely trivial. In fact, the expression~\eqref{sp-gen} just coincides with the scalar product of two arbitrary separate states in the antiperiodic dynamical 6-vertex model \cite{LevNT15}.

\subsection{Characterization of the spectrum and eigenstates through the solutions of a functional $T$-$Q$ equation}
\label{sec-eq-Baxter}

Theorem~\ref{th-T8V-diag} provides a complete description of the quasi-periodic 8-vertex transfer matrix spectrum and eigenstates, which is however not so convenient for the consideration of the thermodynamic and even the homogeneous limits of the model. To this aim it would be desirable to reformulate this characterization in terms of Bethe-type equations i.e., in terms of some particular classes of solutions of a functional $T$-$Q$ equation of Baxter's type \cite{Bax82L}, as it has already been done in the context of several other models solved by SOV \cite{NicT10,Nic10a,GroN12,NicT15,KitMNT15}.

In fact, this problem has already been considered in \cite{LevNT15} in the case of the antiperiodic dynamical 6-vertex model. There we have discussed  the existence of two different possible reformulations.

On the one hand, it has been proven that the SOV discrete characterization of the antiperiodic dynamical 6-vertex transfer matrix spectrum and eigenstates could be equivalently reformulated in terms of a particular class of solutions of some functional $T$-$Q$ equation with an extra inhomogeneous term (see Appendix B of \cite{LevNT15}). This reformulation can of course be translated to the quasi-periodic 8-vertex case\footnote{Note that in the case of the 8-vertex model with periodic or open boundary conditions similar types of inhomogeneous $T$-$Q$ equations had been previously proposed in \cite{CaoCYSW14} to describe the spectrum only, however without proof of completeness.}, and provides an equivalent complete description of the transfer matrix spectrum and eigenstates, the former in terms of solutions of Bethe-type equations with an extra inhomogeneous term,  and the latter in terms of the multiple action of some operator evaluated at the Bethe roots on a convenient pseudo-vacuum state (see Appendix~\ref{app-inhom} for more details). This reformulation which, except from the extra terms in the Bethe equations, presents many similarities with ABA (in particular from the way the eigenstates can be constructed), allows  for an easier consideration of the homogeneous limit of the model. However, its efficiency for the consideration of the thermodynamic limit is still not so clear due to the difficulties arising from the presence of the inhomogeneous term in the Bethe equations.

On the other hand, we have also studied in \cite{LevNT15} the possibility to characterize, in a probably more efficient way for the consideration of the thermodynamic limit, the antiperiodic dynamical 6-vertex spectrum and eigenstates in terms of solutions of the usual $T$-$Q$ equation (i.e. without extra inhomogeneous term), hence leading to a reformulation in terms of solutions of usual Bethe-type equations. This problem is in principle slightly more delicate, since we have to identify the functional form of the solutions and to show the completeness of this description. An ansatz has been proposed in \cite{LevNT15} concerning the functional form of the solutions, i.e. the form of the Bethe equations. The completeness of this ansatz has been proven in the case of a model with an even number of sites. We can therefore use this result to formulate directly the following theorem:

\begin{theorem}\label{th-hom-eq}
Let $\mathsf{(x,y)}\not= (0,0)$ and let us suppose that the inhomogeneity parameters of the model satisfy \eqref{cond-inh0}-\eqref{cond-inh}.
Then we have the following properties:
\begin{enumerate}
\item Let $\mathsf{t}(\lambda)$ be an entire function such that
\begin{enumerate}
\item\label{cond-a} there exists a function $Q(\lambda)$ such that $\mathsf{t}(\lambda)$ and $Q(\lambda)$ satisfy the functional equation
\begin{equation}\label{hom-eq}
   \mathsf{t}(\lambda)\, Q(\lambda) = (-1)^\mathsf{y}\,(-i)^\mathsf{xy}\,\mathsc{a}(\lambda)\, Q(\lambda-\eta)+(-1)^\mathsf{x}\, i^\mathsf{xy}\,\mathsc{d}(\lambda)\, Q(\lambda+\eta),
\end{equation}
\item this function $Q(\lambda)$ is such that, for each $n\in\{1,\ldots,\mathsf{N}\}$, there exists $(\alpha_n,\beta_n)\in\{0,1\}^2$ such that $\big( Q(\xi_n+\alpha_n\pi+\beta_n\pi\omega), Q(\xi_n+\alpha_n\pi+\beta_n\pi\omega-\eta)\big)\not=(0,0)$,
\item $\mathsf{t}(\lambda)$ satisfies the quasi-periodicity properties \eqref{periodt-1}-\eqref{periodt-2}.
\end{enumerate}
Then $\mathsf{t}(\lambda)$ is an eigenvalue of the $\mathsf{(x,y)}$-twisted 8-vertex transfer matrix \eqref{transfer}. The corresponding one-dimensional eigenspace is generated by the following vectors, which are proportional to each others:
\begin{multline}
  \ket{\Psi_\mathsf{t}^{(\alpha,\beta)} }
  =\sum_{\mathbf{h}\in\{0,1\}^\mathsf{N}}
\prod_{a=1}^{\mathsf{N}}\left[ 
\left( \frac{e^{i\mathsf{y}\eta}\mathsc{a}_\mathsf{x,y}(\xi_a)}{\mathsc{d}(\xi_a-\eta)}\right)^{\! h_a} 
e^{i [ \alpha_a\frac{\pi\mathsf{y}}{\eta}+\beta_a(\mathsf{N}+\frac{\pi\mathsf{x}}{\eta}) ](\xi_a-\eta h_a)}
\right]
\\
\times\prod_{a=1}^\mathsf{N} Q(\xi_a-\eta h_a+\alpha_a\pi+\beta_a\pi\omega)\
\det_{\mathsf{N}}\big[\Theta^{(\mathbf{h}) }\big]\ 
\mathbf{S}^{(0)}\ket{\underline{\mathbf{h}}} .
\end{multline}
\item Let $\mathsf{t}(\lambda )$ be an eigenvalue of the $\mathsf{(x,y)}$-twisted 8-vertex transfer matrix \eqref{transfer}. 
Then, if $\mathsf{N}$ is even, there exists a unique function $Q(\lambda )$ of the form
\begin{equation}\label{Q-X}
  Q(\lambda)=\prod_{j=1}^\mathsf{N}\theta_{\mathsf{X}}(\lambda-\lambda_j),
\end{equation}
for some set of roots $\lambda_1,\ldots,\lambda_\mathsf{N}\in\mathbb{C}$, such that $\mathsf{t}(\lambda )$ and $Q(\lambda )$ satisfy the homogeneous functional equation \eqref{hom-eq}.
This function $Q(\lambda)$ is such that, for each $n\in\{1,\ldots,\mathsf{N}\}$, $\big( Q(\xi_n-\eta),Q(\xi_n-\eta+\pi),Q(\xi_n-\eta+\pi\omega)\big)\not=(0,0,0)$.
In \eqref{Q-X}, the notation  $\theta_\mathsf{X}(\lambda)$ stands for the function 
\begin{equation}\label{theta-X}
  \theta_\mathsf{X}(\lambda)
  =\begin{cases}
     \theta_1\big(\frac{\lambda}{2}\,\big|\,\frac{\omega}{2}\big) 
     &\text{if }\ \mathsf{(x,y)}=(0,1),  
     \\
     \theta_1(\lambda|2\omega) 
     &\text{if }\ \mathsf{(x,y)}=(1,0),
     \\
     e^{i\frac{\lambda}{2}}\,\theta_1\Big(\frac{\lambda}{2}\,\Big|\,\omega\Big)\, \theta_1\Big(\frac{\lambda+\pi+\pi\omega}{2}\,\Big|\,\omega\Big) 
      &\text{if }\ \mathsf{(x,y)}=(1,1).
   \end{cases}
\end{equation}

\end{enumerate}
\end{theorem}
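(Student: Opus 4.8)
```latex
\noindent\textbf{Proof plan for Theorem~\ref{th-hom-eq}.}
The plan is to deduce Theorem~\ref{th-hom-eq} from Theorem~\ref{th-T8V-diag} together with the analogous result for the antiperiodic dynamical $6$-vertex model proved in \cite{LevNT15} (recalled in Appendix~\ref{app-spectr-6VD}), transported through the isomorphism $\mathbf{S}^{(0)}\,\mathbf{P}^{(0)}$. The bridge is provided by Theorem~\ref{th-relation-spectre}: $\mathsf{t}(\lambda)$ is an eigenvalue of $\mathsf{T}^{\mathsf{(8V)}}_\mathsf{(x,y)}(\lambda)$ if and only if $\bar{\mathsf{t}}(\lambda)=(-1)^\mathsf{x}i^{-\mathsf{xy}}\,\mathsf{t}(\lambda)$ is an eigenvalue of $\overline{\mathcal{T}}(\lambda)$, and the corresponding eigenstates are images under $\mathbf{S}^{(0)}\mathbf{P}^{(0)}$. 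So the first thing I would do is rewrite the homogeneous $T$-$Q$ equation \eqref{hom-eq} in terms of $\bar{\mathsf{t}}$: multiplying \eqref{hom-eq} by $(-1)^\mathsf{x}i^{-\mathsf{xy}}$ and collecting the twist-dependent signs, one checks that \eqref{hom-eq} is exactly the $\mathsf{(x,y)}$-dependent functional $T$-$Q$ equation for $\overline{\mathcal{T}}(\lambda)$ established in \cite{LevNT15}, with the same $Q(\lambda)$. This is a routine sign bookkeeping.

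\smallskip
\noindent\emph{Part (1) (sufficiency).}
Assume $\mathsf{t}(\lambda)$ entire satisfying (a), (b), (c). Step one: set $\bar{\mathsf{t}}(\lambda)=(-1)^\mathsf{x}i^{-\mathsf{xy}}\mathsf{t}(\lambda)$ and verify that the hypotheses (a)--(c) translate literally into the hypotheses of the corresponding ``sufficiency'' statement for the antiperiodic $6$-vertex model in \cite{LevNT15}: the functional equation becomes the $6$-vertex one, the non-vanishing condition (b) at the shifted points $\xi_n+\alpha_n\pi+\beta_n\pi\omega$ is the same (note $Q$ is unchanged by the gauge transformation), and the quasi-periodicity \eqref{periodt-1}--\eqref{periodt-2}, via \eqref{per-transfer1}--\eqref{per-transfer2}, becomes the quasi-periodicity of an admissible $6$-vertex eigenvalue. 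Step two: invoke \cite{LevNT15} to conclude $\bar{\mathsf{t}}(\lambda)\in\Sigma^\mathsf{(6VD)}$, hence $\mathsf{t}(\lambda)\in\Sigma^\mathsf{(8V)}_\mathsf{(x,y)}$ by Theorem~\ref{th-relation-spectre}. Step three: the explicit eigenstate. In \cite{LevNT15} the corresponding $6$-vertex eigenstate is written in the SOV basis with coefficients $\prod_a Q(\xi_a-\eta h_a+\alpha_a\pi+\beta_a\pi\omega)$ times the exponential prefactor $e^{i[\alpha_a\frac{\pi\mathsf{y}}{\eta}+\beta_a(\mathsf{N}+\frac{\pi\mathsf{x}}{\eta})](\xi_a-\eta h_a)}$ and the normalization $\det_\mathsf{N}[\Theta^{(\mathbf{h})}]$; applying $\mathbf{P}^{(0)}$ replaces $\ket{\mathbf{h}}$ by $\ket{\underline{\mathbf{h}}}$ (definition \eqref{states-bar}) and then $\mathbf{S}^{(0)}$ yields $\mathbf{S}^{(0)}\ket{\underline{\mathbf{h}}}$, which is exactly the stated vector $\ket{\Psi_\mathsf{t}^{(\alpha,\beta)}}$. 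That this vector is non-zero, and that all the admissible choices of $(\alpha_a,\beta_a)$ give proportional vectors, is again the content of the $6$-vertex statement transported through the isomorphism. Simplicity of the spectrum (Theorem~\ref{th-T8V-diag}) then forces the eigenspace to be one-dimensional.

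\smallskip
\noindent\emph{Part (2) (necessity, $\mathsf{N}$ even).}
Given $\mathsf{t}(\lambda)\in\Sigma^\mathsf{(8V)}_\mathsf{(x,y)}$, set $\bar{\mathsf{t}}(\lambda)=(-1)^\mathsf{x}i^{-\mathsf{xy}}\mathsf{t}(\lambda)\in\Sigma^\mathsf{(6VD)}$. By the completeness result of \cite{LevNT15} for $\mathsf{N}$ even, there is a unique $Q(\lambda)$ of elliptic-polynomial form, with the twist-dependent quasi-periods encoded precisely by $\theta_\mathsf{X}$ in \eqref{theta-X}, solving the $6$-vertex homogeneous $T$-$Q$ equation; translating back, the same $Q$ solves \eqref{hom-eq}. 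The non-vanishing statement $\big(Q(\xi_n-\eta),Q(\xi_n-\eta+\pi),Q(\xi_n-\eta+\pi\omega)\big)\neq(0,0,0)$ is inherited verbatim from \cite{LevNT15}, as is the uniqueness of $Q$ (which uses the quasi-periodicity of $Q$ pinning down the number and location of its zeros modulo the period lattice, together with the normalization). The only thing I would spell out here is that the quasi-period of $\theta_\mathsf{X}$ matches the exponent $\alpha_a\frac{\pi\mathsf{y}}{\eta}+\beta_a(\mathsf{N}+\frac{\pi\mathsf{x}}{\eta})$ appearing in Part (1), so that the two parts are mutually consistent; this is a direct computation with the quasi-periodicity factors of $\theta_1$.

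\smallskip
\noindent The main obstacle is essentially bookkeeping rather than a conceptual difficulty: one must make sure that every twist-dependent sign and exponential prefactor ($(-1)^\mathsf{x}$, $i^\mathsf{xy}$, $e^{i\mathsf{y}\eta h_a}$, the factor $\mathsc{a}_\mathsf{x,y}=(-1)^\mathsf{x+y+xy}\mathsc{a}$, and the $\theta_\mathsf{X}$ quasi-periods) produced by the vertex-IRF transformation $\mathbf{S}^{(0)}$ and by Theorem~\ref{th-relation-spectre} is transported correctly, so that the statement one imports from \cite{LevNT15} is genuinely the one written here. Since everything substantive — existence, uniqueness and completeness of the $Q$-function ansatz, non-vanishing at inhomogeneities, and the SOV form of the eigenstates — is already established in \cite{LevNT15} for the antiperiodic dynamical $6$-vertex model, and the map $\mathbf{S}^{(0)}\mathbf{P}^{(0)}$ is a vector-space isomorphism for $\mathsf{(x,y)}\neq(0,0)$, no new analytic input is needed.
```
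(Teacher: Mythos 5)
Your proposal is correct and follows essentially the same route as the paper: the paper derives Theorem~\ref{th-hom-eq} directly from the homogeneous $T$-$Q$ characterization of the antiperiodic dynamical $6$-vertex spectrum and eigenstates in \cite{LevNT15} (including its completeness proof for $\mathsf{N}$ even), transported through the vertex-IRF isomorphism $\mathbf{S}^{(0)}\mathbf{P}^{(0)}$ of Theorem~\ref{th-relation-spectre}, exactly as you outline. Your sign bookkeeping (the functional equation \eqref{hom-eq} rescaled by $(-1)^{\mathsf{x}}i^{-\mathsf{xy}}$ reproduces the $6$-vertex equation with coefficient $\mathsc{a}_{\mathsf{x,y}}$, consistent with \eqref{t-q-6VD} and \eqref{t-q-XYZ}) is accurate.
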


Hence Theorem~\ref{th-hom-eq} provides, at least in the even $\mathsf{N}$ case, an alternative description of the spectrum and eigenstates of $\mathsf{T}^{\mathsf{(8V)}}_\mathsf{(x,y)}(\lambda )$ in terms of solutions of a system of Bethe equations.
These Bethe equations are written as the entireness condition for a function  $\mathsf{t}(\lambda)$ defined by the relation \eqref{hom-eq} in terms of a function $Q(\lambda)$ of the form \eqref{Q-X}-\eqref{theta-X}, i.e. as a system of equations for the roots $\lambda_j$, $j=1,\ldots,\mathsf{N}$, of the function $Q(\lambda)$  \eqref{Q-X}-\eqref{theta-X}.

Moreover, it is still possible to rewrite the corresponding eigenstates in a form more similar to what we have in the ABA framework, i.e. by multiple action, on a given pseudo-vacuum state, of a product of operators evaluated at the Bethe roots. Indeed, it has first been  shown in \cite{DerKM03} in the case of the non-compact quantum $SL(2,\mathbb{R}$) spin chain that the SOV representation of the transfer matrix eigenstates associated with a polynomial $Q$-function admits a rewriting in the ABA form. The arguments of \cite{DerKM03} can easily be adapted to other quantum integrable models solved by SOV provided the corresponding $Q$-functions still admit some (possibly model-dependent) generalized polynomial form. This is the case here, and we can define some adequate pseudo-vacuum states and represent the eigenstates by a product of diagonal operators in the SOV basis evaluated at the Bethe roots $\lambda_j$, similarly as what has been done in  \cite{LevNT15} for the antiperiodic dynamical 6-vertex model. However, a crucial difference with usual ABA (and with respect to what is obtained when considering the simpler XXX model \cite{KitMNT15}) is that the operators which we use here to generate the eigenstates are not directly some of the generators of the Yang-Baxter algebra.
This is due to the fact that the $Q$-function has not the same functional form as the usual functions of the model.

Concretely,
for each $\mathsf{N}$-tuple $\boldsymbol{\beta}=(\beta_1,\ldots,\beta_\mathsf{N})\in\{0,1\}^\mathsf{N}$, we define an  operator $D_{\boldsymbol{\beta}}(\lambda)$ by its diagonal action in the SOV basis \eqref{states-bar} as%
\begin{equation}\label{def-Dbeta}
     D_{\boldsymbol{\beta}}(\lambda)\,  \ket{\underline{\mathbf{h}}}
     =\prod_{n=1}^\mathsf{N}\!
     \left\{ 
     \Big[c_\mathsf{X}^{\vphantom{-1}}\, e^{\frac{i\pi(\mathsf{x+y-xy})h_n}{\mathsf{N}}+i\delta_\mathsf{y=0}(\lambda-\xi_n^{(h_n)})} \Big]^{\beta_n}\,
     \theta_\mathsf{X}\big(\lambda-\xi_n^{(h_n)}+\beta_n\pi_\mathsf{X}^{\vphantom{1}}\big) \right\}
       \ket{\underline{\mathbf{h}}},
\end{equation}
where
\begin{align}
  &c_\mathsf{X}^{-1}=
  \begin{cases}
  \theta_4(0|\omega) &\text{if } \mathsf{x}=0,\\
  \frac{i}{2}\, e^{-i\frac{\pi\omega}{2}}\, \theta_2(0|\omega) &\text{if } \mathsf{y}=0,\\
  \frac{1}{2}\, e^{-i\frac{\pi\omega}{2}}\, \theta_2(0|\omega)\,\theta_3(0|\omega)\,\theta_4(0|\omega)\quad&\text{if } \mathsf{x=y},
  \end{cases}
  \\
  &\pi_\mathsf{X}^{\vphantom{1}}=(1-\delta_{\mathsf{y}=0})\,\pi+\delta_{\mathsf{y}=0}\, \pi\omega.
\end{align}
Let us remark that, for each $\beta\in\{0,1\}^\mathsf{N}$, the operator
\begin{equation}
\bar{D}(\lambda )=e^{\frac{i\pi (\mathsf{x+y-xy})(\mathsf{S}-\mathsf{N})}{2\mathsf{N}}}\,
D_{\boldsymbol{\beta}}(\lambda )\,D_{\boldsymbol{1-\beta}}(\lambda ),
\end{equation}
where $\boldsymbol{1-\beta}$ stands for the $\mathsf{N}$-tuple $(1-\beta_{1},\ldots ,1-\beta _{\mathsf{N}})$, does not depend on $\boldsymbol{\beta}$ and has the following simple diagonal form on the SOV basis \eqref{states-bar}:
\begin{equation} \label{bar-D}
\bar{D}(\lambda )\,\ket{\underline{\mathbf{h}}}
=\prod_{n=1}^{\mathsf{N}}\theta (\lambda -\xi _{n}^{(h_{n})})\,\ket{\underline{\mathbf{h}}}.
\end{equation}
Let us also define the following right $\ket{\underline{\Omega}}$ and left $\bra{ \underline{\Omega} }$ pseudo-vacuum states as the simplest separate states of the form \eqref{state-alpha} and \eqref{state-beta} associated with the function with constant value 1 on the discrete set $\Xi\equiv \{\xi_j,\xi_j-\eta\}_{1\le j\le \mathsf{N}}$:
\begin{align}
& \ket{\underline{\Omega}}  
   =\sum_{\mathbf{h}\in \{0,1\}^{\mathsf{N}}}
     \prod_{a=1}^{\mathsf{N}}\left( 
     \frac{e^{i\mathsf{y}\eta}\, \mathsc{a}_{\mathsf{x,y}}(\xi _{a})}{\mathsc{d}(\xi _{a}-\eta )}\right) ^{\!h_{a}}
     \det_{\mathsf{N}}\big[\Theta ^{(0,\mathbf{h})}\big]\ 
     \mathbf{S}^{(0)}\ket{\underline{\mathbf{h}}} ,  \label{ref1} \\
& \bra{ \underline{\Omega} }
   =\sum_{\mathbf{h}\in \{0,1\}^{\mathsf{N}}}\prod_{a=1}^{\mathsf{N}} e^{i\mathsf{y}h_a \eta }\
     \det_{\mathsf{N}}\big[\Theta ^{(0,\mathbf{h})}\big]\,
     \bra{\underline{ \mathbf{h}} }  \big[\mathbf{S}^{(0)}\big]^{-1}.  \label{ref2}
\end{align}
Then one can use our ansatz of \cite{LevNT15} to formulate the following proposition which, when $\mathsf{N}$ is even, is just a corollary of Theorem~\ref{th-hom-eq}:

\begin{proposition}\label{Cor-Eigen-Bethe}
Let the condition \eqref{cond-inh} be satisfied and let us denote with $\Sigma_\mathrm{BAE}$ the set of different (up to the real quasi-period of $\theta_\mathsf{X}$) Bethe roots $\Lambda=\{\lambda_1,\ldots,\lambda_\mathsf{N}\}$ defined by the requirements
\begin{enumerate}
  \item there exists $h\in\{0,1\}$ such that the following function 
\begin{equation}\label{Bethe-fct}
    \mathsf{t}(\la)\equiv(-1)^\mathsf{y}\,(-i)^\mathsf{xy}\,e^{ih(1-\mathsf{y})\eta}\,\mathsc{a}(\lambda)\, \frac{{Q}(\lambda-\eta)}{{Q}(\lambda)}
    +(-1)^\mathsf{x}\, i^\mathsf{xy}\,e^{-ih(1-\mathsf{y})\eta}\,\mathsc{d}(\lambda)\, \frac{{Q}(\lambda+\eta)}{{Q}(\lambda)},
\end{equation}
is entire and satisfies the quasi-periodicity properties \eqref{periodt-1}-\eqref{periodt-2},
  \item there exists a $\mathsf{N}$-tuple $\boldsymbol{\beta}\in\{0,1\}^\mathsf{N}$ such that ${Q}(\xi_n+\beta_n\pi_\mathsf{X}^{\vphantom{1}})\not= 0$,
  $\forall n\in\{1,\ldots,\mathsf{N}\}$,
\end{enumerate}
where 
$Q(\lambda)$ is defined in terms of $\Lambda$ by \eqref{Q-X}.
Then for any $\Lambda\in\Sigma_\mathrm{BAE}$ the entire function \eqref{Bethe-fct} belong to the spectrum $\Sigma_{\mathsf{(x,y)}}^\mathsf{(8V)}$ of the $\mathsf{(x,y)}$-twisted transfer matrix $\mathsf{T}^{\mathsf{(8V)}}_\mathsf{(x,y)}(\lambda )$ \eqref{transfer} and the corresponding one-dimensional right and left  eigenspaces   are the one-dimensional subspaces of $\mathbb{V}_\mathsf{N}^{\mathcal{R/L}}$  spanned by all vectors of the type
\begin{equation}\label{Bethe-eigen}
  \ket{\Psi_{\Lambda, \boldsymbol{\beta}} }
    =\prod_{j=1}^{\mathsf{N}} D_{\boldsymbol{\beta}}(\lambda_j)\,
       \ket{\underline{\Omega}} ,
       \qquad\text{respectively}\quad
  \bra{ \Psi_{\Lambda, \boldsymbol{\beta}  } }
    =\bra{ \underline{\Omega} }
      \prod_{j=1}^{\mathsf{N}} D_{\boldsymbol{\beta}}(\lambda_j) ,
\end{equation}
for any $\mathsf{N}$-tuple $\boldsymbol{\beta}\in\{0,1\}^\mathsf{N}$ satisfying 2.
In \eqref{Bethe-eigen}, the operators $D_{\boldsymbol{\beta}}(\lambda)$ are defined as in \eqref{def-Dbeta}, and the pseudo-vacuum states $\ket{\underline{\Omega}}$ and $\bra{ \underline{\Omega} }$ are the simplest separate states \eqref{ref1} and \eqref{ref2}.
If moreover $\mathsf{N}$ is even, then we can fix $h=0$, and by \eqref{Bethe-fct} and \eqref{Q-X} is defined a one-to-one correspondence between the sets $\Sigma_\mathrm{BAE}$ and $\Sigma_{\mathsf{(x,y)}}^\mathsf{(8V)}$, i.e. the Bethe ansatz equations are complete.
\end{proposition}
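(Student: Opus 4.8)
The plan is to deduce Proposition~\ref{Cor-Eigen-Bethe} from Theorem~\ref{th-hom-eq}, to which it essentially reduces; the only substantial work is the translation between the two forms~\eqref{hom-eq} and~\eqref{Bethe-fct} of the functional $T$-$Q$ equation and the rewriting of the Theorem~\ref{th-hom-eq} eigenvectors in the product form~\eqref{Bethe-eigen}. I would first treat, for arbitrary $\mathsf{N}$, the implication ``$\Lambda\in\Sigma_\mathrm{BAE}\Rightarrow\mathsf{t}\in\Sigma_{\mathsf{(x,y)}}^{\mathsf{(8V)}}$''. Given $\Lambda=\{\lambda_1,\ldots,\lambda_\mathsf{N}\}$, set $Q(\lambda)=\prod_j\theta_\mathsf{X}(\lambda-\lambda_j)$ and $\widehat{Q}(\lambda)=e^{-ih(1-\mathsf{y})\lambda}\,Q(\lambda)$ with $h\in\{0,1\}$ as in requirement~1; multiplying~\eqref{Bethe-fct} by $Q(\lambda)$ and absorbing the exponential $e^{\pm ih(1-\mathsf{y})\eta}$ into $\widehat{Q}$ shows that $\mathsf{t}$ and $\widehat{Q}$ solve~\eqref{hom-eq}, which is condition~(a) of Theorem~\ref{th-hom-eq}; condition~(c) is requirement~1; and condition~(b) follows from requirement~2, since $\xi_n+\beta_n\pi_\mathsf{X}$ is of the form $\xi_n+\alpha_n\pi+\beta_n'\pi\omega$ for a suitable $(\alpha_n,\beta_n')\in\{0,1\}^2$, and $\widehat{Q}$ differs from $Q$ by a nowhere-vanishing factor. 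By the first part of Theorem~\ref{th-hom-eq}, $\mathsf{t}(\lambda)$ then lies in $\Sigma_{\mathsf{(x,y)}}^{\mathsf{(8V)}}$, with a one-dimensional eigenspace spanned by the explicit vector $\ket{\Psi_\mathsf{t}^{(\alpha,\beta)}}$ of that theorem.

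It then remains to recognize $\ket{\Psi_\mathsf{t}^{(\alpha,\beta)}}$ as $\prod_j D_{\boldsymbol{\beta}}(\lambda_j)\ket{\underline{\Omega}}$ (and similarly on the left). Using the diagonal action~\eqref{def-Dbeta} of $D_{\boldsymbol{\beta}}(\lambda)$ on the SOV basis and the expansion~\eqref{ref1} of $\ket{\underline{\Omega}}$, the component of $\prod_j D_{\boldsymbol{\beta}}(\lambda_j)\ket{\underline{\Omega}}$ along $\mathbf{S}^{(0)}\ket{\underline{\mathbf{h}}}$ equals $\det_{\mathsf{N}}\big[\Theta^{(0,\mathbf{h})}\big]\,\prod_a\big(\tfrac{e^{i\mathsf{y}\eta}\mathsc{a}_{\mathsf{x,y}}(\xi_a)}{\mathsc{d}(\xi_a-\eta)}\big)^{h_a}\,\prod_n\prod_j\big\{[c_\mathsf{X}e^{\cdots}]^{\beta_n}\theta_\mathsf{X}(\lambda_j-\xi_n^{(h_n)}+\beta_n\pi_\mathsf{X})\big\}$. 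By the quasi-periodicity and quasi-parity of $\theta_\mathsf{X}$, the product over $j$ of $\theta_\mathsf{X}(\lambda_j-\xi_n^{(h_n)}+\beta_n\pi_\mathsf{X})$ coincides, up to a nowhere-vanishing exponential, with $Q(\xi_n^{(h_n)}+\alpha_n\pi+\beta_n'\pi\omega)$, and a case-by-case computation — already carried out in~\cite{LevNT15} for the antiperiodic dynamical $6$-vertex model — shows that the accompanying product of the $[c_\mathsf{X}e^{\cdots}]^{\beta_n}$ reproduces exactly the prefactor $e^{i[\alpha_a\frac{\pi\mathsf{y}}{\eta}+\beta_a(\mathsf{N}+\frac{\pi\mathsf{x}}{\eta})](\xi_a-\eta h_a)}$ of the Theorem~\ref{th-hom-eq} eigenstate. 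Since, by Corollary~\ref{cor-rel-T}, the isomorphism $\mathbf{S}^{(0)}\mathbf{P}^{(0)}$ intertwines the two descriptions and acts trivially on these separate-state expansions, the identification carries over verbatim, which completes the proof of the first assertion.

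For the completeness statement I would assume $\mathsf{N}$ even and show that $\Lambda\mapsto\mathsf{t}$ is a bijection $\Sigma_\mathrm{BAE}\to\Sigma_{\mathsf{(x,y)}}^{\mathsf{(8V)}}$. Given $\mathsf{t}\in\Sigma_{\mathsf{(x,y)}}^{\mathsf{(8V)}}$, the second part of Theorem~\ref{th-hom-eq} supplies a \emph{unique} $Q$ of the form~\eqref{Q-X}--\eqref{theta-X} solving~\eqref{hom-eq} with $\mathsf{t}$ and such that $(Q(\xi_n-\eta),Q(\xi_n-\eta+\pi),Q(\xi_n-\eta+\pi\omega))\neq(0,0,0)$ for all $n$. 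Taking $h=0$, equation~\eqref{hom-eq} is exactly~\eqref{Bethe-fct}, so requirement~1 holds (entireness since $\mathsf{t}$ is an eigenvalue, quasi-periodicity by~\eqref{periodt-1}--\eqref{periodt-2}). For requirement~2, I would first reduce the above triple to the pair $(Q(\xi_n-\eta),Q(\xi_n-\eta+\pi_\mathsf{X}))\neq(0,0)$ using the relevant quasi-periodicity of $\theta_\mathsf{X}$ ($\pi\omega$ being, up to a nonzero exponential, a quasi-period when $\mathsf{y}=1$ and $\pi$ when $\mathsf{y}=0$, while $\pi-\pi\omega$ lies in the period lattice of $\theta_\mathsf{X}$ when $\mathsf{(x,y)}=(1,1)$), and then shift the argument from $\xi_n-\eta$ to $\xi_n$ by means of~\eqref{hom-eq} evaluated at $\lambda=\xi_n$ and $\lambda=\xi_n+\pi_\mathsf{X}$, where $\mathsc{d}$ vanishes while $\mathsc{a}$ and $\mathsf{t}$ do not (thanks to~\eqref{cond-inh0}--\eqref{cond-inh},~\eqref{eq-quadr-8V} and~\eqref{per-transfer1}--\eqref{per-transfer2}); this yields $(Q(\xi_n),Q(\xi_n+\pi_\mathsf{X}))\neq(0,0)$, i.e. requirement~2 for a suitable $\boldsymbol{\beta}$. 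Hence the root multiset of $Q$ belongs to $\Sigma_\mathrm{BAE}$ and maps back to $\mathsf{t}$, which gives surjectivity; injectivity is the uniqueness clause of Theorem~\ref{th-hom-eq}, two elements of $\Sigma_\mathrm{BAE}$ with the same image giving rise to the same function $Q$ of the form~\eqref{Q-X}, hence to the same root multiset modulo the real quasi-period of $\theta_\mathsf{X}$. Together with the first part this is the announced bijection.

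The main obstacle is the eigenvector identification of the second step: one must match, separately for $\mathsf{(x,y)}=(0,1),(1,0),(1,1)$, all the data entering $D_{\boldsymbol{\beta}}(\lambda)$ in~\eqref{def-Dbeta} — the constant $c_\mathsf{X}$, the linear exponents, the half-shifts $\xi_n^{(h_n)}$ and the quasi-period $\pi_\mathsf{X}$ — with the $Q$-evaluations $Q(\xi_a-\eta h_a+\alpha_a\pi+\beta_a\pi\omega)$ and the exponents $\alpha_a\frac{\pi\mathsf{y}}{\eta}+\beta_a(\mathsf{N}+\frac{\pi\mathsf{x}}{\eta})$ of Theorem~\ref{th-hom-eq}, using the case-dependent quasi-periodicity relations of $\theta_\mathsf{X}$ (and, for $\mathsf{y}=0$, keeping track of the auxiliary parameter $h$). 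This is, however, purely computational and coincides with the verification already performed in~\cite{LevNT15}, which transports here without change because the vertex-IRF operator $\mathbf{S}^{(0)}$ has trivial effect on separate states.
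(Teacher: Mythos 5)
Your proposal is correct and follows the same route the paper takes: the paper states this proposition without a separate proof, presenting it as a direct corollary of Theorem~\ref{th-hom-eq} (for $\mathsf{N}$ even) together with the ansatz and computations of \cite{LevNT15}, and your argument simply fleshes out that deduction — the absorption of $e^{\pm ih(1-\mathsf{y})\eta}$ into a modified $Q$, the translation of requirement~2 into condition~(b), the reduction of the triple non-vanishing condition to $(Q(\xi_n),Q(\xi_n+\pi_{\mathsf{X}}))\neq(0,0)$ via the quasi-periodicity of $\theta_{\mathsf{X}}$ and the $T$-$Q$ relation at $\lambda=\xi_n$, $\xi_n+\pi_{\mathsf{X}}$, and the eigenvector matching deferred (as in the paper) to the case-by-case computation of \cite{LevNT15}. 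No gaps beyond what the paper itself leaves to that reference.
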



\section{The periodic case with an odd number of sites}
\label{sec-per}

The previous study does not directly apply to the periodic case $\mathsf{(x,y)}=(0,0)$ with $\mathsf{N}$ odd. 
In that case, the relations~\eqref{right-act-T}-\eqref{right-act-T2} still hold but, since $\mathbf{S}^{(0)}$ has a non-zero kernel, these relations are a priori not sufficient to completely determine the 8-vertex transfer matrix spectrum and eigenstates in terms of the dynamical 6-vertex ones. We shall see here that it is nevertheless possible to completely characterize the periodic 8-vertex transfer matrix spectrum and eigenstates.

The idea is to define another endomorphism $\hat{\mathbf{S}}^{(0)}$ of $\mathbb{V}_\mathsf{N}$ as
\begin{equation}\label{def-Sbar}
   \hat{\mathbf{S}}^{(0)}=\mathbf{S}^{(0)}\, \Gamma_z,\qquad
   \text{with}\quad \Gamma_z=\otimes_{n=1}^\mathsf{N}\sigma_n^z.
\end{equation}
It is indeed easy to see that, since both $R$-matrices \eqref{R-8V} and \eqref{R-6VD} commute with $\sigma^z\otimes\sigma^z$,  the periodic 8-vertex transfer matrix and the antiperiodic dynamical 6-vertex transfer matrix respectively  commutes and anti-commutes with $\Gamma_z$:
\begin{align}
  & \Gamma_z\, \mathsf{T}^{\mathsf{(8V)}}_\mathsf{(0,0)}(\lambda )\, \Gamma_z=\mathsf{T}^{\mathsf{(8V)}}_\mathsf{(0,0)}(\lambda ),
   \\
   &\Gamma_z\, \overline{\mathcal{T}}(\lambda )\, \Gamma_z=- \overline{\mathcal{T}}(\lambda ).
   \label{z-Tbar}
\end{align}
This implies the following additional relation, which can be deduced from \eqref{right-act-T2} for any state $\ket{\mathbf{v}}$ of $\mathbb{\bar{D}}_{\mathsf{(6VD)},\mathsf{N}}^{(0,\mathcal{R})}$:
\begin{equation}\label{right-act-T3}
   \mathsf{T}_{\mathsf{(0,0)}}^{\mathsf{(8V)}}(\lambda )\, \hat{\mathbf{S}}^{(0)}\,\mathbf{P}^{(0)}\, \ket{\mathbf{v}}
  =- \hat{\mathbf{S}}^{(0)}\,\mathbf{P}^{(0)}\,
  \overline{\mathcal{T}}(\lambda)\, \ket{\mathbf{v}}.
\end{equation}
In the following we clarify our interest in the introduction of this second gauge transformation. To this aim, let us start by some preliminary considerations about the spectrum of $\overline{\mathcal{T}}(\lambda)$ and of $\mathsf{T}_{\mathsf{(0,0)}}^{\mathsf{(8V)}}(\lambda )$.

\begin{lemma}\label{lem-sp-6VD}
The spectrum $\Sigma^\mathsf{(6VD)}$ of the antiperiodic dynamical 6-vertex transfer matrix $\overline{\mathcal{T}}(\lambda )$ in $\mathbb{\bar{D}}_{\mathsf{(6VD)},\mathsf{N}}^{(0)}$ can be decomposed as the union
\begin{equation}
    \Sigma^\mathsf{(6VD)}=\Sigma^\mathsf{(6VD)}_+\cup\Sigma^\mathsf{(6VD)}_-
\end{equation}
of two disjoint subsets $\Sigma^\mathsf{(6VD)}_+$ and $\Sigma^\mathsf{(6VD)}_-$ of the same cardinality $2^{\mathsf{N}-1}$, where
\begin{equation}\label{Sigma+}
   \Sigma^\mathsf{(6VD)}_+
   =\left\{ \bar{\mathsf{t}}_+(\lambda )\in\Sigma^\mathsf{(6VD)}\ \Bigg|\, \prod_{n=1}^\mathsf{N}\bar{\mathsf{t}}_+(\xi_n )=\prod_{n=1}^\mathsf{N}\mathsc{a}(\xi_n)\right\}
\end{equation}
and
\begin{equation}\label{Sigma-}
  \Sigma^\mathsf{(6VD)}_-
  =\left\{ \bar{\mathsf{t}}_-(\lambda )=-\bar{\mathsf{t}}_+(\lambda )\
  \Big|\ \bar{\mathsf{t}}_+(\lambda )\in\Sigma^\mathsf{(6VD)}_+\right\}.
\end{equation}
Let $\ket{ \Psi_{\bar{\mathsf{t}}_+}^\mathsf{(6VD)} }$ be the unique (up to normalization) $\overline{\mathcal{T}}(\lambda )$-eigenvector  with eigenvalue $\bar{\mathsf{t}}_+(\lambda )\in\Sigma^\mathsf{(6VD)}_+$, then $\Gamma_z\, \ket{ \Psi_{\bar{\mathsf{t}}_+} }$ is the unique (up to normalization) $\overline{\mathcal{T}}(\lambda )$-eigenvector  with eigenvalue $\bar{\mathsf{t}}_-(\lambda )=-\bar{\mathsf{t}}_+(\lambda )\in\Sigma^\mathsf{(6VD)}_-$.
\end{lemma}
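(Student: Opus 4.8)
The plan is to deduce the statement from two facts about the antiperiodic $\mathsf{(6VD)}$ transfer matrix $\overline{\mathcal{T}}(\lambda)$ established by the SOV analysis of \cite{LevNT15} (recalled in Appendix~\ref{app-spectr-6VD}), combined with the anti-commutation relation \eqref{z-Tbar}. The two SOV inputs are: (i) the spectrum of $\overline{\mathcal{T}}(\lambda)$ on the $2^\mathsf{N}$-dimensional space $\bar{\mathbb{D}}_{\mathsf{(6VD)},\mathsf{N}}^{(0)}$ is simple, so that $\Sigma^\mathsf{(6VD)}$ has exactly $2^\mathsf{N}$ elements, each with a one-dimensional eigenspace; (ii) every eigenvalue $\bar{\mathsf{t}}(\lambda)\in\Sigma^\mathsf{(6VD)}$ obeys a global identity of the form $\prod_{n=1}^\mathsf{N}\bar{\mathsf{t}}(\xi_n)=\pm\prod_{n=1}^\mathsf{N}\mathsc{a}(\xi_n)$. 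Under the genericity conditions \eqref{cond-inh0}-\eqref{cond-inh} one has $\mathsc{a}(\xi_n)=\prod_{m=1}^\mathsf{N}\theta(\xi_n-\xi_m+\eta)\neq 0$ by \eqref{a-d}, hence $\prod_n\mathsc{a}(\xi_n)\neq 0$; together with (ii) this forces $\bar{\mathsf{t}}(\xi_n)\neq 0$ for every eigenvalue and every $n$, so no $\bar{\mathsf{t}}\in\Sigma^\mathsf{(6VD)}$ is the zero function and, in particular, $-\bar{\mathsf{t}}\neq\bar{\mathsf{t}}$.

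Next I would exploit \eqref{z-Tbar}. The operator $\Gamma_z=\otimes_{n=1}^\mathsf{N}\sigma_n^z$ is diagonal in the spin basis $(\otimes_{n=1}^\mathsf{N}|n,h_n\rangle)\otimes|t_{0,\mathbf{h}}\rangle$ of $\bar{\mathbb{D}}_{\mathsf{(6VD)},\mathsf{N}}^{(0)}$, so it stabilizes this space and satisfies $\Gamma_z^2=\Id$. Hence, if $\overline{\mathcal{T}}(\lambda)\,\ket{\Psi_{\bar{\mathsf{t}}}^\mathsf{(6VD)}}=\bar{\mathsf{t}}(\lambda)\,\ket{\Psi_{\bar{\mathsf{t}}}^\mathsf{(6VD)}}$, then \eqref{z-Tbar} gives $\overline{\mathcal{T}}(\lambda)\,\Gamma_z\ket{\Psi_{\bar{\mathsf{t}}}^\mathsf{(6VD)}}=-\bar{\mathsf{t}}(\lambda)\,\Gamma_z\ket{\Psi_{\bar{\mathsf{t}}}^\mathsf{(6VD)}}$ with $\Gamma_z\ket{\Psi_{\bar{\mathsf{t}}}^\mathsf{(6VD)}}\neq 0$ since $\Gamma_z$ is invertible. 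Thus $\bar{\mathsf{t}}\mapsto-\bar{\mathsf{t}}$ is a bijective involution of $\Sigma^\mathsf{(6VD)}$, and by (i) the operator $\Gamma_z$ maps the one-dimensional $\bar{\mathsf{t}}$-eigenspace onto the one-dimensional $(-\bar{\mathsf{t}})$-eigenspace; taking $\bar{\mathsf{t}}=\bar{\mathsf{t}}_+\in\Sigma^\mathsf{(6VD)}_+$ this says precisely that $\Gamma_z\ket{\Psi_{\bar{\mathsf{t}}_+}^\mathsf{(6VD)}}$ is the (unique up to normalization) eigenvector of eigenvalue $\bar{\mathsf{t}}_-=-\bar{\mathsf{t}}_+$, which is the last assertion of the lemma. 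Since $-\bar{\mathsf{t}}\neq\bar{\mathsf{t}}$ for all eigenvalues, the involution is fixed-point-free, so $\Sigma^\mathsf{(6VD)}$ splits into $2^{\mathsf{N}-1}$ two-element orbits $\{\bar{\mathsf{t}},-\bar{\mathsf{t}}\}$.

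Then I would match this orbit decomposition with \eqref{Sigma+}-\eqref{Sigma-}, which is where the hypothesis that $\mathsf{N}$ is odd enters: $\prod_n(-\bar{\mathsf{t}}(\xi_n))=-\prod_n\bar{\mathsf{t}}(\xi_n)$, so since both members of an orbit satisfy (ii), exactly one of them has $\prod_n\bar{\mathsf{t}}(\xi_n)=+\prod_n\mathsc{a}(\xi_n)$ and the other has $\prod_n\bar{\mathsf{t}}(\xi_n)=-\prod_n\mathsc{a}(\xi_n)$. As $\prod_n\mathsc{a}(\xi_n)\neq 0$, the sets $\Sigma^\mathsf{(6VD)}_+$ and $\Sigma^\mathsf{(6VD)}_-$ of \eqref{Sigma+}-\eqref{Sigma-} are disjoint, their union is all of $\Sigma^\mathsf{(6VD)}$, and each orbit contributes exactly one element to each of them; hence $|\Sigma^\mathsf{(6VD)}_+|=|\Sigma^\mathsf{(6VD)}_-|=2^{\mathsf{N}-1}$ and $\Sigma^\mathsf{(6VD)}_-=\{-\bar{\mathsf{t}}_+ : \bar{\mathsf{t}}_+\in\Sigma^\mathsf{(6VD)}_+\}$.

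The only genuinely non-elementary ingredients are the two SOV facts (i) and (ii): they are not visible at the level of the Yang--Baxter commutation relations but belong to the characterization of the antiperiodic dynamical $6$-vertex spectrum obtained in \cite{LevNT15}, so the main effort is to quote them in the form stated above. Everything else is short linear algebra organized around the $\Gamma_z$-anti-commutation \eqref{z-Tbar}, the only point deserving care being the $\Gamma_z$-stability of $\bar{\mathbb{D}}_{\mathsf{(6VD)},\mathsf{N}}^{(0)}$, which is immediate since $\Gamma_z$ is diagonal in the chosen basis.
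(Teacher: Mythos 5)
Your proof is correct and follows essentially the same route as the paper: the key inputs in both cases are the global product identity $\prod_a\bar{\mathsf{t}}(\xi_a)=\pm\prod_a\mathsc{a}(\xi_a)$ imported from \cite{LevNT15}, the simplicity of the SOV spectrum, and the anticommutation \eqref{z-Tbar} of $\overline{\mathcal{T}}(\lambda)$ with $\Gamma_z$, with the oddness of $\mathsf{N}$ forcing the two members of each orbit $\{\bar{\mathsf{t}},-\bar{\mathsf{t}}\}$ to carry opposite signs in the product identity. The only (harmless) variation is that you obtain the symmetry $\bar{\mathsf{t}}\in\Sigma^\mathsf{(6VD)}\Leftrightarrow-\bar{\mathsf{t}}\in\Sigma^\mathsf{(6VD)}$ directly from the $\Gamma_z$-conjugation, whereas the paper first reads it off the SOV characterization \eqref{set-t}--\eqref{dis-sys} and invokes $\Gamma_z$ only for the eigenvector statement; your version also makes explicit the fixed-point-free orbit count that the paper leaves implicit.
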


In the following, we shall also denote by $\mathbf{\Sigma}_+^\mathsf{(6VD)}$ the subspace of   $\mathbb{\bar{D}}_{\mathsf{(6VD)},\mathsf{N}}^{(0)}$ spanned by all the $\overline{\mathcal{T}}(\lambda )$-eigenvectors with eigenvalue $\bar{\mathsf{t}}_+(\lambda )\in\Sigma^\mathsf{(6VD)}_+$, and  by $\mathbf{\Sigma}_-^\mathsf{(6VD)}$ the subspace of   $\mathbb{\bar{D}}_{\mathsf{(6VD)},\mathsf{N}}^{(0)}$ spanned by all the $\overline{\mathcal{T}}(\lambda )$-eigenvectors with eigenvalue $\bar{\mathsf{t}}_-(\lambda )\in\Sigma^\mathsf{(6VD)}_-$.
It follows from the previous considerations that $\mathbb{\bar{D}}_{\mathsf{(6VD)},\mathsf{N}}^{(0)}=\mathbf{\Sigma}_+^\mathsf{(6VD)}\oplus \mathbf{\Sigma}_-^\mathsf{(6VD)}$.

\begin{proof}
It is obvious, from the characterization \eqref{set-t}-\eqref{dis-sys}  of the $\overline{\mathcal{T}}(\lambda )$-spectrum $\Sigma^\mathsf{(6VD)}$ that $\bar{\mathsf{t}}(\lambda )\in\Sigma^\mathsf{(6VD)}$ if and only if $-\bar{\mathsf{t}}(\lambda )\in\Sigma^\mathsf{(6VD)}$.
We moreover recall the formula (5.10) of \cite{LevNT15}, which in the present antiperiodic case reads
\begin{equation}\label{prod-transfer}
\prod_{a=1}^{\mathsf{N}}\overline{\mathcal{T}}(\xi_{a})
=\prod_{a=1}^{\mathsf{N}}\mathsc{a}(\xi_a)\ \prod_{a=1}^{\mathsf{N}}\!\left\{ \mathsf{T}_\tau^{\sigma_a^z}\, \sigma_a^x\right\},
\end{equation}
from which it follows that any $\overline{\mathcal{T}}(\lambda )$-eigenvalue $\bar{\mathsf{t}}(\lambda )$ should satisfy the identity
\begin{equation}
    \prod_{a=1}^\mathsf{N} \bar{\mathsf{t}}(\xi_a )=\pm\prod_{a=1}^{\mathsf{N}}\mathsc{a}(\xi_a).
\end{equation}
The partitioning of the spectrum in terms of the two subsets  $\Sigma^\mathsf{(6VD)}_+$ and $\Sigma^\mathsf{(6VD)}_-$ as defined above hence follows.
The relation between the corresponding $+$ and $-$ eigenstates is then a consequence of the symmetry property \eqref{z-Tbar}.
\end{proof}

\begin{lemma}
The spectrum $\Sigma_{\mathsf{(0,0)}}^\mathsf{(8V)}$ of the periodic 8-vertex transfer matrix $\mathsf{T}^{\mathsf{(8V)}}_\mathsf{(0,0)}(\lambda )$ in $\mathbb{V}_\mathsf{N}$ for $\mathsf{N}$ odd is such that
\begin{equation}\label{intersect-0}
    \Sigma_{\mathsf{(0,0)}}^\mathsf{(8V)}\cap\Sigma^\mathsf{(6VD)}_-=\emptyset ,
\end{equation}
where 
$\Sigma^\mathsf{(6VD)}_-$ is defined as in Lemma~\ref{lem-sp-6VD}.
\end{lemma}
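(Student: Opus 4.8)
The plan is to obtain \eqref{intersect-0} by contradiction, comparing the value of the product over the inhomogeneity parameters of a would-be common element of $\Sigma_{\mathsf{(0,0)}}^{\mathsf{(8V)}}$ and $\Sigma^{\mathsf{(6VD)}}_{-}$.

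First I would specialize the global identity \eqref{global-ID} to the periodic case $\mathsf{(x,y)}=(0,0)$: there $\mathsf{K}^{\mathsf{(0,0)}}=\Id$, so $\prod_{n=1}^{\mathsf{N}}\mathsf{T}^{\mathsf{(8V)}}_{\mathsf{(0,0)}}(\xi_n)=\big(\prod_{n=1}^{\mathsf{N}}\mathsc{a}(\xi_n)\big)\Id$ (equivalently, this is the necessary condition \eqref{eq-prodt}, whose prefactor $(\pm1)^{\mathsf{x}+\mathsf{y}+\mathsf{xy}}$ equals $1$ for $\mathsf{(x,y)}=(0,0)$). Consequently every $\mathsf{t}(\lambda)\in\Sigma_{\mathsf{(0,0)}}^{\mathsf{(8V)}}$ satisfies
\[
   \prod_{n=1}^{\mathsf{N}}\mathsf{t}(\xi_n)=\prod_{n=1}^{\mathsf{N}}\mathsc{a}(\xi_n).
\]
On the other hand, by Lemma~\ref{lem-sp-6VD} any $\bar{\mathsf{t}}_{-}(\lambda)\in\Sigma^{\mathsf{(6VD)}}_{-}$ is of the form $\bar{\mathsf{t}}_{-}=-\bar{\mathsf{t}}_{+}$ with $\bar{\mathsf{t}}_{+}\in\Sigma^{\mathsf{(6VD)}}_{+}$, so that, using the characterization \eqref{Sigma+} and the oddness of $\mathsf{N}$,
\[
   \prod_{n=1}^{\mathsf{N}}\bar{\mathsf{t}}_{-}(\xi_n)=(-1)^{\mathsf{N}}\prod_{n=1}^{\mathsf{N}}\bar{\mathsf{t}}_{+}(\xi_n)=-\prod_{n=1}^{\mathsf{N}}\mathsc{a}(\xi_n).
\]
Finally, since $\prod_{n=1}^{\mathsf{N}}\mathsc{a}(\xi_n)=\prod_{1\le n,m\le\mathsf{N}}\theta(\xi_n-\xi_m+\eta)$ does not vanish --- the diagonal factors being $\theta(\eta)\neq0$ for generic $\eta$, the off-diagonal ones being nonzero by \eqref{cond-inh} with $\epsilon=1$ --- the two displayed identities are mutually exclusive, so no function can belong simultaneously to $\Sigma_{\mathsf{(0,0)}}^{\mathsf{(8V)}}$ and $\Sigma^{\mathsf{(6VD)}}_{-}$, which is \eqref{intersect-0}.

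I do not expect any serious obstacle: the one point worth stressing is that it is precisely the hypothesis that $\mathsf{N}$ is odd, through the sign $(-1)^{\mathsf{N}}=-1$, that makes the two conditions incompatible (for $\mathsf{N}$ even the two spectra do in fact overlap, which is why the argument breaks). Let me also indicate the role of this lemma in the sequel: combined with the intertwining relations \eqref{right-act-T2} and \eqref{right-act-T3}, it forces the gauge map $\mathbf{S}^{(0)}\mathbf{P}^{(0)}$ to annihilate every $\overline{\mathcal{T}}(\lambda)$-eigenvector whose eigenvalue lies in $\Sigma^{\mathsf{(6VD)}}_{-}$ and, dually, the second gauge map built from $\hat{\mathbf{S}}^{(0)}$ in \eqref{def-Sbar} to annihilate every $\overline{\mathcal{T}}(\lambda)$-eigenvector whose eigenvalue lies in $\Sigma^{\mathsf{(6VD)}}_{+}$; these are the two facts that will subsequently permit a complete reconstruction of $\Sigma_{\mathsf{(0,0)}}^{\mathsf{(8V)}}$ and of the corresponding eigenstates from the antiperiodic dynamical $6$-vertex data.
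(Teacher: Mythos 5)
Your proof is correct and follows essentially the same route as the paper: the paper's proof simply invokes the global identity $\prod_{a=1}^{\mathsf{N}}\mathsf{t}(\xi_a)=\prod_{a=1}^{\mathsf{N}}\mathsc{a}(\xi_a)$ for periodic 8-vertex eigenvalues (the $(0,0)$ specialization of \eqref{global-ID}) and contrasts it with the sign $-1$ characterizing $\Sigma^{\mathsf{(6VD)}}_{-}$ for odd $\mathsf{N}$. Your added verification that $\prod_{n}\mathsc{a}(\xi_n)\neq 0$ under \eqref{cond-inh} is a detail the paper leaves implicit, and is correct.
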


\begin{proof}
This follows from the fact that any $\mathsf{t}(\lambda)\in\Sigma_{\mathsf{(0,0)}}^\mathsf{(8V)}$ satisfies the relation
\begin{equation}
\prod_{a=1}^{\mathsf{N}}\mathsf{t}(\xi_a)
=\prod_{a=1}^{\mathsf{N}}\mathsc{a}(\xi_a),
\label{8V-global}
\end{equation}
which is the analog for the periodic 8-vertex transfer matrix of formula (5.10) of \cite{LevNT15}.
\end{proof}

\begin{lemma}\label{lem-ker}
For $\mathsf{N}$ odd, the kernels $\ker\mathbf{S}^{(0)}$ and $\ker\hat{\mathbf{S}}^{(0)}$ of the operators $\mathbf{S}^{(0)}$ and $\hat{\mathbf{S}}^{(0)}$ are the $2^{\mathsf{N}-1}$ dimensional linear subspaces of $\mathbb{V}_\mathsf{N}$ respectively characterized by
\begin{align}
       &\ker\mathbf{S}^{(0)}
       =\left\{ \mathbf{P}^{(0)}\, \ket{  \Psi_{\bar{\mathsf{t}}_-}^\mathsf{(6VD)}  }\  \Big|\
       \ket{  \Psi_{\bar{\mathsf{t}}_-}^\mathsf{(6VD)}  } \in\mathbf{\Sigma}_-^\mathsf{(6VD)}
       \right\} \label{kerS}
       \\
       &\ker\hat{\mathbf{S}}^{(0)}
       =\left\{ \mathbf{P}^{(0)}\, \ket{ \Psi_{\bar{\mathsf{t}}_+}^\mathsf{(6VD)} }\  \Big|\
       \ket{ \Psi_{\bar{\mathsf{t}}_+}^\mathsf{(6VD)} }\in\mathbf{\Sigma}_+^\mathsf{(6VD)}
       \right\} \label{kerSbar}
\end{align}
i.e., $\ker(\mathbf{S}^{(0)}\mathbf{P}^{(0)})=\mathbf{\Sigma}_-^\mathsf{(6VD)}$ and $\ker(\hat{\mathbf{S}}^{(0)}\mathbf{P}^{(0)})=\mathbf{\Sigma}_+^\mathsf{(6VD)}$.
\end{lemma}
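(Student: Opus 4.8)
The statement concerns the kernels of $\mathbf{S}^{(0)}$ and $\hat{\mathbf{S}}^{(0)}=\mathbf{S}^{(0)}\Gamma_z$ for $\mathsf{N}$ odd and $\mathsf{(x,y)}=(0,0)$. The natural strategy is to compute these kernels by pulling everything back through the isomorphism $\mathbf{P}^{(0)}$ and using the relation \eqref{right-act-T2} between the 8-vertex and antiperiodic dynamical 6-vertex transfer matrices, combined with Proposition~\ref{th-act-T} and the spectral decomposition $\mathbb{\bar{D}}_{\mathsf{(6VD)},\mathsf{N}}^{(0)}=\mathbf{\Sigma}_+^\mathsf{(6VD)}\oplus\mathbf{\Sigma}_-^\mathsf{(6VD)}$ from Lemma~\ref{lem-sp-6VD}. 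First I would note that $\mathbf{P}^{(0)}\colon \mathbb{\bar{D}}_{\mathsf{(6VD)},\mathsf{N}}^{(0,\mathcal{R})}\to\mathbb{V}_\mathsf{N}^\mathcal{R}$ is an isomorphism, so it suffices to identify $\ker(\mathbf{S}^{(0)}\mathbf{P}^{(0)})$ and $\ker(\hat{\mathbf{S}}^{(0)}\mathbf{P}^{(0)})$ as subspaces of $\mathbb{\bar{D}}_{\mathsf{(6VD)},\mathsf{N}}^{(0,\mathcal{R})}$. Since $\overline{\mathcal{T}}(\lambda)$ has simple spectrum (this is the content of the SOV analysis recalled from \cite{LevNT15}; more precisely each $\bar{\mathsf t}_\pm$-eigenspace is one-dimensional, so $\mathbf{\Sigma}_+^\mathsf{(6VD)}$ and $\mathbf{\Sigma}_-^\mathsf{(6VD)}$ each have dimension $2^{\mathsf{N}-1}$), I would decompose an arbitrary $\ket{\mathbf v}\in\mathbb{\bar{D}}_{\mathsf{(6VD)},\mathsf{N}}^{(0,\mathcal{R})}$ as $\ket{\mathbf v}=\ket{\mathbf v_+}+\ket{\mathbf v_-}$ with $\ket{\mathbf v_\pm}\in\mathbf{\Sigma}_\pm^\mathsf{(6VD)}$, and work separately on each eigenspace.

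The key step is to show that $\mathbf{S}^{(0)}\mathbf{P}^{(0)}$ kills $\mathbf{\Sigma}_-^\mathsf{(6VD)}$ and is injective on $\mathbf{\Sigma}_+^\mathsf{(6VD)}$, with the roles reversed for $\hat{\mathbf{S}}^{(0)}\mathbf{P}^{(0)}$; a dimension count ($2^{\mathsf{N}-1}+2^{\mathsf{N}-1}=2^\mathsf{N}$) then forces the kernels to be exactly the subspaces claimed. For the first half, take $\ket{\Psi_{\bar{\mathsf t}_-}^\mathsf{(6VD)}}\in\mathbf{\Sigma}_-^\mathsf{(6VD)}$ an eigenvector of $\overline{\mathcal T}(\lambda)$ with eigenvalue $\bar{\mathsf t}_-(\lambda)$. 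Applying \eqref{right-act-T2} with $\mathsf{(x,y)}=(0,0)$ gives $\mathsf{T}_{\mathsf{(0,0)}}^{\mathsf{(8V)}}(\lambda)\,\mathbf{S}^{(0)}\mathbf{P}^{(0)}\ket{\Psi_{\bar{\mathsf t}_-}^\mathsf{(6VD)}}=\mathbf{S}^{(0)}\mathbf{P}^{(0)}\,\overline{\mathcal T}(\lambda)\ket{\Psi_{\bar{\mathsf t}_-}^\mathsf{(6VD)}}=\bar{\mathsf t}_-(\lambda)\,\mathbf{S}^{(0)}\mathbf{P}^{(0)}\ket{\Psi_{\bar{\mathsf t}_-}^\mathsf{(6VD)}}$, so if the image vector $w=\mathbf{S}^{(0)}\mathbf{P}^{(0)}\ket{\Psi_{\bar{\mathsf t}_-}^\mathsf{(6VD)}}$ were nonzero it would be a $\mathsf{T}_{\mathsf{(0,0)}}^{\mathsf{(8V)}}$-eigenvector with eigenvalue $\bar{\mathsf t}_-(\lambda)\in\Sigma^\mathsf{(6VD)}_-$. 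But $\Sigma_{\mathsf{(0,0)}}^\mathsf{(8V)}\cap\Sigma^\mathsf{(6VD)}_-=\emptyset$ by the preceding lemma \eqref{intersect-0}, a contradiction; hence $w=0$ and $\mathbf{\Sigma}_-^\mathsf{(6VD)}\subseteq\ker(\mathbf{S}^{(0)}\mathbf{P}^{(0)})$. Similarly, using \eqref{right-act-T3}, which carries an extra sign, an eigenvector $\ket{\Psi_{\bar{\mathsf t}_+}^\mathsf{(6VD)}}\in\mathbf{\Sigma}_+^\mathsf{(6VD)}$ is sent by $\hat{\mathbf{S}}^{(0)}\mathbf{P}^{(0)}$ either to zero or to a $\mathsf{T}_{\mathsf{(0,0)}}^{\mathsf{(8V)}}$-eigenvector with eigenvalue $-\bar{\mathsf t}_+(\lambda)$; since $\bar{\mathsf t}_+\in\Sigma^\mathsf{(6VD)}_+$ we have $-\bar{\mathsf t}_+\in\Sigma^\mathsf{(6VD)}_-$, again disjoint from $\Sigma_{\mathsf{(0,0)}}^\mathsf{(8V)}$, so the image vanishes and $\mathbf{\Sigma}_+^\mathsf{(6VD)}\subseteq\ker(\hat{\mathbf{S}}^{(0)}\mathbf{P}^{(0)})$.

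To finish, I would show $\ker(\mathbf{S}^{(0)}\mathbf{P}^{(0)})$ cannot be larger than $\mathbf{\Sigma}_-^\mathsf{(6VD)}$. Since $\mathbf{P}^{(0)}$ is an isomorphism, $\ker\mathbf{S}^{(0)}=\mathbf{P}^{(0)}\big(\ker(\mathbf{S}^{(0)}\mathbf{P}^{(0)})\big)$, so it is equivalent to control $\ker\mathbf{S}^{(0)}$. Now $\mathbf{S}^{(0)}$ acts diagonally on the local spin basis $\otimes_n\ket{n,h_n}$ with scalar $S_q(t_{0,\mathbf h})$ (formula \eqref{S-spin-R}), but $S_q(t_{0,\mathbf h})$ is itself a product of $2\times 2$ matrices $S_j(\xi_j\,|\,\cdot)$ acting on the distinct factors, and its rank is governed by the determinants appearing in \eqref{invert-a-cond}. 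For $\mathsf{(x,y)}=(0,0)$ and $\mathsf{N}$ odd, at least one of these determinant factors vanishes (the argument of $\theta$ lands in the lattice $\pi\mathbb Z+\pi\omega\mathbb Z$ precisely when the relevant height shift is zero, which is possible only for odd $\mathsf{N}$), making $\mathbf{S}^{(0)}$ drop rank by exactly $2^{\mathsf{N}-1}$; I would make this precise by adapting the inductive rank computation in the proof of Proposition~\ref{prop-V-IRF-iso}, tracking at which step $j$ and for which $\mathbf h$ the factor $\theta(\hat t^{(j)}_{r,\mathbf h})$ vanishes. This yields $\dim\ker\mathbf{S}^{(0)}=2^{\mathsf{N}-1}=\dim\mathbf{\Sigma}_-^\mathsf{(6VD)}$, forcing the inclusion above to be an equality; the same argument with $\Gamma_z$ inserted (which is itself invertible, so $\ker\hat{\mathbf S}^{(0)}=\Gamma_z\ker\mathbf S^{(0)}$, whence $\dim\ker\hat{\mathbf S}^{(0)}=2^{\mathsf{N}-1}$ as well) gives $\ker\hat{\mathbf{S}}^{(0)}=\mathbf{P}^{(0)}\big(\mathbf{\Sigma}_+^\mathsf{(6VD)}\big)$. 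The main obstacle is the rank count for $\mathbf{S}^{(0)}$: one needs to verify that exactly one factor in the product defining $S_q(t_{0,\mathbf h})$ degenerates, and that the degeneration occurs on a set of $\mathbf h$ of size $2^{\mathsf{N}-1}$ and is the \emph{only} source of kernel; everything else is a routine combination of the already-established transfer-matrix relations with the spectral disjointness \eqref{intersect-0} and a dimension argument. Alternatively — and perhaps more cleanly — one can avoid the explicit rank count entirely: from $\mathbf{\Sigma}_-^\mathsf{(6VD)}\subseteq\ker(\mathbf{S}^{(0)}\mathbf{P}^{(0)})$ and $\mathbf{\Sigma}_+^\mathsf{(6VD)}\subseteq\ker(\hat{\mathbf{S}}^{(0)}\mathbf{P}^{(0)})$ together with $\hat{\mathbf{S}}^{(0)}\mathbf{P}^{(0)}=\mathbf{S}^{(0)}\Gamma_z\mathbf{P}^{(0)}$ and the fact, read off from Lemma~\ref{lem-sp-6VD}, that $\Gamma_z$ interchanges $\mathbf{\Sigma}_+^\mathsf{(6VD)}$ and $\mathbf{\Sigma}_-^\mathsf{(6VD)}$ (modulo the commutation of $\Gamma_z$ with $\mathbf P^{(0)}$, i.e. $\Gamma_z\mathbf P^{(0)}=\mathbf P^{(0)}\Gamma_z$ up to the trivial action on the dynamical factor), one deduces $\mathbf{\Sigma}_+^\mathsf{(6VD)}\subseteq\ker(\mathbf{S}^{(0)}\mathbf{P}^{(0)}\Gamma_z)$, hence $\Gamma_z^{-1}\mathbf{\Sigma}_+^\mathsf{(6VD)}=\mathbf{\Sigma}_-^\mathsf{(6VD)}$ is \emph{disjoint} from — no, rather, this just re-derives the same inclusion; so in the end the dimension argument via the rank of $\mathbf S^{(0)}$, or equivalently via $\mathrm{rank}(\mathbf S^{(0)}\mathbf P^{(0)})+\dim\ker=2^\mathsf N$ together with the known $2^{\mathsf N-1}$-dimensionality of $\mathbf{\Sigma}_\pm^\mathsf{(6VD)}$, is unavoidable and constitutes the crux.
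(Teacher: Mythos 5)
Your first step --- deducing $\mathbf{\Sigma}_-^\mathsf{(6VD)}\subseteq\ker(\mathbf{S}^{(0)}\mathbf{P}^{(0)})$ and $\mathbf{\Sigma}_+^\mathsf{(6VD)}\subseteq\ker(\hat{\mathbf{S}}^{(0)}\mathbf{P}^{(0)})$ from \eqref{right-act-T2}, \eqref{right-act-T3} and the spectral disjointness \eqref{intersect-0} --- is exactly the paper's argument. The divergence, and the gap, lies in how you upgrade these inclusions to equalities. You propose to compute $\dim\ker\mathbf{S}^{(0)}=2^{\mathsf{N}-1}$ directly by tracking which determinant factors $\theta\big(\hat{t}^{\,(j)}_{0,\mathbf{h}}\big)$ vanish in the product defining $S_q(t_{0,\mathbf{h}})$, asserting that ``exactly one factor degenerates'' on a set of $\mathbf{h}$ of size $2^{\mathsf{N}-1}$. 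Both the assertion and the underlying mechanism fail. For $\mathsf{N}=3$ and $\mathbf{h}=(0,1,0)$ one finds $\hat{t}^{\,(1)}_{0,\mathbf{h}}=\hat{t}^{\,(2)}_{0,\mathbf{h}}=\hat{t}^{\,(3)}_{0,\mathbf{h}}=0$, so all three factors degenerate at once (in fact, for $\mathsf{N}$ odd at least one factor degenerates for \emph{every} $\mathbf{h}$, by a discrete intermediate-value argument on $j\mapsto\sum_{k<j}(1-2h_k)-\sum_{k>j}(1-2h_k)$). More importantly, a vanishing local determinant at site $j$ does not by itself produce a kernel vector of the form $(\ket{j,0}-\ket{j,1})\otimes(\cdots)$: flipping $h_j$ shifts the dynamical argument $\hat{t}^{\,(m)}_{0,\mathbf{h}}$ of every other factor $m\neq j$ by $\pm\eta$, so the images of the two configurations agree at site $j$ but differ at all the other sites. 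Concretely, for $\mathsf{N}=3$ and generic parameters the vector $(\ket{1,0}-\ket{1,1})\otimes\ket{2,0}\otimes\ket{3,1}$ is \emph{not} annihilated by $\mathbf{S}^{(0)}$, even though $\hat{t}^{\,(1)}_{0,\mathbf{h}}=0$ for both contributing $\mathbf{h}$. The induction of Proposition~\ref{prop-V-IRF-iso} cannot simply be ``adapted'' either, since it hinges precisely on the non-vanishing of every local determinant; once a determinant vanishes the local factor has rank one, and the kernel is governed by the interplay between configurations rather than by a per-site count.

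The paper sidesteps this rank computation entirely: it proves instead that $\ker\mathbf{S}^{(0)}\cap\ker\hat{\mathbf{S}}^{(0)}=\{0\}$, by analysing the quasi-tensor actions \eqref{act-S}--\eqref{act-Sbar} and locating the two kernels inside sums of subspaces $\mathbb{K}_n^{-}$ and $\mathbb{K}_n^{+}$ whose pairwise intersections vanish. Trivial intersection gives $\dim\ker\mathbf{S}^{(0)}+\dim\ker\hat{\mathbf{S}}^{(0)}\le 2^{\mathsf{N}}$, and since each kernel already contains a $2^{\mathsf{N}-1}$-dimensional subspace by the first step, both inclusions are forced to be equalities --- no a priori knowledge of $\operatorname{rank}\mathbf{S}^{(0)}$ is needed. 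If you wish to keep your route, you must genuinely establish $\operatorname{rank}\mathbf{S}^{(0)}\ge 2^{\mathsf{N}-1}$, which requires confronting the configuration-dependence of the local factors rather than counting degenerate determinants; as written, the crux of your argument is missing.
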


\begin{proof}
The identity \eqref{intersect-0}, together with the relations \eqref{right-act-T2} and \eqref{right-act-T3}, imply that, for any $\ket{  \Psi_{\bar{\mathsf{t}}_-}^\mathsf{(6VD)}  } \in\mathbf{\Sigma}_-^\mathsf{(6VD)}$ and any $\ket{ \Psi_{\bar{\mathsf{t}}_+}^\mathsf{(6VD)} }\in\mathbf{\Sigma}_+^\mathsf{(6VD)}$,
\begin{equation}
    \mathbf{S}^{(0)}\,\mathbf{P}^{(0)}\, \ket{\Psi_{\bar{\mathsf{t}}_-}^\mathsf{(6VD)} }=0,
    \qquad
    \hat{\mathbf{S}}^{(0)}\,\mathbf{P}^{(0)}\,\ket{\Psi_{\bar{\mathsf{t}}_+}^\mathsf{(6VD)} }=0,
\end{equation}
so that $\mathbf{\Sigma}_-^\mathsf{(6VD)}\subset \ker(\mathbf{S}^{(0)}\mathbf{P}^{(0)})$ and $\mathbf{\Sigma}_+^\mathsf{(6VD)}\subset \ker(\hat{\mathbf{S}}^{(0)}\mathbf{P}^{(0)})$.
It therefore remains to prove that $\ker\mathbf{S}^{(0)}\cap \ker\hat{\mathbf{S}}^{(0)}=\{0\}$.

To this aim, let us consider the explicit form of the operators $\mathbf{S}^{(0)}$ and $\hat{\mathbf{S}}^{(0)}$ when acting on the local spin basis of $\mathbb{V}_\mathsf{N}$. It is easy to see, from \eqref{S-spin-R} and the definition \eqref{def-Sbar} of $\hat{\mathbf{S}}^{(0)}$, that
\begin{align}
   &\mathbf{S}^{(0)}\Big(\underset{n=1}{\overset{\mathsf{N}}{\otimes}}\!\ket{n,h_{n}}\Big)
      =    \underset{n=1}{\overset{\mathsf{N}}{\otimes}}\left[ S_n\Big(\xi_n+\frac{\eta}{2}\, \Big|\, \hat{t}_{0,\mathbf{h}}^{(n)}\Big)\, \ket{n,h_{n}} \right],\label{act-S}\\
  &\hat{\mathbf{S}}^{(0)}\Big(\underset{n=1}{\overset{\mathsf{N}}{\otimes}}\!\ket{n,h_{n}}\Big)
      =    \underset{n=1}{\overset{\mathsf{N}}{\otimes}}\left[ \hat{S}_n\Big(\xi_n+\frac{\eta}{2}\, \Big|\, \hat{t}_{0,\mathbf{h}}^{(n)}\Big)\, \ket{n,h_{n}} \right],\label{act-Sbar}
\end{align}
where $ \hat{t}_{0,\mathbf{h}}^{(n)}$ is given by \eqref{thatj} in the case $\mathsf{x}=\mathsf{y}=r=0$, where $S(\lambda|t)$ is the $2\times 2$ matrix \eqref{mat-S} (with here $\mathsf{y}=0$), and where
\begin{equation}
    \hat{S}(\lambda|t) =  S(\lambda|t)\, \sigma^z
    = \begin{pmatrix}
     \theta_2(-\lambda+t|2\omega) &&-\theta_2(\lambda+t|2\omega) \\
    \theta_3(-\lambda+t|2\omega) && -\theta_3(\lambda+t|2\omega)
    \end{pmatrix}    .             
\end{equation}
Let us define, for each $n\in\{1,\ldots,\mathsf{N}\}$, the following linear subspaces of $\mathbb{V}_\mathsf{N}$:
\begin{equation}
     \mathbb{K}_n^\pm=\mathbb{V}_{\hat{n}}[0]\otimes V_n^\pm,
\end{equation}
where $V_n^\pm$ is the one-dimensional linear subspace of $V_n$ spanned by the vector $(\ket{n,0}\pm\ket{n,1})$, whereas $\mathbb{V}_{\hat{n}}[0]$ is 
the linear subspace of $\mathbb{V}_{\hat{n}}\equiv \otimes_{ j\not= n} V_j$ which cancels the action of the operator $ \mathsf{S}^{+-}_{\hat{n}}\equiv\sum_{k=1}^{n-1}\sigma_k^z-\sum_{k=n+1}^\mathsf{N}\sigma_k^z$ :
\begin{equation}
   \mathbb{V}_{\hat{n}}[0]=\left\{ \ket{\mathbf{v}}\in \mathbb{V}_{\hat{n}} \ \big|\   \mathsf{S}^{+-}_{\hat{n}} \ket{\mathbf{v}}=0 \right\}.
\end{equation}
It is clear from the quasi-tensor form of the action \eqref{act-S}-\eqref{act-Sbar} that
\begin{equation}
   \ker \mathbf{S}^{(0)}=\mathbb{K}_1^-+\mathbb{K}_2^-+\cdots+\mathbb{K}_\mathsf{N}^-,
   \qquad
    \ker \hat{\mathbf{S}}^{(0)}=\mathbb{K}_1^++\mathbb{K}_2^++\cdots+\mathbb{K}_\mathsf{N}^+.
\end{equation}
This can for instance easily be shown by induction, considering partial operators such as \eqref{Sj}.
Note moreover that these sums are in fact direct sums, since any state of the form $(\ket{n,0}\pm\ket{n,1})\otimes\ket{\mathbf{v}}$, for $\ket{\mathbf{v}}\in\mathbb{V}_{\hat{n},\hat{m}}\equiv\otimes_{j\not= m,n}V_j$, $m\not= n$, is {\em not} an eigenstate of  $ \mathsf{S}^{+-}_{\hat{m}}$.
For the same reason, one has $\mathbb{K}_n^+\cap\mathbb{K}_m^-=\{0\}$ if $n\not= m$. Finally, it is clear that $\mathbb{K}_n^+\cap\mathbb{K}_n^-=\{0\}$, so that $\ker\mathbf{S}^{(0)}\cap \ker\hat{\mathbf{S}}^{(0)}=\{0\}$, which proves the equality in \eqref{kerS} and \eqref{kerSbar}.
\end{proof}

Hence we arrive at the following result:

\begin{theorem}\label{th-sp-8V-per}
The spectrum of the periodic 8-vertex transfer matrix $\mathsf{T}_{\mathsf{(0,0)}}^{\mathsf{(8V)}}(\lambda )$ for $\mathsf{N}$ odd is 
\begin{equation}
    \Sigma_{\mathsf{(0,0)}}^\mathsf{(8V)} = \Sigma^\mathsf{(6VD)}_+,
\end{equation}
where $\Sigma^\mathsf{(6VD)}_+$ is the `+' part \eqref{Sigma+} of the antiperiodic dynamical 6-vertex transfer matrix spectrum $\Sigma^\mathsf{(6VD)}$. Each of the $2^{\mathsf{N}-1}$  $\mathsf{T}_{\mathsf{(0,0)}}^{\mathsf{(8V)}}(\lambda )$-eigenvalues $\mathsf{t}(\lambda)\in\Sigma_{\mathsf{(0,0)}}^\mathsf{(8V)} = \Sigma^\mathsf{(6VD)}_+$ is doubly degenerated, with two linearly independent $\mathsf{T}_{\mathsf{(0,0)}}^{\mathsf{(8V)}}(\lambda )$-eigenvectors given by
\begin{align}
   &\ket{ \Psi_\mathsf{t}^+} = \mathbf{S}^{(0)}\,\mathbf{P}^{(0)}\,\ket{ \Psi_{\mathsf{t}}^\mathsf{(6VD)} },
   \label{psi+}\\
   &\ket{ \Psi_\mathsf{t}^-} = \begin{cases}
      \Gamma_x\,\ket{ \Psi_\mathsf{t}^+} \quad &\text{if $\ket{ \Psi_\mathsf{t}^+}$ is a $\Gamma_z$-eigenstate},\\
      \Gamma_z\,\ket{ \Psi_\mathsf{t}^+} \quad &\text{otherwise},
      \end{cases}
      \label{psi-}
\end{align}
where $\ket{ \Psi_{\mathsf{t}}^\mathsf{(6VD)} }$ denotes the $\overline{\mathcal{T}}(\lambda )$-eigenvector  with eigenvalue $\mathsf{t}(\lambda )$, and where $\Gamma_x=\otimes_{n=1}^\mathsf{N}\sigma_n^x$.
\end{theorem}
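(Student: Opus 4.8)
The plan is to construct, for each element of $\Sigma^\mathsf{(6VD)}_+$, a two-dimensional space of eigenvectors of $\mathsf{T}^{\mathsf{(8V)}}_\mathsf{(0,0)}(\lambda)$, and then to use a dimension count to show these exhaust the spectrum. First I would fix $\mathsf{t}(\lambda)\in\Sigma^\mathsf{(6VD)}_+$ and take the $\overline{\mathcal{T}}(\lambda)$-eigenvector $\ket{\Psi_\mathsf{t}^\mathsf{(6VD)}}$ with this eigenvalue (unique up to normalization by the simplicity of the antiperiodic dynamical $6$-vertex spectrum established in \cite{LevNT15}), which lies in $\mathbf{\Sigma}_+^\mathsf{(6VD)}$. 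Since $\mathsf{(x,y)}=(0,0)$ makes the prefactor $(-1)^\mathsf{x}i^\mathsf{xy}$ in \eqref{right-act-T2} equal to $1$, the vector $\ket{\Psi_\mathsf{t}^+}\equiv\mathbf{S}^{(0)}\mathbf{P}^{(0)}\ket{\Psi_\mathsf{t}^\mathsf{(6VD)}}$ satisfies $\mathsf{T}^{\mathsf{(8V)}}_\mathsf{(0,0)}(\lambda)\ket{\Psi_\mathsf{t}^+}=\mathsf{t}(\lambda)\ket{\Psi_\mathsf{t}^+}$; moreover it is nonzero, because by Lemma~\ref{lem-ker} the kernel of $\mathbf{S}^{(0)}\mathbf{P}^{(0)}$ is $\mathbf{\Sigma}_-^\mathsf{(6VD)}$, which intersects $\mathbf{\Sigma}_+^\mathsf{(6VD)}$ trivially. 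This already yields $\Sigma^\mathsf{(6VD)}_+\subseteq\Sigma_{\mathsf{(0,0)}}^\mathsf{(8V)}$ and provides the eigenvector \eqref{psi+}.

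Next I would produce, for each such $\mathsf{t}$, a second eigenvector independent of $\ket{\Psi_\mathsf{t}^+}$. The periodic $8$-vertex transfer matrix commutes with both $\Gamma_z$ and $\Gamma_x\equiv\otimes_{n=1}^\mathsf{N}\sigma_n^x$ (by the argument based on $[R^\mathsf{(8V)}(\lambda),\sigma^\alpha\otimes\sigma^\alpha]=0$ and cyclicity of the trace which already gives $\Gamma_z\mathsf{T}^{\mathsf{(8V)}}_\mathsf{(0,0)}(\lambda)\Gamma_z=\mathsf{T}^{\mathsf{(8V)}}_\mathsf{(0,0)}(\lambda)$), so $\Gamma_z\ket{\Psi_\mathsf{t}^+}$ and $\Gamma_x\ket{\Psi_\mathsf{t}^+}$ are again $\mathsf{T}^{\mathsf{(8V)}}_\mathsf{(0,0)}(\lambda)$-eigenvectors with eigenvalue $\mathsf{t}(\lambda)$, both nonzero since $\Gamma_z^2=\Gamma_x^2=\mathrm{Id}$. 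If $\ket{\Psi_\mathsf{t}^+}$ is not a $\Gamma_z$-eigenstate, then $\ket{\Psi_\mathsf{t}^-}\equiv\Gamma_z\ket{\Psi_\mathsf{t}^+}$ is by definition independent of it; if instead $\Gamma_z\ket{\Psi_\mathsf{t}^+}=c\ket{\Psi_\mathsf{t}^+}$ with $c=\pm1$, then, invoking the anticommutation $\Gamma_z\Gamma_x=(-1)^\mathsf{N}\Gamma_x\Gamma_z=-\Gamma_x\Gamma_z$ (valid precisely because $\mathsf{N}$ is odd), the vector $\ket{\Psi_\mathsf{t}^-}\equiv\Gamma_x\ket{\Psi_\mathsf{t}^+}$ is a $\Gamma_z$-eigenstate with eigenvalue $-c\neq c$, hence again independent of $\ket{\Psi_\mathsf{t}^+}$. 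This is exactly the dichotomy in \eqref{psi-}, and it shows that every $\mathsf{t}(\lambda)\in\Sigma^\mathsf{(6VD)}_+$ is at least doubly degenerate.

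To conclude I would count dimensions: by Lemma~\ref{lem-sp-6VD} the cardinality of $\Sigma^\mathsf{(6VD)}_+$ is $2^{\mathsf{N}-1}$, and since eigenvectors attached to distinct eigenvalues are automatically independent while $\ket{\Psi_\mathsf{t}^+}$ and $\ket{\Psi_\mathsf{t}^-}$ are independent for each $\mathsf{t}$, the family $\{\ket{\Psi_\mathsf{t}^+},\ket{\Psi_\mathsf{t}^-}\}_{\mathsf{t}\in\Sigma^\mathsf{(6VD)}_+}$ consists of $2\cdot 2^{\mathsf{N}-1}=2^\mathsf{N}=\dim\mathbb{V}_\mathsf{N}$ independent vectors, hence a basis of $\mathbb{V}_\mathsf{N}$. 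Therefore $\mathsf{T}^{\mathsf{(8V)}}_\mathsf{(0,0)}(\lambda)$ is diagonalizable, no eigenvalue outside $\Sigma^\mathsf{(6VD)}_+$ is possible (its eigenvector would have to be independent of a basis), so $\Sigma_{\mathsf{(0,0)}}^\mathsf{(8V)}=\Sigma^\mathsf{(6VD)}_+$, and each eigenvalue has multiplicity exactly $2$ with eigenspace $\mathrm{span}\{\ket{\Psi_\mathsf{t}^+},\ket{\Psi_\mathsf{t}^-}\}$. The only genuinely delicate step is the second one: because $\Gamma_z$ and $\Gamma_x$ anticommute for $\mathsf{N}$ odd they cannot be simultaneously diagonalized, and the case split on whether $\ket{\Psi_\mathsf{t}^+}$ already diagonalizes $\Gamma_z$ is what forces one of the two reflections $\Gamma_z$, $\Gamma_x$ to generate a genuinely new eigenvector; everything else is bookkeeping built on Lemmas~\ref{lem-sp-6VD} and \ref{lem-ker} and on the action formula \eqref{right-act-T2}.
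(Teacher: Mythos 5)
Your proposal is correct and follows essentially the same route as the paper's proof: build $\ket{\Psi_\mathsf{t}^+}$ via \eqref{right-act-T2} with nonvanishing guaranteed by Lemma~\ref{lem-ker}, obtain a second independent eigenvector through the $\Gamma_z$/$\Gamma_x$ dichotomy (using their anticommutation for odd $\mathsf{N}$), and conclude by counting $2\cdot 2^{\mathsf{N}-1}=2^{\mathsf{N}}$ independent eigenvectors. No gaps.
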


\begin{proof}
From \eqref{right-act-T2} and \eqref{right-act-T3} one obtains that, for each $\overline{\mathcal{T}}(\lambda )$-eigenvector $\ket{ \Psi_{\mathsf{t}}^\mathsf{(6VD)} }\in\mathbf{\Sigma}_+^\mathsf{(6VD)} $ with $\overline{\mathcal{T}}(\lambda )$-eigenvalue $\mathsf{t}(\lambda)\in \Sigma^\mathsf{(6VD)}_+$,
\begin{equation*}
    \mathsf{T}_{\mathsf{(0,0)}}^{\mathsf{(8V)}}(\lambda )\, \mathbf{S}^{(0)}\mathbf{P}^{(0)}\ket{\Psi_{\mathsf{t}}^\mathsf{(6VD)}}
    = \mathbf{S}^{(0)}\mathbf{P}^{(0)}\, \overline{\mathcal{T}}(\lambda)\, \ket{\Psi_{\mathsf{t}}^\mathsf{(6VD)}}
    =\mathsf{t}(\lambda)\ \mathbf{S}^{(0)}\mathbf{P}^{(0)}\ket{\Psi_{\mathsf{t}}^\mathsf{(6VD)}}.
\end{equation*}
%
From Lemma~\ref{lem-ker}, $\mathbf{P}^{(0)}\ket{\Psi_{\mathsf{t}}^\mathsf{(6VD)}}\notin\ker\mathbf{S}^{(0)}$ so that $\ket{ \Psi_\mathsf{t}^+}$ \eqref{psi+} is a nonzero eigenvector of $\mathsf{T}_{\mathsf{(0,0)}}^{\mathsf{(8V)}}(\lambda )$ associated with the eigenvalue $\mathsf{t}(\lambda)$.
As both $\Gamma_z$ and $\Gamma_x$ commute with $\mathsf{T}_{\mathsf{(0,0)}}^{\mathsf{(8V)}}(\lambda )$ then $\ket{ \Psi_\mathsf{t}^-}$ \eqref{psi-} is clearly also a nonzero eigenvector of $\mathsf{T}_{\mathsf{(0,0)}}^{\mathsf{(8V)}}(\lambda )$ with eigenvalue $\mathsf{t}(\lambda)$.
Let us observe now that the statement $\ket{ \Psi_\mathsf{t}^+}$ is not a $\Gamma_z$-eigenvector is just equivalent to say that $\ket{ \Psi_\mathsf{t}^+}$ and $\Gamma_z\,\ket{ \Psi_\mathsf{t}^+}$ are two linearly independent states. Instead if $\ket{ \Psi_\mathsf{t}^+}$ is a $\Gamma_z$-eigenvector then also $\Gamma_x\,\ket{ \Psi_\mathsf{t}^+}$ is a $\Gamma_z$-eigenvector but with opposite eigenvalue.
These observations just show that in all possible cases  $\ket{ \Psi_\mathsf{t}^\pm}$ \eqref{psi+}-\eqref{psi-} are two independent $\mathsf{T}_{\mathsf{(0,0)}}^{\mathsf{(8V)}}(\lambda )$-eigenvectors associated with the same eigenvalue $\mathsf{t}(\lambda)$. Hence we have constructed in that way a family of $2\times 2^{\mathsf{N}-1}=2^\mathsf{N}$ linearly independent eigenvectors of $\mathsf{T}_{\mathsf{(0,0)}}^{\mathsf{(8V)}}(\lambda )$, so that we have a complete description.
\end{proof}

Theorem \ref{th-sp-8V-per} allows us in particular to define an
invertible linear operator $\mathbf{G}$ on $\mathbb{V}_{\mathsf{N}}$ by the
following action on the eigenbasis \eqref{psi+}-\eqref{psi-} of $\mathsf{T}_{\left( 0,0\right) }^{\mathsf{(8V)}}(\lambda )$:
\begin{equation}
\mathbf{G}\,\ket{ \Psi_\mathsf{t}^\pm}\equiv \ket{ \Psi_\mathsf{t}^\mp},
\qquad \forall \,\mathsf{t}\in\Sigma_{\mathsf{(0,0)}}^\mathsf{(8V)},
\end{equation}
so that we can define a modified version $\mathbf{\bar S}^{(0)}=\mathbf{G}\,\mathbf{S}^{(0)}\,\Gamma _{z}$ of the vertex-IRF transformation. This enables us to formulate the following lemma:

\begin{lemma}
Let $\mathsf{(x,y)}=(0,0)$ and $\mathsf{N}$ be odd.
Then the periodic 8-vertex transfer matrix is related to the analog \eqref{dyn-transfer} on $\mathbb{V}_{\mathsf{N}}$ of the antiperiodic dynamical 6-vertex transfer matrix by
\begin{equation}
\mathsf{T}^{\mathsf{(8V)}}_\mathsf{(0,0)}(\lambda )\,\mathbf{S}^{(\pm)}
=\mathbf{S}^{(\mp) }\,\overline{\mathsf{T}}^{\mathsf{(6VD)}}(\lambda )
\label{Invertible-pseudo-similarities}
\end{equation}
where $\mathbf{S}^{(\pm)}$ are invertible endomorphisms on $\mathbb{V}_{\mathsf{N}}$ defined by
\begin{equation}
\mathbf{S}^{(\pm) }=\mathbf{S}^{(0)}\pm \mathbf{\bar S}^{(0)}.
\end{equation}
%
\end{lemma}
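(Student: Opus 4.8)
The plan is to first turn the relations of Proposition~\ref{th-act-T} into operator identities on $\mathbb{V}_\mathsf{N}$, then to combine them with the elementary properties of $\mathbf{G}$ to obtain \eqref{Invertible-pseudo-similarities}, and finally to deduce the invertibility of $\mathbf{S}^{(\pm)}$ from the kernel and image information already collected in Lemmas~\ref{lem-sp-6VD}, \ref{lem-ker} and Theorem~\ref{th-sp-8V-per}.

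For the intertwining relation, I would first observe that, since $\mathbf{P}^{(0)}$ restricts to an isomorphism from $\mathbb{\bar{D}}_{\mathsf{(6VD)},\mathsf{N}}^{(0,\mathcal{R})}$ onto $\mathbb{V}_\mathsf{N}$ and $\overline{\mathsf{T}}^{\mathsf{(6VD)}}(\lambda )=\mathbf{P}^{(0)}\,\overline{\mathcal{T}}(\lambda)\,\big[\mathbf{P}^{(0)}\big]^{-1}$ by \eqref{def-BC}--\eqref{dyn-transfer}, relation \eqref{right-act-T2} taken at $\mathsf{(x,y)}=(0,0)$ (where $(-1)^{\mathsf{x}}i^{\mathsf{xy}}=1$) becomes the operator identity $\mathsf{T}^{\mathsf{(8V)}}_{\mathsf{(0,0)}}(\lambda)\,\mathbf{S}^{(0)}=\mathbf{S}^{(0)}\,\overline{\mathsf{T}}^{\mathsf{(6VD)}}(\lambda)$ on $\mathbb{V}_\mathsf{N}$, and likewise \eqref{right-act-T3} becomes $\mathsf{T}^{\mathsf{(8V)}}_{\mathsf{(0,0)}}(\lambda)\,\hat{\mathbf{S}}^{(0)}=-\hat{\mathbf{S}}^{(0)}\,\overline{\mathsf{T}}^{\mathsf{(6VD)}}(\lambda)$. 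By construction $\mathbf{G}$ maps each (two-dimensional) eigenspace of $\mathsf{T}^{\mathsf{(8V)}}_{\mathsf{(0,0)}}(\lambda)$ onto itself, hence commutes with $\mathsf{T}^{\mathsf{(8V)}}_{\mathsf{(0,0)}}(\lambda)$ for every $\lambda$; since $\mathbf{\bar S}^{(0)}=\mathbf{G}\,\mathbf{S}^{(0)}\,\Gamma_z$ and $\hat{\mathbf{S}}^{(0)}=\mathbf{S}^{(0)}\,\Gamma_z$ by \eqref{def-Sbar}, one has $\mathbf{\bar S}^{(0)}=\mathbf{G}\,\hat{\mathbf{S}}^{(0)}$, and therefore $\mathsf{T}^{\mathsf{(8V)}}_{\mathsf{(0,0)}}(\lambda)\,\mathbf{\bar S}^{(0)}=\mathbf{G}\,\mathsf{T}^{\mathsf{(8V)}}_{\mathsf{(0,0)}}(\lambda)\,\hat{\mathbf{S}}^{(0)}=-\mathbf{G}\,\hat{\mathbf{S}}^{(0)}\,\overline{\mathsf{T}}^{\mathsf{(6VD)}}(\lambda)=-\mathbf{\bar S}^{(0)}\,\overline{\mathsf{T}}^{\mathsf{(6VD)}}(\lambda)$. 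Adding and subtracting the two identities yields $\mathsf{T}^{\mathsf{(8V)}}_{\mathsf{(0,0)}}(\lambda)\,(\mathbf{S}^{(0)}\pm\mathbf{\bar S}^{(0)})=(\mathbf{S}^{(0)}\mp\mathbf{\bar S}^{(0)})\,\overline{\mathsf{T}}^{\mathsf{(6VD)}}(\lambda)$, which is precisely \eqref{Invertible-pseudo-similarities}.

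For the invertibility of $\mathbf{S}^{(\pm)}=\mathbf{S}^{(0)}\pm\mathbf{\bar S}^{(0)}$, I would exploit the two complementary decompositions of $\mathbb{V}_\mathsf{N}$ attached to $\mathbf{S}^{(0)}$ and $\mathbf{\bar S}^{(0)}$. On the kernel side, Lemma~\ref{lem-ker} gives $\ker\mathbf{S}^{(0)}=\mathbf{P}^{(0)}\big(\mathbf{\Sigma}_-^\mathsf{(6VD)}\big)$ and $\ker\hat{\mathbf{S}}^{(0)}=\mathbf{P}^{(0)}\big(\mathbf{\Sigma}_+^\mathsf{(6VD)}\big)$, two $2^{\mathsf{N}-1}$-dimensional subspaces with zero intersection; since $\mathbf{G}$ is invertible, $\ker\mathbf{\bar S}^{(0)}=\ker\hat{\mathbf{S}}^{(0)}$, so $\ker\mathbf{S}^{(0)}\cap\ker\mathbf{\bar S}^{(0)}=\{0\}$. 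On the image side, the proof of Theorem~\ref{th-sp-8V-per} shows that the vectors $\ket{\Psi_\mathsf{t}^+}=\mathbf{S}^{(0)}\mathbf{P}^{(0)}\ket{\Psi_\mathsf{t}^\mathsf{(6VD)}}$, $\mathsf{t}\in\Sigma_+^\mathsf{(6VD)}$, are $2^{\mathsf{N}-1}$ linearly independent elements of $\mathrm{Im}\,\mathbf{S}^{(0)}$ (being $\mathsf{T}^{\mathsf{(8V)}}_{\mathsf{(0,0)}}(\lambda)$-eigenvectors for pairwise distinct eigenvalues); since $\dim\mathrm{Im}\,\mathbf{S}^{(0)}=2^\mathsf{N}-\dim\ker\mathbf{S}^{(0)}=2^{\mathsf{N}-1}$, it follows that $\mathrm{Im}\,\mathbf{S}^{(0)}=\Span\{\ket{\Psi_\mathsf{t}^+}:\mathsf{t}\in\Sigma_+^\mathsf{(6VD)}\}$. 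Because $\Gamma_z$ is bijective on $\mathbb{V}_\mathsf{N}$, $\mathrm{Im}\,\mathbf{\bar S}^{(0)}=\mathbf{G}\,\mathbf{S}^{(0)}\,\Gamma_z(\mathbb{V}_\mathsf{N})=\mathbf{G}\big(\mathrm{Im}\,\mathbf{S}^{(0)}\big)=\Span\{\ket{\Psi_\mathsf{t}^-}:\mathsf{t}\in\Sigma_+^\mathsf{(6VD)}\}$. By Theorem~\ref{th-sp-8V-per} the family $\{\ket{\Psi_\mathsf{t}^+},\ket{\Psi_\mathsf{t}^-}\}_{\mathsf{t}\in\Sigma_+^\mathsf{(6VD)}}$ is a basis of $\mathbb{V}_\mathsf{N}$, hence $\mathrm{Im}\,\mathbf{S}^{(0)}\oplus\mathrm{Im}\,\mathbf{\bar S}^{(0)}=\mathbb{V}_\mathsf{N}$, and in particular $\mathrm{Im}\,\mathbf{S}^{(0)}\cap\mathrm{Im}\,\mathbf{\bar S}^{(0)}=\{0\}$. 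Finally, if $\ket{w}\in\ker\mathbf{S}^{(\pm)}$ then $\mathbf{S}^{(0)}\ket{w}=\mp\,\mathbf{\bar S}^{(0)}\ket{w}$ lies in $\mathrm{Im}\,\mathbf{S}^{(0)}\cap\mathrm{Im}\,\mathbf{\bar S}^{(0)}=\{0\}$, so $\ket{w}\in\ker\mathbf{S}^{(0)}\cap\ker\mathbf{\bar S}^{(0)}=\{0\}$; thus $\mathbf{S}^{(\pm)}$ is injective and, being an endomorphism of the finite-dimensional space $\mathbb{V}_\mathsf{N}$, an automorphism.

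I expect the main obstacle to be the image step. The intertwining part is little more than a rewriting of earlier results, the only point requiring a word being the commutation $[\mathbf{G},\mathsf{T}^{\mathsf{(8V)}}_{\mathsf{(0,0)}}(\lambda)]=0$, which holds because $\mathbf{G}$ is block-diagonal in the joint eigenbasis of the transfer matrix. By contrast, identifying $\mathrm{Im}\,\mathbf{S}^{(0)}$ and $\mathrm{Im}\,\mathbf{\bar S}^{(0)}$ explicitly and checking that they are complementary is where the completeness statement of Theorem~\ref{th-sp-8V-per} — that the $2^\mathsf{N}$ vectors $\ket{\Psi_\mathsf{t}^{\pm}}$ form a basis of $\mathbb{V}_\mathsf{N}$ — is genuinely used; once the two complementary decompositions $\mathbb{V}_\mathsf{N}=\ker\mathbf{S}^{(0)}\oplus\ker\mathbf{\bar S}^{(0)}=\mathrm{Im}\,\mathbf{S}^{(0)}\oplus\mathrm{Im}\,\mathbf{\bar S}^{(0)}$ are at hand, the bijectivity of $\mathbf{S}^{(\pm)}$ follows at once.
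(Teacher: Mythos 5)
Your proof is correct, and the intertwining part coincides with the paper's: the two identities $\mathsf{T}^{\mathsf{(8V)}}_{\mathsf{(0,0)}}(\lambda)\,\mathbf{S}^{(0)}=\mathbf{S}^{(0)}\,\overline{\mathsf{T}}^{\mathsf{(6VD)}}(\lambda)$ and $\mathsf{T}^{\mathsf{(8V)}}_{\mathsf{(0,0)}}(\lambda)\,\mathbf{\bar S}^{(0)}=-\mathbf{\bar S}^{(0)}\,\overline{\mathsf{T}}^{\mathsf{(6VD)}}(\lambda)$, followed by sum and difference (you usefully make explicit the commutation $[\mathbf{G},\mathsf{T}^{\mathsf{(8V)}}_{\mathsf{(0,0)}}(\lambda)]=0$, which the paper leaves implicit). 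For the invertibility you take a slightly different, more abstract route: the paper simply evaluates $\mathbf{S}^{(\pm)}$ on the eigenbasis $\mathbf{P}^{(0)}\ket{\Psi_{\bar{\mathsf{t}}_\epsilon}^{\mathsf{(6VD)}}}$, obtaining $(\pm 1)^{(1-\epsilon)/2}\ket{\Psi_{\mathsf{t}}^{\epsilon}}$, and concludes because a basis is sent to a basis; you instead establish the two complementary decompositions $\mathbb{V}_\mathsf{N}=\ker\mathbf{S}^{(0)}\oplus\ker\mathbf{\bar S}^{(0)}=\mathrm{Im}\,\mathbf{S}^{(0)}\oplus\mathrm{Im}\,\mathbf{\bar S}^{(0)}$ and deduce injectivity of $\mathbf{S}^{(0)}\pm\mathbf{\bar S}^{(0)}$. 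Both arguments rest on exactly the same ingredients (Lemma~\ref{lem-ker} for the kernels and the linear independence of the $2^\mathsf{N}$ vectors $\ket{\Psi_\mathsf{t}^{\pm}}$ from Theorem~\ref{th-sp-8V-per}); the paper's version is shorter and yields the explicit matrix of $\mathbf{S}^{(\pm)}$ in the two eigenbases, which is then reused in Theorem~\ref{th-T8V-per}, while yours isolates more cleanly the linear-algebraic mechanism (trivial intersection of kernels and of images) that makes a sum of two singular operators invertible. No gap.
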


\begin{proof}
It follows from the previous study that the periodic 8-vertex transfer matrix is related to the transfer matrix \eqref{dyn-transfer} by the two following identities:
\begin{equation}
\mathsf{T}_{( 0,0) }^{\mathsf{(8V)}}(\lambda )\,\mathbf{S}^{(0)}
=\mathbf{S}^{(0)}\,\overline{\mathsf{T}}^{\mathsf{(6VD)}}(\lambda ),
\qquad
\mathsf{T}_{( 0,0) }^{\mathsf{(8V)}}(\lambda )\, \mathbf{\bar S}^{(0)}
=-\mathbf{\bar S}^{(0)}\,\overline{\mathsf{T}}^{\mathsf{(6VD)}}(\lambda ).
\end{equation}
Taking their sum and difference we therefore get \eqref{Invertible-pseudo-similarities}.
It remains to prove that $\mathbf{S}^{(\pm)}$ are invertible. In order to do so we
observe that, for any $\bar{\mathsf{t}}_\epsilon(\lambda )\in\Sigma^\mathsf{(6VD)}_\epsilon$, $\epsilon\in\{+,-\}$, we have by definition
\begin{align}
  \mathbf{S}^{(\pm)}\,\mathbf{P}^{(0)}\,\ket{\Psi_{\bar{\mathsf{t}}_\epsilon}^\mathsf{(6VD)} } 
  &=\frac{1+\epsilon }{2}\,\mathbf{S}^{(0)}\,\mathbf{P}^{(0)}\,\ket{\Psi_{\bar{\mathsf{t}}_+}^\mathsf{(6VD)} }
       \pm \frac{1-\epsilon }{2}\,\mathbf{\bar S}^{(0)}\,\mathbf{P}^{(0)}\,\ket{\Psi_{\bar{\mathsf{t}}_-}^\mathsf{(6VD)} }
        \\
 &=\frac{1+\epsilon }{2}\,\ket{ \Psi_\mathsf{t}^+}\pm \frac{1-\epsilon }{2}\,\ket{ \Psi_\mathsf{t}^-}
 =(\pm 1)^\frac{1-\epsilon}{2}\, \ket{\Psi_\mathsf{t}^\epsilon},
\end{align}
where $\ket{ \Psi_\mathsf{t}^\pm}$ are the $\mathsf{T}_{( 0,0) }^{\mathsf{(8V)}}(\lambda)$-eigenstates \eqref{psi+}-\eqref{psi-} associated with the eigenvalue $\mathsf{t}(\lambda)=\epsilon\bar{\mathsf{t}}_\epsilon(\lambda )$.
So both $\mathbf{S}^{(+)}$ and $\mathbf{S}^{(-)}$ are invertible as they transform the eigenbasis of $\overline{\mathsf{T}}^{\mathsf{(6VD)}}(\lambda )$ into that of $\mathsf{T}_{\left( 0,0\right) }^{\mathsf{(8V)}}(\la)$.
\end{proof}

This result enables us to use the SOV construction \cite{Nic13a,LevNT15} for the antiperiodic dynamical 6-vertex model to explicitly construct the periodic 8-vertex transfer matrix eigenstates and to characterize its spectrum. As in Section~\ref{ssec-spectr-8V}, the eigenstates can be defined in a self-contained way in terms of the basis \eqref{states-bar} of $\mathbb{V}_\mathsf{N}$. 

\begin{theorem}
\label{th-T8V-per}
Let $\mathsf{N}$ be odd.
For any fixed $\mathsf{N}$-tuple of
inhomogeneities $(\xi _{1},\ldots,\xi _{\mathsf{N}})\in \mathbb{C}^{\mathsf{N}}$ satisfying \eqref{cond-inh0}-\eqref{cond-inh}, the spectrum $\Sigma_{\mathsf{(0,0)}}^\mathsf{(8V)}$ of the periodic 8-vertex transfer matrix $\mathsf{T}^{\mathsf{(8V)}}_\mathsf{(0,0)}(\lambda )$ \eqref{transfer} is doubly degenerated and coincides with the set of functions of the form
\begin{equation}
   \mathsf{t}(\lambda )\equiv 
   \sum_{a=1}^{\mathsf{N}} 
\frac{\theta(t_{0,\mathbf{0}}-\lambda +\xi _{a})}{\theta (t_{0,\mathbf{0}})}
\prod_{b\neq a}\frac{\theta (\lambda -\xi _{b})}{\theta (\xi _{a}-\xi _{b})}\,
 \mathsf{t}(\xi_a) ,
\label{t-8V00}
\end{equation}
which satisfy the discrete system of equations
\begin{equation}
\mathsf{t}(\xi _{a})\,\mathsf{t}(\xi_{a}-\eta)=
\mathsc{a}(\xi _{a})\,\mathsc{d}(\xi _{a}-\eta),
\qquad
\forall a\in \{1,\ldots,\mathsf{N}\},
\label{eq-8V}
\end{equation}
and the condition
\begin{equation}\label{cond-+}
    \prod_{n=1}^\mathsf{N}\mathsf{t}(\xi_n )=\prod_{n=1}^\mathsf{N}\mathsc{a}(\xi_n).
\end{equation}
A basis of the $\mathsf{T}^{\mathsf{(8V)}}_\mathsf{(0,0)}(\lambda )$-eigenspace associated with the eigenvalue $\mathsf{t}(\lambda ) \in \Sigma_{\mathsf{(0,0)}}^\mathsf{(8V)}$ is provided by the two vectors  $\ket{ \Psi_\mathsf{t}^+ }$ and $\ket{ \Psi_\mathsf{t}^- }$ in $\mathbb{V}_\mathsf{N}$ defined by
\begin{align}
& \ket{ \Psi_\mathsf{t}^\epsilon }
=(\pm1)^\frac{1-\epsilon }{2}\sum_{\mathbf{h}\in\{0,1\}^\mathsf{N}}
\prod_{a=1}^{\mathsf{N}}\bigg[ \bigg(\epsilon\, \frac{\mathsc{a}(\xi_a)}{\mathsc{d}(\xi_a-\eta)}\bigg)^{\! h_a}\,  \mathsf{q}^{(h_a)}_{\mathsf{t},a}\bigg]\,
\det_{\mathsf{N}}\big[\Theta^{(\mathbf{h}) }\big]\ 
\mathbf{S}^{(\pm)}\ket{\underline{\mathbf{h}}} , 
 \label{eigen+XYZ}\\
&  \bra{ \Psi_\mathsf{t}^\epsilon}
=(\pm1)^\frac{1-\epsilon }{2}\sum_{\mathbf{h}\in\{0,1\}^\mathsf{N}}
\prod_{a=1}^{\mathsf{N}}\big( \epsilon^{ h_a}\,  \mathsf{q}^{(h_a)}_{\mathsf{t},a}\big)\,
\det_{\mathsf{N}}\big[\Theta^{(\mathbf{h}) }\big]\ 
\,\bra{\underline{\mathbf{h}}}\big[\mathbf{S}^{(\pm)}\big]^{-1} , 
 \label{eigen-XYZ}
\end{align}
where $\epsilon=+,-$ and the coefficients $\mathsf{q}_{\mathsf{t},a}^{(h_a)}$ are (up to an overall normalization) characterized by
\begin{equation}
\frac{ \mathsf{q}_{\mathsf{t},a}^{(1)} }{  \mathsf{q}_{\mathsf{t},a}^{(0)}  }
=\frac{\mathsc{d}(\xi_a-\eta)}{ \mathsf{t}(\xi_a-\eta)}
= \frac{ \mathsf{t}(\xi_a)}{\mathsc{a}(\xi_a)} .
\label{t-q-XYZbis}
\end{equation}
\end{theorem}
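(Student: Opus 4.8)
\textit{Proof strategy.} The plan is to assemble the statement from three ingredients already at our disposal: Theorem~\ref{th-sp-8V-per}, which identifies $\Sigma_{\mathsf{(0,0)}}^\mathsf{(8V)}$ with the ``$+$'' part $\Sigma^\mathsf{(6VD)}_+$ of the antiperiodic dynamical $6$-vertex spectrum and exhibits the two-fold degeneracy together with the pair of eigenvectors \eqref{psi+}--\eqref{psi-}; the lemma immediately preceding Theorem~\ref{th-T8V-per}, which upgrades this to the intertwining identities \eqref{Invertible-pseudo-similarities} between $\mathsf{T}^{\mathsf{(8V)}}_\mathsf{(0,0)}(\lambda )$ and $\overline{\mathsf{T}}^{\mathsf{(6VD)}}(\lambda )$ through the two \emph{invertible} maps $\mathbf{S}^{(\pm)}$; and the SOV solution of the antiperiodic $\mathsf{(x,y)}$-dynamical $6$-vertex model recalled in Appendix~\ref{app-spectr-6VD} (from \cite{LevNT15}), which we specialize to $\mathsf{(x,y)}=(0,0)$ and $\mathsf{N}$ odd. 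In essence the argument runs in parallel with Section~\ref{ssec-spectr-8V}, the single invertible vertex-IRF map $\mathbf{S}^{(0)}$ used there being replaced here by the pair $\mathbf{S}^{(\pm)}$, and with one extra selection condition on the spectrum.

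For the spectrum, I would first recall from Appendix~\ref{app-spectr-6VD} that $\Sigma^\mathsf{(6VD)}$ is exactly the set of functions of the interpolation form \eqref{set-t} obeying the discrete quadratic system \eqref{dis-sys}. At $\mathsf{(x,y)}=(0,0)$ one has $t_{0,\mathbf{0}}=-\tfrac{\eta}{2}\mathsf{N}$ from \eqref{t_h}, the exponential prefactor $e^{i\mathsf{y}(\xi_a-\lambda)}$ in \eqref{t-8Vxy} equals $1$, and the sign $(-1)^{\mathsf{x}+\mathsf{y}}$ in \eqref{eq-quadr-8V} equals $+1$, so the interpolation form becomes precisely \eqref{t-8V00} and the discrete system becomes precisely \eqref{eq-8V}. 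By its definition \eqref{Sigma+}, $\Sigma^\mathsf{(6VD)}_+$ is cut out of $\Sigma^\mathsf{(6VD)}$ by the further constraint $\prod_{n}\bar{\mathsf{t}}(\xi_n)=\prod_n\mathsc{a}(\xi_n)$, i.e.\ \eqref{cond-+}. Combining this with the equality $\Sigma_{\mathsf{(0,0)}}^\mathsf{(8V)}=\Sigma^\mathsf{(6VD)}_+$ of Theorem~\ref{th-sp-8V-per} yields the claimed characterization of the spectrum; the ``$\Leftarrow$'' direction of the ``coincides with'' uses that the SOV description of Appendix~\ref{app-spectr-6VD} is an equivalence (each such function actually is an eigenvalue, carrying a nonzero SOV eigenvector), and the two-fold degeneracy is once more Theorem~\ref{th-sp-8V-per}.

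For the eigenstates, fix $\mathsf{t}(\lambda )\in\Sigma^\mathsf{(6VD)}_+$, let $\ket{\Psi_\mathsf{t}^\mathsf{(6VD)}}$ be the $\overline{\mathcal{T}}(\lambda )$-eigenvector with eigenvalue $\mathsf{t}(\lambda )$ and $\ket{\Psi_{-\mathsf{t}}^\mathsf{(6VD)}}$ the one with eigenvalue $-\mathsf{t}(\lambda )\in\Sigma^\mathsf{(6VD)}_-$. The computation performed in the proof of the preceding lemma gives $\ket{\Psi_\mathsf{t}^\epsilon}=(\pm 1)^{\frac{1-\epsilon}{2}}\,\mathbf{S}^{(\pm)}\,\mathbf{P}^{(0)}\,\ket{\Psi_{\epsilon\mathsf{t}}^\mathsf{(6VD)}}$ for $\epsilon=\pm$. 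By the SOV construction of Appendix~\ref{app-spectr-6VD} (this is the $\mathsf{(x,y)}=(0,0)$ instance of the computation leading to \eqref{eigenR-XYZ}), the state $\mathbf{P}^{(0)}\ket{\Psi_{\epsilon\mathsf{t}}^\mathsf{(6VD)}}$ equals $\sum_{\mathbf{h}\in\{0,1\}^\mathsf{N}}\prod_{a=1}^{\mathsf{N}}\big[(\mathsc{a}(\xi_a)/\mathsc{d}(\xi_a-\eta))^{h_a}\,\mathsf{q}_{\epsilon\mathsf{t},a}^{(h_a)}\big]\det_\mathsf{N}[\Theta^{(\mathbf{h})}]\,\ket{\underline{\mathbf{h}}}$, where \eqref{t-q-XYZ} at $\mathsf{(x,y)}=(0,0)$ gives $\mathsf{q}_{\epsilon\mathsf{t},a}^{(1)}/\mathsf{q}_{\epsilon\mathsf{t},a}^{(0)}=\mathsc{d}(\xi_a-\eta)/(\epsilon\,\mathsf{t}(\xi_a-\eta))=\epsilon\,\mathsf{q}_{\mathsf{t},a}^{(1)}/\mathsf{q}_{\mathsf{t},a}^{(0)}$ (using $\epsilon^{-1}=\epsilon$), so one may normalize $\mathsf{q}_{\epsilon\mathsf{t},a}^{(h_a)}=\epsilon^{h_a}\mathsf{q}_{\mathsf{t},a}^{(h_a)}$. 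Absorbing $\epsilon^{h_a}$ into the $\mathsc{a}/\mathsc{d}$ factor, applying $\mathbf{S}^{(\pm)}$ to the basis vectors $\ket{\underline{\mathbf{h}}}$ and multiplying by $(\pm 1)^{\frac{1-\epsilon}{2}}$ reproduces exactly \eqref{eigen+XYZ}, with \eqref{t-q-XYZbis} being \eqref{t-q-XYZ} read at $\mathsf{(x,y)}=(0,0)$. The left eigenstates \eqref{eigen-XYZ} follow in the same manner from the dual of the intertwining relation \eqref{Invertible-pseudo-similarities} and the left SOV basis in \eqref{states-bar}; finally, $\{\ket{\Psi_\mathsf{t}^+},\ket{\Psi_\mathsf{t}^-}\}$ is a basis of the $\mathsf{t}$-eigenspace because the $2\times 2^{\mathsf{N}-1}=2^\mathsf{N}$ vectors so produced are linearly independent, as established in Theorem~\ref{th-sp-8V-per}.

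The only step requiring genuine care is the bookkeeping of the sign $\epsilon$: one must check that the relation $\ket{\Psi_{-\mathsf{t}}^\mathsf{(6VD)}}=\Gamma_z\ket{\Psi_\mathsf{t}^\mathsf{(6VD)}}$ of Lemma~\ref{lem-sp-6VD} is compatible with the normalization $\mathsf{q}_{-\mathsf{t},a}^{(h_a)}=(-1)^{h_a}\mathsf{q}_{\mathsf{t},a}^{(h_a)}$ dictated by the SOV characterization of the eigenvector with eigenvalue $-\mathsf{t}$. Since that eigenvector is unique up to a scalar, both statements single out the same one-dimensional line, so the compatibility is automatic and the relative normalizations of $\ket{\Psi_\mathsf{t}^+}$ and $\ket{\Psi_\mathsf{t}^-}$ appearing in \eqref{eigen+XYZ}--\eqref{eigen-XYZ} are simply the $(\pm 1)^{\frac{1-\epsilon}{2}}$ prefactors generated by $\mathbf{S}^{(\pm)}$ in \eqref{Invertible-pseudo-similarities}. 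Everything else is the specialization to the trivial twist of the $\mathsf{(x,y)}\neq(0,0)$ arguments of Section~\ref{ssec-spectr-8V}, together with the replacement $\mathbf{S}^{(0)}\leadsto\mathbf{S}^{(\pm)}$.
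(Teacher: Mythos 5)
Your proposal is correct and follows essentially the same route as the paper, which presents Theorem~\ref{th-T8V-per} as a direct consequence of Theorem~\ref{th-sp-8V-per}, the lemma establishing the invertible intertwiners $\mathbf{S}^{(\pm)}$ in \eqref{Invertible-pseudo-similarities}, and the SOV characterization of Appendix~\ref{app-spectr-6VD} specialized to $\mathsf{(x,y)}=(0,0)$. Your handling of the $\epsilon^{h_a}$ rescaling of the $\mathsf{q}$-coefficients and of the $(\pm 1)^{\frac{1-\epsilon}{2}}$ prefactors matches the paper's bookkeeping, and your remark that the normalization compatibility is automatic (eigenvectors being unique up to scalar) is the right observation.
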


Once again, one can use the results of \cite{LevNT15} to reformulate the characterization of Theorem~\ref{th-T8V-per} in terms of solutions of functional $T$-$Q$ equations i.e., in terms of solutions of Bethe-type equations. It is for example possible to use the reformulation in terms of the solutions of the functional $T$-$Q$ equations with an extra inhomogeneous terms which was shown to provide a complete description of the dynamical 6-vertex spectrum and eigenstates, as described in Appendix~B of \cite{LevNT15}. This reformulation allows one to rewrite the transfer matrix eigenstates in a form similar to ABA (see Appendix~\ref{app-inhom}), and to deal with the homogeneous limit. However, as already mentioned in Section~\ref{sec-eq-Baxter}, it is for the moment not so obvious whether one can efficiently analyze the thermodynamic limit of the corresponding Bethe-type equations.
Alternatively we can use, thanks to the gauge transformations, the ansatz proposed in \cite{LevNT15} 
for the solutions of the homogeneous $T$-$Q$ functional equation. This ansatz consists in looking for the solutions of Baxter's homogeneous $T$-$Q$ functional equation within particular classes of elliptic polynomials with different quasi-periods compatible with those of the transfer matrix and of the other coefficients of the $T$-$Q$ equation (see \cite{LevNT15} for details), i.e. in postulating some particular form of the Bethe equations and of the SOV eigenstates.
In other words, it enables us {\em a priori} to write some kind of ABA-type expressions for the transfer matrix spectrum and eigenstates, similarly as what as been done in Proposition~\ref{Cor-Eigen-Bethe} in the twisted case.
However, until now, the completeness of this ``algebraic Bethe-type'' ansatz has been proven for even $\mathsf{N}$ only, and therefore it is still an open question whether it provides a good characterization of the periodic 8-vertex transfer matrix spectrum and eigenstates in the present odd $\mathsf{N}$ case. We nevertheless expect that the restriction to the case of even $\mathsf{N}$ is purely technical (the proof of the completeness happens to be simpler in that case, see \cite{LevNT15}), and we forecast to come back to this completeness problem for odd $\mathsf{N}$ in a further publication. 

\section{Conclusion}

In this paper we have studied the finite-size 8-vertex (or XYZ) model for different types of integrable quasi-periodic boundary conditions which are not solved by Bethe ansatz: with a twist by $\sigma^\alpha$, $\alpha=x,y,z$, or without twist (periodic case) but, in the latter case, for a model with an odd number of sites only. In all these cases, we have shown the relation, by means of the vertex-IRF transformation, with the dynamical 6-vertex model (also called SOS model) with certain types of {\em antiperiodic} boundary conditions, a model solvable by Sklyanin's Separation of Variables approach.

We have shown that the vertex-IRF transformation relating these two models is bijective in all the twisted cases, so that we can directly use the known SOV description \cite{LevNT15} of the antiperiodic dynamical 6-vertex transfer matrix spectrum and eigenstates to characterized the spectrum and eigenstates of the 8-vertex twisted transfer matrix. We have therefore obtained a complete description of this spectrum, which in that case is simple, and of the corresponding eigenstates, which can be constructed by means of a SOV basis, in terms of solutions of a discrete system of equations involving the inhomogeneity parameters of the model. Still using the known results for the antiperiodic dynamical 6-vertex model \cite{LevNT15}, we have reformulated this characterization in terms of solutions of some homogeneous functional $T$-$Q$ equation of Baxter's type, i.e. in terms of solutions of a system of Bethe-type equations, at least in the case of $\mathsf{N}$ even.

In the periodic case with an odd number of sites, the vertex-IRF transformation is not bijective and the  spectrum of the 8-vertex transfer matrix is not simple. This case was partially studied in \cite{Nic13a}. In the present paper, we have considered two variants of the vertex-IRF transformation with non-intersecting kernels from which we were able to completely describe the 8-vertex transfer matrix spectrum and eigenstates. We have shown that the spectrum is doubly degenerated and coincides with half of the spectrum of the corresponding antiperiodic dynamical 6-vertex transfer matrix. To each of the eigenvalues correspond two linearly independent eigenstates which can be obtained  from the antiperiodic dynamical 6-vertex transfer matrix ones by means of one or the other variants of the vertex-IRF transformation. As in the twisted case, these eigenstates can be constructed by means of the dynamical 6-vertex SOV basis, in terms of solutions of a system of discrete equations involving the inhomogeneity parameters of the model. It would be interesting to be able to also prove in that case (odd number of sites) the equivalence of this characterization with a formulation  in terms of solutions of Bethe-type equations: an ansatz has been made in \cite{LevNT15}, but the completeness of this ansatz has for the moment not been proven in the case of a system with an odd number of sites.

To conclude, let us briefly mention the question of the computation of the finite-size form factors and of the correlation functions of the model, which is a difficult problem due to the combinatorial complexity of the vertex-IRF transformation. A natural strategy to compute the finite-size form factors would be to use the solution of the quantum inverse problem \eqref{inv-pb1}-\eqref{inv-pb2} (which can easily be adapted to any twisted case following the lines presented in \cite{LevNT15}) so as to express the action of the local operators on the twisted 8-vertex eigenstates in terms of the action of the corresponding elements of the twisted 8-vertex monodromy matrix or of its inverse. The latter can then be related to the action of the elements of the dynamical 6-vertex monodromy matrix using the relations \eqref{act-el1}-\eqref{act-el2}. However, these relations involve a dressing of the off-diagonal dynamical 6-vertex matrix elements by a non-trivial action of the vertex-IRF transformation on the whole space of states, which makes it difficult the obtention of generic and compact formulas for the corresponding form factors. We plan to consider this challenging problem in a future work.

\section*{Acknowledgements}

G.N. and V.T. are supported by CNRS.
We also acknowledge  the support from the ANR grant DIADEMS 10 BLAN 012004 for early stage of this work.
G.N. would like to express his gratefulness to the organizers of the conference ``Baxter 2015: Exactly Solved Models $\&$ Beyond
ANU, Palm Cove'' for the invitation and for the great honor to present this paper there.


\appendix

\section{Diagonalization of the antiperiodic dynamical 6-vertex transfer matrix by SOV}
\label{app-spectr-6VD}

The $\mathsf{(x,y)}$-dynamical 6-vertex model with antiperiodic boundary condition has been studied in \cite{LevNT15} by means of Sklyanin's quantum separation of variable approach \cite{Skl90,Skl92}.
For completeness,  we briefly recall in this appendix some of the results of \cite{LevNT15} that we use in the present article concerning the SOV characterization of the corresponding transfer matrix eigenvalues and eigenstates.

In the subspace $\mathbb{\bar{D}}_{\mathsf{(6VD)},\mathsf{N}}^{(0,\mathcal{L}/\mathcal{R})}$ of $\mathbb{D}_{\mathsf{(6VD)},\mathsf{N}}^{\mathcal{L}/\mathcal{R}}$, we define the following left and right {\em reference states}:%
\begin{equation}\label{ref-states}
  \bra{\mathbf{0} }
\equiv \frac{1}{\mathsc{n}}
\big(\otimes _{n=1}^{\mathsf{N}}\langle n,h_{n}=0| \big)\otimes \langle t_{0,\mathbf{0}}|,
\qquad
  \ket{\mathbf{1} }
 \equiv \frac{1}{\mathsc{n}}
 \big( \otimes _{n=1}^{\mathsf{N}}|n,h_{n}=1\rangle \big)
\otimes |t_{0,\mathbf{1}}\rangle ,
\end{equation}
where we have used the notations 
$\mathbf{0}\equiv (h_{1}=0,\ldots,h_{\mathsf{N}}=0)$ and 
$\mathbf{1}\equiv (h_{1}=1,\ldots,h_{\mathsf{N}}=1)$, and where $\mathsc{n}$ is a conveniently chosen normalization constant.
Then, for each $\mathsf{N}$-tuple $\mathbf{h}\equiv (h_{1},\ldots,h_{\mathsf{N}})\in \{0,1\}^{\mathsf{N}}$, we construct a state $\bra{\mathbf{h}}\in\bar{\mathbb{D}}_{\mathsf{(6VD)},\mathsf{N}}^{(0,\mathcal{L})}$ and a state $\ket{ \mathbf{h} } \in\bar{\mathbb{D}}_{\mathsf{(6VD)},\mathsf{N}}^{(0,\mathcal{R})}$ as\footnote{We slightly simplify here the notations with respect to \cite{LevNT15}: the states \eqref{D-left-eigenstates} and \eqref{D-right-eigenstates} are respectively denoted $\bra{0,\mathbf{h}}$ and $\ket{\mathbf{h},0 } $ in \cite{LevNT15}.}
\begin{align}
 &\bra{\mathbf{h}}\equiv 
    \bra{\mathbf{0}} \prod_{n=1}^{\mathsf{N}}
 \left( \frac{\mathcal{C}(\xi_{n} )}{\text{\textsc{d}}(\xi _{n}-\eta )}\right)^{h_{n}},
\label{D-left-eigenstates}\\
 &\ket{\mathbf{h} } 
 \equiv \prod_{n=1}^{\mathsf{N}}
 \left( \frac{\mathcal{C}(\xi _{n}-\eta  )}{\text{\textsc{d}}(\xi _{n}-\eta )}\right) ^{(1-h_{n})}
 \ket{\mathbf{1} } .
 \label{D-right-eigenstates}
\end{align}
Under the hypothesis \eqref{cond-inh} on the inhomogeneity parameters $\xi _{1},\ldots,\xi _{\mathsf{N}}\in \mathbb{C}$ of the model,
it has been proven in \cite{LevNT15} that the set of vectors \eqref{D-left-eigenstates} (respectively \eqref{D-right-eigenstates}) defines a SOV basis of $\mathbb{\bar{D}}_{\mathsf{(6VD)},\mathsf{N}}^{(0,\mathcal{L}/\mathcal{R})}$ for the operator entries of the antiperiodic dynamical 6-vertex monodromy matrix (see Theorem~3.1 and Theorem~3.2 of \cite{LevNT15}).
These two basis are orthogonal, and the normalization constant $\mathsc{n}$ can be chosen in such a way that, $\forall \mathbf{h},\mathbf{k}\in\{0,1\}^\mathsf{N}$,
\begin{equation}\label{sc-rh}
   \moy{ \mathbf{h}\, |\, \mathbf{k} }
   = \delta_{\mathbf{h},\mathbf{k}}\,
   \frac{e^{-i\mathsf{y}\eta \sum_{j=1}^\mathsf{N}h_j}}
           {\det_{\mathsf{N}}\big[ \Theta^{(\mathbf{h})} \big]},
\end{equation}
in terms of the $\mathsf{N}\times\mathsf{N}$ matrices $\Theta^{(\mathbf{h})}$ of elements
\begin{equation}\label{mat-Theta}
  \big[ \Theta^{(\mathbf{h})} \big]_{ij}=\vartheta_{j-1}(\xi_i-h_i\eta-\bar\xi_0 ),
   \qquad
   \text{with}
   \quad
   \bar\xi_0=\frac{1}{\mathsf{N}}\left(\sum_{k=1}^\mathsf{N}\xi_k+t_{0,\mathbf{0}}\right),
\end{equation}
and where the functions $\vartheta_j$ are defined as
\begin{equation}\label{def-theta_j}
  \vartheta_j(\lambda)=\sum_{n\in\mathbb{Z}} 
  e^{i\pi\mathsf{N}\omega(n+\frac12-\frac{j}{\mathsf{N}})^2+2i\mathsf{N}(n+\frac12-\frac{j}{\mathsf{N}})(\lambda-\frac{\pi}{2})},
  \quad 0\le j\le \mathsf{N}-1.
\end{equation}

Moreover, on this basis, the action of the operators $\mathcal{B}(\lambda)$ and $\mathcal{C}(\lambda)$ have been explicitly  computed in a quasi-local way:
\begin{align}
&\bra{ \mathbf{h}}\,\mathcal{C}(\lambda  ) 
=\sum_{a=1}^{\mathsf{N}}
e^{i\mathsf{y}(\xi _{a}^{(h_{a})}-\lambda )}\,
\frac{\theta (t_{0,\mathbf{h}}-\lambda +\xi _{a}^{(h_{a})})}{\theta (t_{0,\mathbf{h}})}
  \prod_{b\neq a}\frac{\theta (\lambda -\xi_{b}^{(h_{b})})}{\theta (\xi _{a}^{(h_{a})}-\xi _{b}^{(h_{b})})}\,
  \text{\textsc{d}}(\xi _{a}^{(1-h_{a})})\,
  \bra{ \mathsf{T}_{a}^{+} \mathbf{h}}\, , 
 \label{C-SOV_D-left}
  \\
&\bra{ \mathbf{h}}\, \mathcal{B}(\lambda  )
 =\sum_{a=1}^{\mathsf{N}}e^{i\mathsf{y}(\xi _{a}^{(h_{a})}-\lambda )}\,
   \frac{\theta (t_{0,\mathbf{h}}-\lambda +\xi_{a}^{(h_{a})})}{\theta (t_{0,\mathbf{h}})} 
   \prod_{b\neq a}\frac{\theta (\lambda -\xi _{b}^{(h_{b})})}{\theta (\xi _{a}^{(h_{a})}-\xi_{b}^{(h_{b})})}\,
   \mathsc{a}_{\mathsf{x},\mathsf{y}}(\xi_{a}^{(1-h_{a})})\,
   \bra{ \mathsf{T}_{a}^{-}\mathbf{h}}\, ,
\label{B-SOV_D-left}
\end{align}
and
\begin{align}
&\mathcal{C}(\lambda  )\, \ket{\mathbf{h} }
 =\sum_{a=1}^{\mathsf{N}} e^{i\mathsf{y}(\xi _{a}^{(h_{a})}-\lambda )}\,
\frac{\theta (t_{0,\mathbf{h}}-\lambda +\xi _{a}^{(h_{a})})}{\theta(t_{0,\mathbf{h}})}
\prod_{b\neq a}\frac{\theta (\lambda -\xi_{b}^{(h_{b})})}{\theta (\xi _{a}^{(h_{a})}-\xi _{b}^{(h_{b})})}\,\text{\textsc{d}}(\xi _{a}^{(h_{a})})\,
\ket{\mathsf{T}_{a}^{-}\mathbf{h} }\,  ,  
\label{C-SOV_D-right} 
 \\
&\mathcal{B}(\lambda  )\, \ket{\mathbf{h} }  
=\sum_{a=1}^{\mathsf{N}} e^{i\mathsf{y}(\xi _{a}^{(h_{a})}-\lambda )}\,
\frac{\theta (t_{0,\mathbf{h}}-\lambda +\xi _{a}^{(h_{a})})}{\theta(t_{0,\mathbf{h}})}
\prod_{b\neq a}\frac{\theta (\lambda -\xi_{b}^{(h_{b})})}{\theta (\xi _{a}^{(h_{a})}-\xi _{b}^{(h_{b})})}\,
\mathsc{a}_{\mathsf{x},\mathsf{y}}(\xi _{a}^{(h_{a})})\,
 \ket{\mathsf{T}_{a}^{+}\mathbf{h} } \, ,
\label{B-SOV_D-right}
\end{align}
with the notations
\begin{align}
&\xi_a^{(h_a)}=\xi_a-\eta h_a,\\
&\mathsf{T}_{a}^{\pm }(h_{1},\ldots,h_{\mathsf{N}})
=(h_{1},\ldots,h_{a}\pm 1,\ldots,h_{\mathsf{N}}),
\\
&\mathsc{a}_{\mathsf{x},\mathsf{y}}(\lambda )
=(-1)^{\mathsf{x}+\mathsf{y}+\mathsf{x}\mathsf{y}}\, \mathsc{a}(\lambda ).
\end{align}
Finally, we recall Theorem~4.1 of \cite{LevNT15}:

\begin{theorem}[\cite{LevNT15}]\label{thm-eigen-t} 
The antiperiodic dynamical 6-vertex transfer matrix $\overline{\mathcal{T}}(\lambda  )$ \eqref{6VD-antiT} defines a one-parameter family of commuting operators on $\vspace{-1mm}\mathbb{\bar{D}}_{\mathsf{(6VD)},\mathsf{N}}^{(0,\mathcal{L}/\mathcal{R})}$.
For any fixed $\mathsf{N}$-tuple of inhomogeneities $(\xi _{1},\ldots,\xi _{\mathsf{N}})\in 
\mathbb{C}^{\mathsf{N}}$ satisfying \eqref{cond-inh}, the spectrum $\Sigma^\mathsf{(6VD)}$ of $\overline{\mathcal{T}}(\lambda )$ in $\mathbb{\bar{D}}_{\mathsf{(6VD)},\mathsf{N}}^{(0,\mathcal{L}/\mathcal{R})}$ is simple and coincides with the set of functions of the form
\begin{equation}
\bar{\mathsf{t}}(\lambda )
=\sum_{a=1}^{\mathsf{N}} e^{i\mathsf{y}(\xi_a-\lambda)}\,
\frac{\theta(t_{0,\mathbf{0}}-\lambda +\xi _{a})}{\theta (t_{0,\mathbf{0}})}
\prod_{b\neq a}\frac{\theta (\lambda -\xi _{b})}{\theta (\xi _{a}-\xi _{b})}\,
\bar{\mathsf{t}}(\xi _{a}) ,
\quad \big(\bar{\mathsf{t}}(\xi _1),\ldots,\bar{\mathsf{t}}(\xi _{\mathsf{N}})\big)\in\mathbb{C}^\mathsf{N},
\label{set-t}
\end{equation}
which satisfy the discrete system of equations
\begin{equation}
\bar{\mathsf{t}}(\xi _{a})\,\bar{\mathsf{t}}(\xi_{a}-\eta)=( -1) ^{\mathsf{x}+\mathsf{y}+\mathsf{x}\mathsf{y}}\,
\mathsc{a}(\xi _{a})\,\mathsc{d}(\xi _{a}-\eta),
\qquad
\forall a\in \{1,\ldots,\mathsf{N}\}.
\label{dis-sys}
\end{equation}
The right $\overline{\mathcal{T}}(\lambda  )$-eigenstate $\ket{  \Psi_{\bar{\mathsf{t}}}^\mathsf{(6VD)}  }\in\mathbb{\bar{D}}_{\mathsf{(6VD)},\mathsf{N}}^{(0,\mathcal{R})}$ and the left $\overline{\mathcal{T}}(\lambda )$-eigenstate $\bra{  \Psi_{\bar{\mathsf{t}}}^\mathsf{(6VD)}  }\in\mathbb{\bar{D}}_{\mathsf{(6VD)},\mathsf{N}}^{(0,\mathcal{L})}$ associated with the eigenvalue $\bar{\mathsf{t}}(\lambda )\in \Sigma^\mathsf{(6VD)}$ are respectively given by
\begin{align}
& \ket{  \Psi_{\bar{\mathsf{t}}}^\mathsf{(6VD)}  }
=\sum_{\mathbf{h}\in\{0,1\}^\mathsf{N}}
\prod_{a=1}^{\mathsf{N}}\bigg[ e^{i\mathsf{y}\eta h_a}\, \bigg( \frac{\mathsc{a}_\mathsf{x,y}(\xi_a)}{\mathsc{d}(\xi_a-\eta)}\bigg)^{\! h_a}\,  \mathsf{q}_{\bar{\mathsf{t}},a}^{(h_a)}\bigg]\,
\det_{\mathsf{N}}\big[\Theta^{(\mathbf{h}) }\big]\, \ket{\mathbf{h}} , 
 \label{eigenR-6VD}\\
&  \bra{  \Psi_{\bar{\mathsf{t}}}^\mathsf{(6VD)}  }
  =\sum_{\mathbf{h}\in\{0,1\}^\mathsf{N}}
\prod_{a=1}^{\mathsf{N}}\left[ e^{i\mathsf{y}\eta h_a}\,  \mathsf{q}_{\bar{\mathsf{t}},a}^{(h_a)}\right]\,
\det_{\mathsf{N}}\big[\Theta^{(\mathbf{h})}\big] \, 
\bra{ \mathbf{h} }, 
 \label{eigenL-6VD}
\end{align}
where the coefficients $\mathsf{q}_{\bar{\mathsf{t}},a}^{(h_a)}$ are (up to an overall normalization) characterized by
\begin{equation}
\frac{ \mathsf{q}_{\bar{\mathsf{t}},a}^{(1)} }{  \mathsf{q}_{\bar{\mathsf{t}},a}^{(0)}  }
=\frac{\mathsc{d}(\xi_a-\eta)}{\bar{\mathsf{t}}(\xi_a-\eta)}
=(-1)^{\mathsf{x}+\mathsf{y}+\mathsf{x}\mathsf{y}}\, \frac{\bar{\mathsf{t}}(\xi_a)}{\mathsc{a}(\xi_a)} .
\label{t-q-6VD}
\end{equation}
\end{theorem}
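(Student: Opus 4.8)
The plan is to apply Sklyanin's quantum Separation of Variables, in the form developed in \cite{LevNT15}. That $\overline{\mathcal{T}}(\lambda)=\mathcal{B}(\lambda)+\mathcal{C}(\lambda)$ defines a commuting family is immediate from the dynamical quadratic relation \eqref{RTT-op}; the substance is then threefold: the form \eqref{set-t}--\eqref{dis-sys} of the eigenvalues, the separated form \eqref{eigenR-6VD}--\eqref{t-q-6VD} of the eigenstates, and the simplicity of the spectrum. The starting point is the SOV basis: the operator $\mathcal{C}(\lambda)$ is an elliptic polynomial of degree $\mathsf{N}$ in the spectral parameter whose operator coefficients act on $\mathbb{\bar{D}}_{\mathsf{(6VD)},\mathsf{N}}^{(0)}$, and --- this is where the antiperiodic twist is essential --- a standard analysis of the commutation relations following from \eqref{RTT-op}, together with the dynamical analogues of the annihilation identities \eqref{annih-1}--\eqref{annih-2}, shows that $\mathcal{C}(\lambda)$ is diagonalizable with simple spectrum, the corresponding eigenstates being precisely the vectors \eqref{D-left-eigenstates}--\eqref{D-right-eigenstates} built by acting with $\mathcal{C}$ at the points $\xi_a$ and $\xi_a-\eta$ on the reference states (Theorems~3.1--3.2 of \cite{LevNT15}). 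On these bases the action of $\mathcal{B}(\lambda)$ and $\mathcal{C}(\lambda)$ is quasi-local and explicitly given by \eqref{C-SOV_D-left}--\eqref{B-SOV_D-right}.

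The forward direction is then mechanical. The quasi-periodicity of the dynamical $R$-matrix \eqref{R-6VD} propagates to $\overline{\mathcal{T}}(\lambda)$, making it an elliptic polynomial of degree $\mathsf{N}$ with fixed quasi-periods; hence every eigenvalue $\bar{\mathsf{t}}(\lambda)$ is such a polynomial and is recovered from its $\mathsf{N}$ values $\bar{\mathsf{t}}(\xi_a)$ through the interpolation formula \eqref{set-t}. The inversion relation \eqref{inv-mon}, combined with the annihilation identities, yields $\overline{\mathcal{T}}(\xi_a)\,\overline{\mathcal{T}}(\xi_a-\eta)=(-1)^{\mathsf{x}+\mathsf{y}+\mathsf{x}\mathsf{y}}\,\mathrm{det}_q M(\xi_a)$ with $\mathrm{det}_q M(\xi_a)=\mathsc{a}(\xi_a)\,\mathsc{d}(\xi_a-\eta)$, which forces the discrete quadratic system \eqref{dis-sys}. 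Expanding an eigenvector as $\ket{\Psi}=\sum_{\mathbf{h}}\psi_{\mathbf{h}}\ket{\mathbf{h}}$ and inserting $\overline{\mathcal{T}}(\lambda)\ket{\Psi}=\bar{\mathsf{t}}(\lambda)\ket{\Psi}$ at $\lambda=\xi_a$ and $\lambda=\xi_a-\eta$ into \eqref{C-SOV_D-right}--\eqref{B-SOV_D-right}, all interpolation products collapse and one obtains, for each $a$, a two-term relation between $\psi_{\mathbf{h}}$ and its neighbour with $h_a$ flipped. These relations decouple site by site and integrate to the factorized form
\[
\psi_{\mathbf{h}}=\prod_{a=1}^{\mathsf{N}}\Big[e^{i\mathsf{y}\eta h_a}\Big(\tfrac{\mathsc{a}_\mathsf{x,y}(\xi_a)}{\mathsc{d}(\xi_a-\eta)}\Big)^{\! h_a}\mathsf{q}_{\bar{\mathsf{t}},a}^{(h_a)}\Big]\det_{\mathsf{N}}\big[\Theta^{(\mathbf{h})}\big],
\]
with the ratio $\mathsf{q}_{\bar{\mathsf{t}},a}^{(1)}/\mathsf{q}_{\bar{\mathsf{t}},a}^{(0)}$ fixed by \eqref{t-q-6VD}, the two expressions there coinciding precisely because \eqref{dis-sys} holds; the left eigenstate is treated identically using \eqref{C-SOV_D-left}--\eqref{B-SOV_D-left}.

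Conversely, for any solution $(\bar{\mathsf{t}}(\xi_1),\dots,\bar{\mathsf{t}}(\xi_{\mathsf{N}}))$ of \eqref{dis-sys} one defines $\bar{\mathsf{t}}(\lambda)$ by \eqref{set-t}, the coefficients $\mathsf{q}_{\bar{\mathsf{t}},a}^{(h_a)}$ by \eqref{t-q-6VD} (consistent thanks to \eqref{dis-sys}), and the candidate states \eqref{eigenR-6VD}--\eqref{eigenL-6VD}; a direct computation with \eqref{C-SOV_D-right}--\eqref{B-SOV_D-right} shows that $\overline{\mathcal{T}}(\lambda)$ acts on them as multiplication by $\bar{\mathsf{t}}(\lambda)$, since by the interpolation it suffices to verify this at $\lambda=\xi_a,\xi_a-\eta$, where it reduces to the two-term identities encoded in \eqref{t-q-6VD}. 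Under \eqref{cond-inh} the $\mathsf{q}_{\bar{\mathsf{t}},a}^{(h_a)}$ can be chosen nonzero, so these candidates are genuine nonzero vectors. Finally \eqref{dis-sys} is a system of $\mathsf{N}$ quadratic equations in the $\mathsf{N}$ unknowns $\bar{\mathsf{t}}(\xi_a)$ --- the values $\bar{\mathsf{t}}(\xi_a-\eta)$ being affine functions of these via \eqref{set-t} --- which under \eqref{cond-inh} has exactly $2^{\mathsf{N}}$ distinct solutions; since $\dim\mathbb{\bar{D}}_{\mathsf{(6VD)},\mathsf{N}}^{(0)}=2^{\mathsf{N}}$, this produces $2^{\mathsf{N}}$ distinct eigenvalues each carrying a one-dimensional eigenspace, so the spectrum is simple and the SOV eigenstates form a basis of $\mathbb{\bar{D}}_{\mathsf{(6VD)},\mathsf{N}}^{(0,\mathcal{L}/\mathcal{R})}$.

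The main obstacle is this last counting/non-degeneracy step: proving that \eqref{dis-sys} has \emph{exactly} $2^{\mathsf{N}}$ solutions and that the associated SOV states are all nonzero requires using the genericity conditions \eqref{cond-inh} carefully, so as to exclude coinciding solutions and, above all, degenerate situations such as a vanishing $\bar{\mathsf{t}}(\xi_a-\eta)$, which would make the ratio \eqref{t-q-6VD} ill-defined and which must then be analysed separately. A second, less severe difficulty is the preliminary diagonalization of $\mathcal{C}(\lambda)$ with simple spectrum (Theorems~3.1--3.2 of \cite{LevNT15}), which underlies the whole SOV construction and relies on the specifically antiperiodic structure of the boundary conditions.
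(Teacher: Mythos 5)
This theorem is not proved in the paper at all: it is Theorem~4.1 of \cite{LevNT15}, recalled verbatim in Appendix~A, so there is no in-paper argument to compare against. Your sketch reconstructs the proof along exactly the lines that \cite{LevNT15} follows (SOV basis, two-term recursions at $\lambda=\xi_a^{(h_a)}$, separated wavefunction, quadratic system from the inversion relation, counting), and the two points you single out as the real work --- the construction of the SOV basis and the completeness/simplicity count --- are indeed where the substance of that reference lies. Two corrections, though. First, you describe the states \eqref{D-left-eigenstates}--\eqref{D-right-eigenstates} as the eigenbasis of $\mathcal{C}(\lambda)$ with simple spectrum; this is not what Theorems~3.1--3.2 of \cite{LevNT15} assert, and it cannot be: the very formulas \eqref{C-SOV_D-left} and \eqref{C-SOV_D-right} that you then use show that $\mathcal{C}(\lambda)$ maps $\ket{\mathbf{h}}$ into the span of the shifted states $\ket{\mathsf{T}_a^{\pm}\mathbf{h}}$, i.e.\ $\mathcal{B}$ and $\mathcal{C}$ act as quasi-local shift operators, while it is the separate-variable (diagonal) operators built from $\mathsf{A}(\lambda|\tau)$ and $\mathsf{D}(\lambda|\tau)$ that the basis diagonalizes with simple spectrum. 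Since your subsequent argument only uses the shift action, this is a mislabeling rather than a structural flaw, but it should be fixed. Second, the ``degenerate situation'' $\bar{\mathsf{t}}(\xi_a-\eta)=0$ that you propose to analyse separately cannot occur: under \eqref{cond-inh0}--\eqref{cond-inh} one has $\mathsc{a}(\xi_a)\,\mathsc{d}(\xi_a-\eta)\neq 0$, so \eqref{dis-sys} forces both $\bar{\mathsf{t}}(\xi_a)$ and $\bar{\mathsf{t}}(\xi_a-\eta)$ to be nonzero, and the two expressions in \eqref{t-q-6VD} are automatically well defined and consistent. With these adjustments your outline is a faithful account of the cited proof.
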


\section{Complete characterization of the spectrum and eigenstates through the solutions of an inhomogeneous  $T$-$Q$ equation}
\label{app-inhom}

In this appendix, we transpose to the quasi-periodic 8-vertex model the characterization of the antiperiodic dynamical 6-vertex transfer matrix spectrum and eigenstates that has been obtained in \cite{LevNT15} through the solutions of an inhomogeneous $T$-$Q$ equation. This leads to a complete description of the spectrum in terms of Bethe-type equations with an extra inhomogeneous term, and of the eigenstates in terms of the multiple action of the operator $\bar{D}(\lambda)$ \eqref{bar-D} evaluated at the Bethe roots on some pseudo-vacuum state, a representation which deeply resembles --- except for the inhomogeneous term of the Bethe equations --- what is usually obtained through ABA.

As explained in \cite{LevNT15}, the idea is to obtained a continuous version of the discrete SOV characterization of the spectrum and eigenstates in the form of a $T$-$Q$ equation which admits $Q$-solutions of the same functional form as the usual functions $\mathsc{a}(\lambda)$, $\mathsc{d}(\lambda)$ and the transfer matrix eigenvalues $\mathsf{t}(\lambda)$ of the model, i.e. such that
\begin{equation}\label{Q-form}
  Q(\lambda )=\prod_{j=1}^{\mathsf{M}}\theta (\lambda -\lambda _{j}),
  \qquad \mathsf{M}\in\mathbb{N},\quad \lambda_1,\ldots,\lambda_\mathsf{N}\in\mathbb{C}.
\end{equation}
This can be done by considering the solutions of an adequately modified version of \eqref{hom-eq},
\begin{multline}\label{inhom}
\mathsf{t}(\lambda )\,Q(\lambda )
= (-1)^\mathsf{y}\,(-i)^\mathsf{xy}\, f(\lambda )\,\mathsc{a}(\lambda)\,Q(\lambda -\eta )
+(-1)^\mathsf{x}\, i^\mathsf{xy}\,\frac{\mathsc{d}(\lambda )}{f(\lambda +\eta )}\,  Q(\lambda +\eta )\\
-\mathsc{a}(\lambda)\, \mathsc{d}(\lambda )\, F(\lambda),
\end{multline}
with in particular a possible inhomogeneous term vanishing at the inhomogeneity and shifted inhomogeneity parameters $\xi_j-\eta h_j$, $j\in\{1,\ldots, \mathsf{N} \}$, $h_j\in\{0,1\}$.
Note that the introduction of the function $f(\lambda)$ is necessary so as to impose that all the terms in \eqref{inhom} with the choice \eqref{Q-form} have the same quasi-periodicity properties.
A somewhat minimal choice of $f(\lambda)$ when $\mathsf{M}=\mathsf{N}$ in \eqref{Q-form} is given by 
\begin{equation}
f(\lambda)\equiv f_{\mu }^{(\beta )}(\lambda )
=\beta ^{-1}e^{-i\mathsf{y}\lambda }\,\frac{\theta (\lambda -\mu )}{\theta (\lambda -\mu +t_0)},
\label{def-f}
\end{equation}
where $\beta $, $\mu $ are two arbitrary fixed complex parameters ($\beta\in\mathbb{C}\setminus\mathbb{R}$ and $\mu-\xi_j,\mu-\xi_j-\eta,\mu-t_0-\xi_j,\mu-t_0-\xi_j+\eta\notin\pi\mathbb{Z}+\pi\omega\mathbb{Z}$, $\forall\, j\in\{1,\ldots,\mathsf{N}\}$). With such a choice, the inhomogeneous term in \eqref{inhom} which is given by a function $F(\lambda)\equiv F_{\mu ,Q}^{(\beta )}(\lambda )$ which depends on $\beta$, $\mu$ and $Q(\lambda)$ as
\begin{multline}  \label{F-Q}
F_{\mu ,Q}^{(\beta )}(\lambda )
=\frac{\beta^{-1}\left( -1\right) ^{\mathsf{y}}(-i)^{\mathsf{xy}}\,e^{-i\mathsf{y}\lambda }\,\theta (t_0)}
          {\theta (t_0+\alpha _{Q}-\sum_{k}\xi _{k}+\mathsf{N}\eta )}
  \frac{Q(\mu -\eta -t_0)}{\mathsc{d}(\mu -t_0)}
  \frac{\theta (\lambda -\mu -\alpha _{Q}+\sum_{k}\xi _{k}-\mathsf{N}\eta )}{\theta(\lambda -\mu +t_0)} 
  \\
+\frac{\left( -1\right) ^{\mathsf{x}}i^{\mathsf{yx}}\beta \,e^{i\mathsf{y}(\lambda +\eta )}\,\theta (t_0)}
          {\theta (\mathsf{y}\pi \omega -t_0-\alpha _{Q}+\sum_{k}\xi _{k}-\mathsf{N}\eta )}
  \frac{Q(\mu )}{\mathsc{a}(\mu -\eta )}
  \frac{\theta (\lambda -\mu +\eta +\mathsf{y}\pi\omega -t_0-\alpha _{Q}+\sum_{k}\xi _{k}-\mathsf{N}\eta )}
         {\theta (\lambda -\mu +\eta )},
\end{multline}
with $\alpha _{Q}\equiv \sum_{j=1}^{\mathsf{N}}\lambda _{j}$ being the norm
of the theta function $Q(\lambda )$ of order $\mathsf{N}$.
Then, as it follows from the study of \cite{LevNT15} and from the correspondence established in the present paper between the antiperiodic SOS transfer matrix eigenstates and the 8-vertex $\mathsf{(x,y)}$-twisted ones, the description of the 8-vertex transfer matrix spectrum and eigenstates that we obtain through the consideration of this equation is complete. One of the advantages of this description is that it provides an ABA-type representation for the eigenstates, as stated below.

Indeed, let us introduce the following right and left pseudo-vacuum states,
\begin{align}
& \ket{\underline{\Omega} _{f} } 
=\sum_{\mathbf{h}\in \{0,1\}^{\mathsf{N}}}\prod_{a=1}^{\mathsf{N}}
\left( \frac{e^{i\mathsf{y}\eta }\,\mathsc{a}_{\mathsf{x,y}}(\xi_{a})}{\mathsc{d}(\xi _{a}-\eta )}\,f(\xi _{a})\right)^{\!h_{a}}
\det_{\mathsf{N}}\big[\Theta ^{(0,\mathbf{h})}\big]\,\mathbf{S}^{(0)}\ket{\underline{\mathbf{h}}} , 
 \label{refT-r-Q} \\
& \bra{ \underline{\Omega}_{f}}
=\sum_{\mathbf{h}\in \{0,1\}^{\mathsf{N}}}\prod_{a=1}^{\mathsf{N}}\Big(e^{i\mathsf{y}\eta }\,f(\xi _{a})\Big)^{h_{a}}
\det_{\mathsf{N}}\big[\Theta ^{(0,\mathbf{h})}\big]\,\bra{\underline{\mathbf{h}}}\big[\mathbf{S}^{(0)}\big]^{-1}.
\label{refT-l-Q}
\end{align}

\begin{theorem}\label{th-B1}
Let us suppose that the inhomogeneity parameters $\xi _{1},\ldots ,\xi _{\mathsf{N}}$ satisfy \eqref{cond-inh0}-\eqref{cond-inh} and that $\eta \in \mathbb{C}\setminus \mathbb{R}$.
Then, if $\mathsf{(x,y)}\not=(0,0)$, the spectrum $\Sigma _{\mathsf{(x,y)}}^{\mathsf{(8V)}}$ of the $\mathsf{(x,y)}$-twisted transfer matrix $\mathsf{T}_{\mathsf{(x,y)}}^{\mathsf{(8V)}}(\lambda )$ is given by the set of entire functions $\mathsf{t}(\lambda)$ for which there exists a function $Q(\lambda )$ of the form \eqref{Q-form} with $\mathsf{M}=\mathsf{N}$ satisfying $(Q(\xi _{j}),Q(\xi _{j}-\eta ))\not=(0,0)$, $1\leq j\leq \mathsf{N}$, and such that $\mathsf{t}(\lambda )$ and $Q(\lambda )$ satisfy the
inhomogeneous functional equation \eqref{inhom} with \eqref{def-f} and \eqref{F-Q}.
Moreover, the one-dimensional right and left eigenspaces in $\mathbb{V}_{%
\mathsf{N}}^{\mathcal{R/L}}$ corresponding to $\mathsf{t}(\lambda )\in
\Sigma _{\mathsf{(x,y)}}^{\mathsf{(8V)}}$ are respectively spanned by the vectors
\begin{equation}
   \ket{\Psi_\Lambda} =\prod_{a=1}^{\mathsf{N}}\bar{D}(\lambda _{a}) \ket{\underline{\Omega}_{f} } ,
   \qquad 
   \bra{ \Psi_\Lambda}=\bra{ \underline{\Omega}_{f}}\prod_{a=1}^{\mathsf{N}}\bar{D}(\lambda _{a}),
\end{equation}
where $\bar{D}(\lambda _{a})$ has been defined in \eqref{bar-D} and where $\Lambda\equiv\{\lambda _{1},\ldots,\lambda _{\mathsf{N}}\}$ is the set of  zeros of $Q(\lambda )$.
\end{theorem}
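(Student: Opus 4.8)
The plan is to transport the SOV characterization of the antiperiodic dynamical 6-vertex model of \cite{LevNT15}, reformulated there through the inhomogeneous $T$-$Q$ equation, to the quasi-periodic 8-vertex model via the isomorphism $\mathbf{S}^{(0)}\mathbf{P}^{(0)}$ established in Corollary~\ref{cor-rel-T} and Theorem~\ref{th-relation-spectre}. First I would recall from \cite{LevNT15} that, on the dynamical-spin space $\bar{\mathbb{D}}_{\mathsf{(6VD)},\mathsf{N}}^{(0)}$, the SOV eigenvalue $\bar{\mathsf{t}}(\lambda)$ is in bijection with the pairs $(\bar{\mathsf{t}},Q)$ solving the inhomogeneous equation whose 8-vertex analog is \eqref{inhom}--\eqref{F-Q}, with $Q$ of the form \eqref{Q-form} and $\mathsf{M}=\mathsf{N}$, subject to the non-vanishing condition $(Q(\xi_j),Q(\xi_j-\eta))\not=(0,0)$; this is where the choice \eqref{def-f} of $f$ and the explicit $F_{\mu,Q}^{(\beta)}$ are needed so that every term carries the same quasi-periodicity. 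Since $\mathsf{t}(\lambda)=(-1)^\mathsf{x}i^\mathsf{xy}\,\bar{\mathsf{t}}(\lambda)$ by Theorem~\ref{th-relation-spectre}, and since the prefactors $(-1)^\mathsf{y}(-i)^\mathsf{xy}$, $(-1)^\mathsf{x}i^\mathsf{xy}$ appearing in \eqref{inhom} are exactly those that turn the antiperiodic SOS $T$-$Q$ equation of \cite{LevNT15} into the 8-vertex one (compare \eqref{hom-eq}), the discrete characterization \eqref{dis-sys} for $\bar{\mathsf{t}}$ maps term-by-term onto \eqref{eq-quadr-8V} for $\mathsf{t}$; so the equivalence ``$\mathsf{t}\in\Sigma_{\mathsf{(x,y)}}^{\mathsf{(8V)}}$ iff there is a $Q$ of the prescribed form with \eqref{inhom}'' follows directly from the corresponding statement in \cite{LevNT15} together with Theorem~\ref{th-T8V-diag}.

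Next I would handle the eigenstates. The plan is to start from the SOV eigenstate formulas \eqref{eigenR-6VD}--\eqref{t-q-6VD} and their 8-vertex images \eqref{eigenR-XYZ}--\eqref{t-q-XYZ}, and show that, when the $Q$-function has the polynomial form \eqref{Q-form}, the separated sum collapses into the multiple action $\prod_{a=1}^\mathsf{N}\bar D(\lambda_a)$ on the pseudo-vacuum \eqref{refT-r-Q}. The mechanism is the argument of \cite{DerKM03}: the ratio $\mathsf{q}_{\mathsf{t},a}^{(1)}/\mathsf{q}_{\mathsf{t},a}^{(0)}$ in \eqref{t-q-XYZ}, dressed by the factor $f(\xi_a)$ built into $\ket{\underline{\Omega}_f}$, is precisely $Q(\xi_a^{(h_a)})$ up to an $h_a$-independent normalization, because \eqref{inhom} evaluated at $\xi_a$ and at $\xi_a-\eta$ (where the inhomogeneous term $F$ vanishes and \eqref{annih-1}--\eqref{annih-2}-type cancellations apply) gives $\mathsf{t}(\xi_a)Q(\xi_a)=(-1)^\mathsf{y}(-i)^\mathsf{xy}f(\xi_a)\mathsc{a}(\xi_a)Q(\xi_a-\eta)$ and its companion. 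Since $\bar D(\lambda_a)$ acts diagonally on $\ket{\underline{\mathbf{h}}}$ by $\prod_n\theta(\lambda_a-\xi_n^{(h_n)})$ (formula \eqref{bar-D}), the product $\prod_a\bar D(\lambda_a)\ket{\underline{\Omega}_f}$ reproduces, coefficient by coefficient in the basis $\mathbf{S}^{(0)}\ket{\underline{\mathbf{h}}}$, exactly the factor $\prod_a Q(\xi_a^{(h_a)})$ times the $\ket{\underline{\Omega}_f}$-coefficients, i.e. \eqref{eigenR-XYZ} with $\mathsf{q}_{\mathsf{t},a}^{(h_a)}$ replaced by the appropriately normalized $Q(\xi_a^{(h_a)})$; this identifies $\ket{\Psi_\Lambda}$ with $\ket{\Psi_\mathsf{t}^\mathsf{(8V)}}$. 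The left-eigenstate statement is obtained in the same way from \eqref{eigenL-XYZ} and \eqref{refT-l-Q}, using that $\bar D(\lambda)$ is also diagonal in the dual SOV basis $\bra{\underline{\mathbf{h}}}[\mathbf{S}^{(0)}]^{-1}$.

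The completeness claim — that \emph{every} eigenvalue is captured — is inherited from \cite{LevNT15}: the minimal choice \eqref{def-f}, \eqref{F-Q} was shown there to make the inhomogeneous $T$-$Q$ description of the antiperiodic dynamical 6-vertex spectrum complete for all $\mathsf{N}$, and the isomorphism of Theorem~\ref{th-relation-spectre} transfers completeness to $\Sigma_{\mathsf{(x,y)}}^{\mathsf{(8V)}}$ without loss. The main obstacle I anticipate is purely bookkeeping rather than conceptual: tracking the exact constellation of sign and phase factors $(-1)^\mathsf{x}$, $(-1)^\mathsf{y}$, $i^\mathsf{xy}$, the exponentials $e^{i\mathsf{y}\eta h_a}$ and $e^{-i\mathsf{y}\lambda}$ hidden in $f_\mu^{(\beta)}$, and the normalization constants relating $\mathsf{q}_{\mathsf{t},a}^{(h_a)}$ to $Q(\xi_a^{(h_a)})$, so that the dressed pseudo-vacuum \eqref{refT-r-Q} matches \eqref{eigenR-XYZ} on the nose; and verifying that the genericity conditions on $\mu,\beta$ stated after \eqref{def-f} indeed guarantee $F_{\mu,Q}^{(\beta)}$ is well-defined (no spurious poles) and vanishes at all $\xi_j-\eta h_j$, which is exactly what is required for the evaluation-at-inhomogeneities step above to go through. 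Once these are checked against the corresponding lemmas of \cite{LevNT15}, the theorem follows.
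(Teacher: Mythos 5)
Your proposal is correct and follows essentially the same route as the paper: the paper gives no detailed proof of this theorem, stating only that it follows from the inhomogeneous $T$-$Q$ characterization of the antiperiodic dynamical 6-vertex model in \cite{LevNT15} combined with the vertex-IRF isomorphism $\mathbf{S}^{(0)}\mathbf{P}^{(0)}$ of Theorem~\ref{th-relation-spectre}, which is precisely the transfer argument you carry out (your only slight imprecision is that it is the full term $\mathsc{a}(\lambda)\,\mathsc{d}(\lambda)\,F(\lambda)$, not $F$ alone, that vanishes at $\xi_a$ and $\xi_a-\eta$, but this does not affect the argument).
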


Theorem~\ref{th-B1} applies only to twisted 8-vertex transfer matrices, i.e. the case of $\mathsf{N}$ odd and periodic chain is not described above. However, this can be done just using Theorem~\ref{th-sp-8V-per} or Theorem~\ref{th-T8V-per} and the result of Appendix~B of \cite{LevNT15} (which applies also to the $\mathsf{x}=\mathsf{y}=0$ case). 
To this aim, we define two different pseudo-vacuum states $\ket{\underline{\Omega} _{f}^\epsilon }$ (or $\bra{ \underline{\Omega}_{f}^\epsilon}$) for $\epsilon=+,-$ as
\begin{align}
& \ket{\underline{\Omega} _{f}^\epsilon } 
=\sum_{\mathbf{h}\in \{0,1\}^{\mathsf{N}}}\prod_{a=1}^{\mathsf{N}}
\left( \frac{\epsilon\,\mathsc{a}(\xi_{a})}{\mathsc{d}(\xi _{a}-\eta )}\,f(\xi _{a})\right)^{\!h_{a}}
\det_{\mathsf{N}}\big[\Theta ^{(0,\mathbf{h})}\big]\,\mathbf{S}^{(+)}\ket{\underline{\mathbf{h}}} , 
 \label{ref-r-eps} \displaybreak[0]\\
& \bra{ \underline{\Omega}_{f}^\epsilon}
=\sum_{\mathbf{h}\in \{0,1\}^{\mathsf{N}}}\prod_{a=1}^{\mathsf{N}}\Big(\epsilon\,f(\xi _{a})\Big)^{h_{a}}
\det_{\mathsf{N}}\big[\Theta ^{(0,\mathbf{h})}\big]\,\bra{\underline{\mathbf{h}}}\big[\mathbf{S}^{(+)}\big]^{-1}.
\label{ref-l-eps}
\end{align}

\begin{theorem}
\label{th-B2}
Let us suppose that the inhomogeneity parameters $\xi _{1},\ldots ,\xi _{\mathsf{N}}$ satisfy \eqref{cond-inh0}-\eqref{cond-inh} and that $\eta \in \mathbb{C}\setminus \mathbb{R}$.
Then, the spectrum $\Sigma_{\mathsf{(0,0)}}^\mathsf{(8V)}$ of the periodic 8-vertex transfer matrix $\mathsf{T}^{\mathsf{(8V)}}_\mathsf{(0,0)}(\lambda )$ \eqref{transfer} for $\mathsf{N}$ odd is given by the set of entire functions $\mathsf{t}(\lambda)$ satisfying the condition \eqref{cond-+} and for which there exists a function $Q(\lambda )$ of the form \eqref{Q-form} with $\mathsf{M}=\mathsf{N}$ such that $(Q(\xi _{j}),Q(\xi _{j}-\eta ))\not=(0,0)$, $1\leq j\leq \mathsf{N}$, and such that $\mathsf{t}(\lambda )$ and $Q(\lambda )$ satisfy the
inhomogeneous functional equation \eqref{inhom} with \eqref{def-f} and \eqref{F-Q}.
A basis of the 
eigenspace associated with the eigenvalue $\mathsf{t}(\lambda ) \in \Sigma_{\mathsf{(0,0)}}^\mathsf{(8V)}$ is then provided by the two vectors  
%
\begin{equation}
   \ket{\Psi_\Lambda^\epsilon} =\prod_{a=1}^{\mathsf{N}}\bar{D}(\lambda _{a}) \ket{\underline{\Omega}_{f}^\epsilon } ,
   \qquad 
   \bra{ \Psi_\Lambda^\epsilon}=\bra{ \underline{\Omega}_{f}^\epsilon}\prod_{a=1}^{\mathsf{N}}\bar{D}(\lambda _{a}),
   \qquad \epsilon=+,-,
\end{equation}
where $\bar{D}(\lambda _{a})$ has been defined in \eqref{bar-D} and where $\Lambda\equiv\{\lambda _{1},\ldots,\lambda _{\mathsf{N}}\}$ is the set of  zeros of $Q(\lambda )$.
\end{theorem}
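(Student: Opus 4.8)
The plan is to deduce Theorem~\ref{th-B2} from the inhomogeneous $T$-$Q$ characterization of the antiperiodic dynamical $6$-vertex transfer matrix obtained in Appendix~B of \cite{LevNT15} --- which, as recalled in the excerpt, holds verbatim at $\mathsf{x}=\mathsf{y}=0$ and for any parity of $\mathsf{N}$ --- by transporting it through the two invertible vertex-IRF transformations $\mathbf{S}^{(\pm)}$ of Section~\ref{sec-per}. Most of the conceptual work is already in place: Theorem~\ref{th-sp-8V-per} together with Lemma~\ref{lem-sp-6VD} identify the periodic $8$-vertex spectrum for $\mathsf{N}$ odd with the ``$+$'' half $\Sigma^\mathsf{(6VD)}_+$ of the antiperiodic $6$-vertex spectrum, and the Lemma establishing \eqref{Invertible-pseudo-similarities} provides the invertible intertwiners $\mathbf{S}^{(\pm)}$ needed to move eigenvectors between the two spaces of states. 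It therefore remains to translate the known $6$-vertex $T$-$Q$ description and to repackage the resulting eigenstates in the ABA-type form stated in the theorem.

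First I would settle the spectral statement. By Theorem~\ref{th-sp-8V-per} and the definition \eqref{Sigma+}, an entire function $\mathsf{t}(\lambda)$ satisfying \eqref{periodt-1}--\eqref{periodt-2} belongs to $\Sigma_{\mathsf{(0,0)}}^\mathsf{(8V)}$ if and only if $\mathsf{t}(\lambda)\in\Sigma^\mathsf{(6VD)}$ (the spectrum of $\overline{\mathcal{T}}(\lambda)$ in the $\mathsf{x}=\mathsf{y}=0$ case) and $\mathsf{t}(\lambda)$ obeys the global constraint \eqref{cond-+}. Then I would invoke the $6$-vertex analogue of Theorem~\ref{th-B1}, i.e. the result of Appendix~B of \cite{LevNT15} specialized to $\mathsf{x}=\mathsf{y}=0$ (this is where the hypothesis $\eta\in\mathbb{C}\setminus\mathbb{R}$ enters): $\mathsf{t}(\lambda)\in\Sigma^\mathsf{(6VD)}$ if and only if there is a function $Q(\lambda)$ of the elliptic-polynomial form \eqref{Q-form} with $\mathsf{M}=\mathsf{N}$, subject to $(Q(\xi_j),Q(\xi_j-\eta))\neq(0,0)$ for $1\le j\le\mathsf{N}$, such that $\mathsf{t}(\lambda)$ and $Q(\lambda)$ solve the inhomogeneous functional equation \eqref{inhom} with $f$ and $F$ given by \eqref{def-f} and \eqref{F-Q} taken at $\mathsf{x}=\mathsf{y}=0$. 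Concatenating these two equivalences and keeping the constraint \eqref{cond-+} yields precisely the claimed characterization of $\Sigma_{\mathsf{(0,0)}}^\mathsf{(8V)}$, the double degeneracy being the one already proven in Theorem~\ref{th-sp-8V-per}.

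For the eigenvectors I would proceed from the intertwining relations \eqref{Invertible-pseudo-similarities}, $\mathsf{T}^{\mathsf{(8V)}}_\mathsf{(0,0)}(\lambda)\,\mathbf{S}^{(\pm)}=\mathbf{S}^{(\mp)}\,\overline{\mathsf{T}}^\mathsf{(6VD)}(\lambda)$, together with the invertibility of $\mathbf{S}^{(\pm)}$. By the $6$-vertex version of Theorem~\ref{th-B1}, the right $\overline{\mathcal{T}}$-eigenvector attached to $\mathsf{t}$ and to the associated $Q$ (with zero set $\Lambda=\{\lambda_1,\ldots,\lambda_\mathsf{N}\}$) is, after transport to $\mathbb{V}_\mathsf{N}$ by $\mathbf{P}^{(0)}$, the multiple action $\prod_{a=1}^\mathsf{N}\bar{D}(\lambda_a)$ on a pseudo-vacuum built from the function $\xi_a\mapsto f(\xi_a)$. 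Since $\bar{D}(\lambda)$ acts diagonally in the SOV basis \eqref{states-bar} through \eqref{bar-D}, the product $\prod_a\bar{D}(\lambda_a)$ multiplies the $\mathbf{h}$-component of any separate state by $\prod_n Q(\xi_n-\eta h_n)$; combining this with the coefficients $f(\xi_a)^{h_a}$ appearing in $\ket{\underline{\Omega}_f^\epsilon}$ of \eqref{ref-r-eps}--\eqref{ref-l-eps}, and then using \eqref{inhom} evaluated at $\lambda=\xi_a$ and $\lambda=\xi_a-\eta$ --- where the inhomogeneous term vanishes by design --- together with the annihilation identities \eqref{annih-1}--\eqref{annih-2}, one identifies $f(\xi_a)^{h_a}\,Q(\xi_a-\eta h_a)$, up to an $\mathbf{h}$-independent constant, with the SOV coefficients $\mathsf{q}^{(h_a)}_{\mathsf{t},a}$ fixed by \eqref{t-q-XYZbis}. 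This shows that $\prod_a\bar{D}(\lambda_a)\ket{\underline{\Omega}_f^\epsilon}$ is proportional to the eigenvector $\ket{\Psi_\mathsf{t}^\epsilon}$ of \eqref{eigen+XYZ}, and similarly on the left with \eqref{eigen-XYZ}; letting $\epsilon$ run over $\{+,-\}$ reproduces the two linearly independent eigenvectors, which by Theorem~\ref{th-T8V-per} form a basis of the corresponding eigenspace.

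The main obstacle is not conceptual but lies in the careful bookkeeping of the two variants $\mathbf{S}^{(\pm)}$ of the vertex-IRF transformation. One must check that applying the single operator $\mathbf{S}^{(+)}$ (the one used in \eqref{ref-r-eps}--\eqref{ref-l-eps}) to the transported $6$-vertex eigenvectors of both $\Sigma^\mathsf{(6VD)}_+$ and $\Sigma^\mathsf{(6VD)}_-$ reproduces, once the sign factor $\epsilon^{h_a}$ is inserted in the coefficients, precisely the pair $\ket{\Psi_\mathsf{t}^\pm}$ of \eqref{psi+}--\eqref{psi-}, and that the $\mathsf{q}$-ratio \eqref{t-q-XYZbis} of the periodic $8$-vertex problem matches the one extracted from \eqref{inhom} in the $\Sigma_-$ sector after the sign flip $\bar{\mathsf{t}}_-=-\mathsf{t}$. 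A secondary point, already settled in the excerpt, is that the completeness of the inhomogeneous $T$-$Q$ description for the antiperiodic dynamical $6$-vertex model genuinely survives at $\mathsf{x}=\mathsf{y}=0$ for odd $\mathsf{N}$, in contrast to the homogeneous ansatz of Proposition~\ref{Cor-Eigen-Bethe}; this is exactly the input from Appendix~B of \cite{LevNT15} that makes Theorem~\ref{th-B2}, unlike its homogeneous counterpart, unconditional.
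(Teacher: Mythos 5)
Your proposal is correct and follows exactly the route the paper intends: the text preceding Theorem~\ref{th-B2} states that it is obtained by combining Theorem~\ref{th-sp-8V-per}/Theorem~\ref{th-T8V-per} with the inhomogeneous $T$-$Q$ result of Appendix~B of \cite{LevNT15} specialized to $\mathsf{x}=\mathsf{y}=0$, which is precisely what you do. Your additional bookkeeping --- the identification $\mathsf{q}^{(h_a)}_{\mathsf{t},a}\propto f(\xi_a)^{h_a}Q(\xi_a-\eta h_a)$ from \eqref{inhom} evaluated at $\lambda=\xi_a$, the sign flip $\epsilon^{h_a}$ coming from $\bar{\mathsf{t}}_-=-\mathsf{t}$ in \eqref{t-q-6VD}, and the fact that $\mathbf{S}^{(+)}$ sends the $\Sigma_\pm^\mathsf{(6VD)}$ eigenvectors to $\ket{\Psi_\mathsf{t}^\pm}$ --- correctly fills in the details left implicit by the paper.
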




\providecommand{\bysame}{\leavevmode\hbox to3em{\hrulefill}\thinspace}
\providecommand{\MR}{\relax\ifhmode\unskip\space\fi MR }
\providecommand{\MRhref}[2]{%
  \href{http://www.ams.org/mathscinet-getitem?mr=#1}{#2}
}
\providecommand{\href}[2]{#2}

\end{document}